\documentclass[11pt]{article}
\usepackage[utf8]{inputenc}
\usepackage[T1]{fontenc}
\usepackage[margin=1in]{geometry}
\usepackage{amsmath,amsthm,amssymb,thmtools,thm-restate,booktabs,color,doi,graphicx,latexsym,url,xcolor,xspace}
\usepackage[numbers,sort&compress]{natbib}
\usepackage{microtype,hyperref}
\usepackage[inline]{enumitem}
\setlist[itemize]{label=--}
\setlist[enumerate]{label=(\arabic*),labelindent=\parindent,leftmargin=*}

\usepackage{sectsty}
\allsectionsfont{\boldmath}

\definecolor{citecolor}{HTML}{0000C0}
\definecolor{urlcolor}{HTML}{000080}

\newtheorem{theorem}{Theorem}
\newtheorem{lemma}[theorem]{Lemma}
\newtheorem{corollary}[theorem]{Corollary}

\theoremstyle{remark}
\newtheorem{remark}[theorem]{Remark}

\newcommand{\namedref}[2]{\hyperref[#2]{#1~\ref*{#2}}}
\newcommand{\sectionref}[1]{\namedref{Section}{#1}}
\newcommand{\figureref}[1]{\namedref{Figure}{#1}}

\newcommand{\equationref}[1]{\hyperref[#1]{Eq~(\ref*{#1})}}
\newcommand{\theoremref}[1]{\hyperref[#1]{Theorem~\ref*{#1}}}
\newcommand{\lemmaref}[1]{\hyperref[#1]{Lemma~\ref*{#1}}}
\newcommand{\noteref}[1]{\hyperref[#1]{note~\ref*{#1}}}
\newcommand{\appendixref}[1]{\hyperref[#1]{Appendix~\ref*{#1}}}
\newcommand{\corollaryref}[1]{\hyperref[#1]{Corollary~\ref*{#1}}}

\renewcommand{\vec}[1]{\mathbf{#1}}

\DeclareMathOperator*{\E}{E}
\DeclareMathOperator*{\poly}{poly}
\DeclareMathOperator*{\polylog}{polylog}
\DeclareMathOperator*{\diam}{diam}
\DeclareMathOperator*{\outdeg}{out}
\DeclareMathOperator*{\indeg}{in}

\newcommand{\ppmodel}{\mathsf{PP}}

\newcommand{\taumix}{\tau_\textrm{mix}}

\newcommand{\id}{\operatorname{id}}
\newcommand{\tvnorm}[1]{\left\| #1 \right\|_{\operatorname{TV}}}

\newenvironment{myabstract}
{\list{}{\listparindent 1.5em%
		\itemindent    \listparindent
		\leftmargin    1cm
		\rightmargin   1cm
		\parsep        0pt}%
	\item\relax}
{\endlist}

\newenvironment{mycover}
{\list{}{\listparindent 0pt
		\itemindent    \listparindent
		\leftmargin    1cm
		\rightmargin   1cm
		\parsep        0pt}%
	\raggedright
	\item\relax}
{\endlist}

\newcommand{\myemail}[1]{\,$\cdot$\, {\small #1}}
\newcommand{\myaff}[1]{\,$\cdot$\, {\small #1}\par\medskip}

\hypersetup{
    colorlinks=true,
    linkcolor=black,
    citecolor=citecolor,
    filecolor=black,
    urlcolor=urlcolor,
    pdftitle={Fast Graphical Population Protocols},
    pdfauthor={Dan Alistarh, Rati Gelashvili and Joel Rybicki}
}

\title{Fast Graphical Population Protocols}

\begin{document}

\begin{mycover}
	{\huge\bfseries\boldmath Fast Graphical Population Protocols\par}

	\bigskip
	\bigskip
  \bigskip
  \textbf{Dan Alistarh}
    \myemail{dan.alistarh@ist.ac.at}
    \myaff{IST Austria}
    
    \textbf{Rati Gelashvili}
    \myemail{gelash@cs.toronto.edu}
    \myaff{University of Toronto}

    \textbf{Joel Rybicki}
    \myemail{joel.rybicki@ist.ac.at}
    \myaff{IST Austria}
    \bigskip
\end{mycover}

\medskip
\begin{myabstract}
  \noindent\textbf{Abstract.}
      Let $G$ be a graph on $n$ nodes.  In the stochastic population protocol model, a collection of $n$ indistinguishable, resource-limited nodes collectively solve tasks via pairwise interactions. In each interaction, two randomly chosen neighbors first read each other's states, and then update their local states. 
    A rich line of research has established tight upper and lower bounds on the complexity of fundamental tasks, such as majority and leader election, in this model, when $G$ is a \emph{clique}. Specifically, in the clique, these tasks can be solved \emph{fast}, i.e., in $n \operatorname{polylog} n$ pairwise interactions, with high probability, using at most $\operatorname{polylog} n$ states per node.

  In this work, we consider the more general setting where $G$ is an arbitrary graph, and present a technique for simulating protocols designed for fully-connected networks in any connected regular graph. Our main result is a simulation that is \emph{efficient} on many interesting graph families: roughly, the simulation overhead is polylogarithmic in the number of nodes, and quadratic in the conductance of the graph. 
  As a sample application, we show that, in any regular graph with conductance $\varphi$, both leader election and exact majority can be solved in $\varphi^{-2}  \cdot n \operatorname{polylog} n$ pairwise interactions, with high probability, using at most $\varphi^{-2} \cdot \operatorname{polylog} n$ states per node. 
  This shows that there are fast and space-efficient population protocols for leader election and exact majority on graphs with good expansion properties. 
  We believe our results will prove generally useful, as they allow efficient technology transfer between the well-mixed (clique) case, and the under-explored spatial setting. 
\end{myabstract}

\thispagestyle{empty}
\setcounter{page}{0}
\newpage

\section{Introduction}

Since the early days of computer science, there has been significant interest in developing an algorithmic theory of molecular and biological systems~\citep{turing1990chemical}. In distributed computing, \emph{population protocols}~\citep{angluin2006computation} have become a popular model for investigating the collective computational power of large collections of communication-bounded agents
with limited computational capabilities. This model consists of $n$ identical agents, seen as finite state machines, and computation proceeds via pairwise interactions of the agents, which trigger local state transitions. The sequence of interactions is provided by a scheduler, which picks pairs of agents to interact. 
Upon every interaction, the selected agents observe each other's states, and then update their local states. The goal is to have the system reach a configuration satisfying a given predicate, while minimising the number of interactions (time complexity) and the number of states per node (space complexity) required by the protocol.

Early work on population protocols focused on the computational power of the model, i.e., the class of predicates which can computed by population protocols under various interaction graphs~\citep{angluin2006computation,AAER07}.
More recently, the focus has shifted to understanding complexity thresholds, often in the form of fundamental  complexity trade-offs between time and space complexity, e.g.~\citep{angluin2008fast-computation,alistarh2015-fast,doty2018stable,gasieniec2018fast, alistarh2018space-optimal,BEFKKR18,berenbrink2020optimal,gasieniec2020time}; for recent surveys please see~\citep{elsaesser-survey, alistarh-survey}.

This line of work almost exclusively focuses on the \emph{uniform} stochastic scheduler, where each interaction pair is chosen uniformly at random \emph{among all pairs} of agents in the population, and the time complexity of a protocol is measured by the number of interactions needed to solve a task.
This is analogous to having a large \emph{well-mixed} solution of interacting particles, an assumption often used for modelling chemical reactions. 
However, many natural systems exhibit spatial structure and this structure can significantly influence the system dynamics. 

Indeed, there is a separation in terms of computational power for population protocols in the clique versus other interaction graphs: connected interaction graphs can simulate adversarial interactions on the clique graph by shuffling the states of the nodes~\citep{angluin2006computation} and population protocols on some interaction graphs can compute a strictly larger set of predicates than protocols on the clique; see e.g.~\citep{aspnes2009introduction} for a survey of computability results.

By comparison, surprisingly little is known about the \emph{complexity} of basic tasks in general interaction graphs under the stochastic scheduler. So far, only a handful of protocols have been analysed on general graphs. Existing analyses tend to be complex, and specialised to specific algorithms on limited graph classes~\citep{draief2012-convergence,cooper2016-fast,MNRS14,mertzios2017determining,BFKMW16}. This is natural:  given the intricate dependencies which arise due to the underlying graph structure, the design and analysis of protocols in the spatial setting is  understood to be challenging.

\subsection{Contributions}
In this work,
we provide a general approach showing that standard problems in population protocols can be solved \emph{efficiently} under \emph{graphical} stochastic schedulers, by leveraging solutions designed for complete graphs.  
Our results are as follows:
\begin{enumerate}

\item We give a general framework for simulating a large class of \emph{synchronous} protocols designed for \emph{fully-connected networks}, in the graphical stochastic population protocol model (see \figureref{fig:pp-model}). Thus, the user can design efficient (and simple to analyse) synchronous algorithms on a clique model, and transport the analysis automatically to the population protocol model on a large class of interaction graphs. For instance, on any  $d$-regular graph with edge expansion $\beta > 0$, the resulting overhead in parallel time and state complexity is in the order of $(d/\beta)^2 \cdot \polylog n$. 

 \item As concrete applications, we show that for any $d$-regular graph with edge expansion $\beta > 0$, there exist protocols for leader election and exact majority that stabilise both in expectation and with high probability\footnote{
The phrase ``with high probability'' (w.h.p.) means that we can choose constants so that the probability that the protocol fails to stabilise is at most $1/n^\lambda$ for any given constant $\lambda > 0$.} in $(d/\beta)^2 \cdot \polylog n$ parallel time, using $(d/\beta)^2 \cdot \polylog n$ states. 

 \item To complement the results following from the simulation, we also show that, on any graph $G$ with diameter $\diam(G)$ and $m$ edges, leader election can be solved both in expectation and with high probability in $O(\diam(G) \cdot m n^2 \log n)$ parallel time, using a constant-state protocol. This result provides the first running time analysis of the protocol of~\cite{beauquier2013self}.
\end{enumerate}

\begin{figure}[t]
  \centering
  \includegraphics[page=3,width=\textwidth]{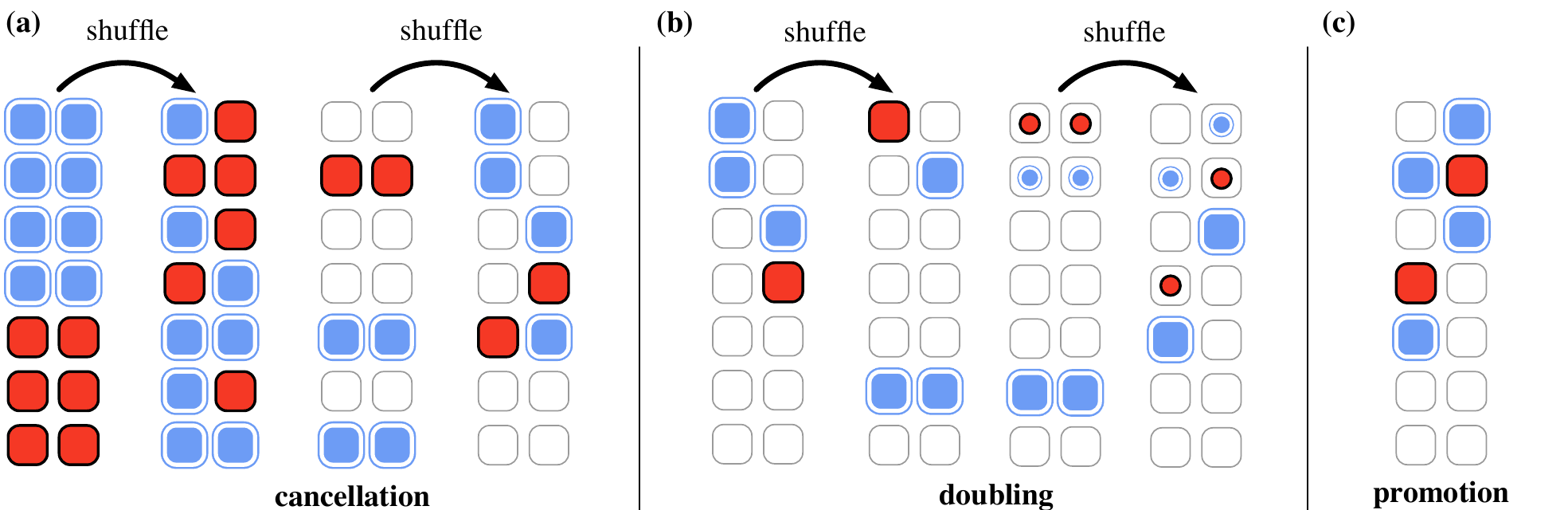}
  \caption{The graphical population protocol model. In each step, a random edge $\{u,v\}$ is selected and the nodes $u$ and $v$ interact (blue nodes). Examples of graph classes covered by our construction: (a) regular high-girth expanders, (b) bipartite complete graphs, (c) toroidal grids. \label{fig:pp-model}}
\end{figure}

\subsection{Technical overview}

Our reduction framework combines several techniques from different areas, and can be distilled down to the following ingredients.

We start by defining a simple \emph{synchronous, fully-connected} model of communication for the $n$ nodes, called the \emph{$k$-token shuffling model}.
This is the model in which the algorithm should be designed and analysed, and is similar, and in some ways simpler, relative to the standard population model. 
Specifically, nodes proceed in \emph{synchronous} rounds, in which every node $v$ first generates  $k$ tokens based on its current state. Tokens are then shuffled uniformly at random among the nodes. 
At the end of a round, every node $v$ updates its local state based on its current state, and the tokens it received in the round. \figureref{fig:token} illustrates the model.
This simple model is quite powerful, as it can simulate both \emph{pairwise} and \emph{one-way} interactions between all sets of agents, for well-chosen settings of the parameter~$k$.     

Our key technical result is that any algorithm specified in this round-synchronous $k$-token shuffling model can be \emph{efficiently} simulated in the graphical population model. 
Although intuitive, formally proving this result, and in particular obtaining bounds on the efficiency of the simulation, is non-trivial. 
First, to show that simulating \emph{a single round} of the $k$-token shuffling model can be done efficiently, we introduce new type of \emph{card shuffling process}~\cite{diaconis1993comparison, wilson2004mixing,caputo2010aldous,jonasson2012interchange}, which we call the $k$-stack interchange process, and analyse its mixing time by linking it to random walks on the symmetric group. 

Second, to allow correct and efficient asynchronous simulation of the synchronous token shuffling model, we introduce two new gadgets: (1)~a \emph{graphical} version of \emph{decentralised phase clocks}~\cite{alistarh2018space-optimal, gasieniec2018almost, gasieniec2018fast}, combined with (2)~an \emph{asynchronous} token shuffling protocol, which simulates the $k$-token interchange process in a graphical population protocol. 
The latter ingredient is our main technical result, as it requires both efficiently combining the above components, and carefully bounding the probability bias induced by simulating a synchronous model under asynchronous pairwise-random interactions. 

Finally, we instantiate this framework to solve exact majority and leader election in the graphical setting. We provide simple token-shuffling protocols for these problems, as well as backup protocols to ensure their correctness in all executions.

\begin{figure}[t]
  \centering
  \includegraphics[page=2,width=0.9\textwidth]{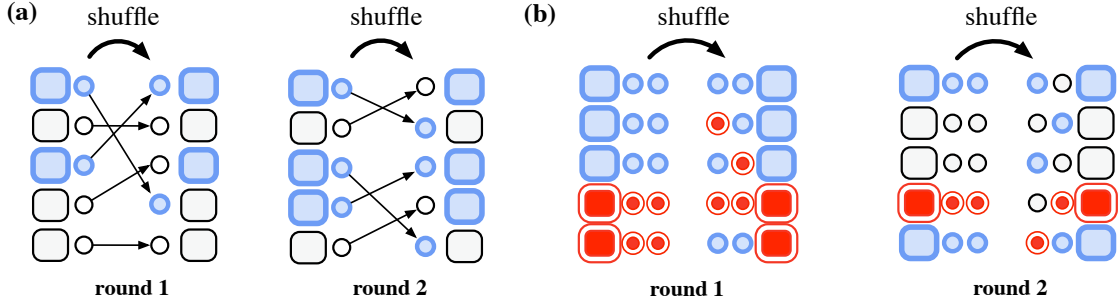}
  \caption{The synchronous $k$-token shuffling model with 5 nodes for $k=1$ and $k=2$. Rectangles are nodes and the small circles are tokens. In each round, nodes generate $k$ tokens based on their current state. Then all $nk$ tokens are shuffled randomly. After this, nodes update their state based on the vector of $k$ tokens they hold. (a) An execution of a protocol in the 1-token shuffling model. The arrows between tokens represent the random permutation used to shuffle tokens. (b) An execution of a protocol for $k=2$. Each node sends and receives two tokens. \label{fig:token}}
\end{figure}

\subsection{Implications}
Our results imply new and improved upper bounds on the time and state complexity of  majority and leader election for a wide range of graph families. In some cases, they improve upon the best known upper bounds for these problems. Please see Table~\ref{table:comparison} for a systematic comparison. Specifically, our results show that:
\begin{itemize}
\item In \emph{sparse} graphs with good expansion properties, such as constant-degree graphs with constant edge expansion (\figureref{fig:pp-model}a), our simulation has polylogarithmic time and state complexity overhead, relative to clique-based algorithms. Thus, good expanders admit fast protocols using polylogarithmic states, despite being sparser than the~clique.

\item In \emph{dense} graphs, we obtain similar bounds whenever $d/\beta \in \polylog n$ holds. This is the case for instance in $d$-dimensional hypercubes with $n=2^d$ nodes,
  but also in highly-dense clique-like graphs, such as regular complete multipartite graphs (\figureref{fig:pp-model}b), where the degree and expansion are both $\Theta(n)$.

\item In $D$-dimensional toroidal grids, we get algorithms with $n^{2/D} \polylog n$ parallel time and state complexity. These graphs include cycles (1-dimensional toroidal grids), two-dimensional grids (\figureref{fig:pp-model}c), three-dimensional lattices, and so on.
\end{itemize}
While our protocols guarantee \emph{fast} stabilisation in regular graphs with high expansion, they will stabilise in polynomial expected time in \emph{any connected graph}. The results can be carried over to certain classes of \emph{non-regular} graphs provided that they are not highly irregular and have high expansion; we discuss this in Section~\ref{sec:conclusions}, and provide examples in Appendix~\ref{app:non-regular}.

\begin{table}[t]
  \centering
  \small
  \begin{tabular}{@{}llllll@{}}
\toprule
Graphs & Task & States  & Parallel time  & Ref. & Note \\ \midrule
cliques           & EM     & 4          & $O(n \log n)$ &     \cite{draief2012-convergence} &  $\Omega(n)$ parallel time necessary~\cite{alistarh2017time}.  \\ 
& EM     & $O(\log n)$ & $\Theta(\log n)$        &  \cite{doty2020majority} & Optimal for certain protocols~\cite{alistarh2018space-optimal}. \\
     & LE      & $2$          & $\Theta(n)$ &   \cite{doty2018stable} & Optimal $O(1)$-state protocol.  \\
     & LE      &  $\Theta(\log \log n)$ & $\Theta(\log n)$        & \cite{berenbrink2020optimal}  &  Lower bounds in~\cite{alistarh2017time,sudo2020leader}. \\
\midrule
connected & EM     &   $4$            &     $\poly(n)$         &  \cite{draief2012-convergence,BFKMW16} & Various bounds (*) \\
  & LE     &   $6$              &   $O(\diam(G) \cdot  m n^2 \log n)$              &  {\bf new}  & Complexity analysis of \cite{beauquier2013self}. \\
           \midrule 

$d$-regular       & EM &   $(d/\beta)^2 \cdot \polylog n$     & $(d/\beta)^2 \cdot \polylog n$                             &    {\bf new} &  Also stabilises in non-reg.\ graphs. \\
       & LE &   $(d/\beta)^2 \cdot \polylog n$     & $(d/\beta)^2 \cdot \polylog n$                             &    {\bf new} & Also stabilises in non-reg.\ graphs. \\           
           
       \bottomrule
\end{tabular}
\caption{Protocols for exact majority (EM) and leader election (LE) for different graph classes. The state complexity is the number of states used by the protocol. The parallel time column gives the expected parallel time (expected number of steps divided by $n$) to stabilise. (*) In~\cite{draief2012-convergence}, the running time of the protocol is bounded by the initial discrepancy in the inputs and the spectral properties of the contact rate matrix; bounds in terms of $n$ are only given for select graph classes (paths, cycles, stars, random graphs and cliques). No sublinear in $n$ bounds on parallel time are given~in~\cite{draief2012-convergence}.}
\label{table:comparison}
\end{table}

It is known that, in the clique setting, constant-state protocols are necessarily slower than protocols with super-constant states~\cite{doty2018stable, alistarh2018space-optimal}. 
Our results suggest the existence of a similar complexity gap in the graphical setting. Specifically, on $d$-regular graphs with good expansion, such that $d/\beta \in \polylog n$, we provide      polylogarithmic-time protocols for both leader election and exact majority. 
This opens a significant complexity gap relative to known constant-state protocols on graphs. 
For instance, the 4-state exact majority protocol for general graphs~\cite{draief2012-convergence} requires $\Omega(n)$ parallel time even in regular graphs with high expansion, if node degrees are $\Theta(n)$. (A simple example is the complete bipartite graph given in \figureref{fig:pp-model}b.) Yet, our protocols guarantee stabilisation in only $\polylog n$ parallel time in both low and high degree graphs, as long as $d/\beta$ is at most $\polylog n$. 

\subsection{Roadmap}

 We overview related work in \sectionref{sec:rw}. \sectionref{sec:prelim} defines the model and notation, while Sections \ref{sec:interchange} to \ref{sec:applications} develop our framework, from shuffling processes, to the simulation, and  applications. \sectionref{sec:constant-le} gives an analysis for a constant-state protocol for leader election that stabilises in polynomial expected time in any connected graph.
We conclude in \sectionref{sec:conclusions} by discussing some open problems.

\section{Related Work}
\label{sec:rw}

\paragraph{Computability for graphical population protocols.} 
A variant of the graphical setting was already considered in the foundational work of Angluin et al.~\cite{angluin2006computation}, which also uses a state shuffling approach. However, the resulting line of work focused on \emph{computational power} in the case where the number of states per node is constant~\cite{angluin2006computation,angluin2006stably, angluin2008self, AAER07, CMNPS11, blondin2018large}. 
A key difference is that we aim to simulate pairwise interactions under the uniform stochastic scheduler, as fast protocols in the clique require that pairwise interactions are uniformly random~\cite{elsaesser-survey,alistarh-survey}. Thus, one of the main technical challenges is to devise an \emph{efficient} shuffling procedure that guarantees that the simulated interactions are (almost) uniform.

In addition, self-stabilising population protocols on graphs have been investigated particularly in the context of leader election~\cite{angluin2008self,beauquier2013self,yokota2020time,chen2019self,chen2020ssle}.
While the problem is not always solvable on all graph families~\cite{angluin2008self}, Chen and Chen~\cite{chen2019self} gave a constant-state protocol for leader election with exponential stabilisation time in directed cycles and 2-dimensional toroidal grids. Later, they gave a protocol for $d$-regular graphs using $O(d^{12})$ states~\cite{chen2020ssle}. 

Beauquier, Blanchard and Burman~\cite{beauquier2013self} noted that without the requirement of self-stabilisation, 
 leader election can be solved on every connected graph by a constant-state protocol. We provide the first running-time upper bounds for this protocol here. 
Please see Table~\ref{table:comparison} for additional references, and bound comparison.  

\paragraph{Complexity in the clique model.}  A parallel line of work has focused on determining the fundamental space-time trade-offs for key tasks, such as majority and leader election, when the interaction graph is a \emph{clique}~\cite{doty2018stable,draief2012-convergence,MNRS14,alistarh2017time,alistarh2018space-optimal, BKKO18,BEFKKR18,berenbrink2020optimal,gasieniec2020time}. In this case, tight or almost-tight  complexity trade-offs are now known for these problems~\cite{berenbrink2020optimal, gasieniec2020time, doty2020majority, alistarh2018space-optimal}.

The vast majority of the work on complexity has focused on the clique case~\cite{elsaesser-survey,alistarh-survey}. Two natural justifications for this choice are that: (1)~the clique is a good approximation for well-mixed solutions, and (2)~the analysis of population protocols can be difficult enough even without additional complications due to graph structure.
Bounds on non-complete graphs have been studied for exact~\cite{draief2012-convergence} and approximate majority~\cite{MNRS14,mertzios2017determining}, with some recent work considering \emph{plurality consensus}~\cite{cooper2013coalescing,cooper2016-fast, BFKMW16} in a related model. 
The recent survey of~\cite{elsaesser-survey} points out that running time on general graphs is poorly understood, and sets this as an open question. We take a first step towards addressing this gap.

\paragraph{Interacting particle systems.}
Another related line of work investigated dynamics of interacting particle systems on graphs, e.g.~\cite{aldous-fill-2014}. However, in this context dynamics are often assumed to be round-synchronous,  
which allows the use of more powerful techniques, related to independent random walks on graphs~\cite{lovasz1993random}. 
Cooper, Els\"asser, Ono and Radzik~\cite{cooper2013coalescing} analysed the coalescence time of independent random walks on a graph in terms of the expansion properties of the graph, where each node initially holds a unique particle, and in each step particles randomly move to another node. Whenever, two particles meet, they coalesce into a single one, which continues its walk. 
We also employ token-based protocols on graphs, but in our case tokens are shuffled between nodes instead of coalescing. 

Token-based processes have also been used to implement efficient, randomised rumour spreading protocols. For example, Berenbrink, Giakkoupis and Kling~\cite{berenbrink2018tight} analysed the cover time of a synchronous coalescing-branching random walk on regular graphs. Similarly to our work, they use conductance to bound the behaviour of this process in regular graphs.
In this work, we use token-based population protocols on graphs, where the tokens are shuffled between nodes during an interaction and the tokens instead of coalescing, may also interact in other ways. 

\paragraph{Plurality consensus on expanders.} In plurality consensus, there are $k>1$ opinions and the task is the agree on opinion supported by the most nodes. Berenbrink, Friedetzky, Kling, Mallmann-Trenn and Wastell~\cite{BFKMW16} present a protocol for the plurality consensus problem in a synchronous pull-based interaction model. Their protocol also circulates tokens, and samples their count periodically (after mixing) to estimate opinion counts, running into the issue that the token movements are correlated. The authors provide a generalisation of a result by Sauerwald and Sun~\cite{SS} in order to show that the joint token distribution is negatively correlated, and therefore the token counting mechanism~concentrates.

In this work, we also employ a token exchange protocol, and encounter non-trivial correlation issues. 
However, we resolve these issues differently: we characterise the distribution of the token interactions using the $k$-stack interchange process, and bound its total variation distance relative to the uniform distribution, showing that the two distributions are indistinguishable in polynomial time with high probability.
More generally, the goal of our construction is different, as we aim to provide a general framework to efficiently simulate pairwise random node interactions.

\paragraph{Shuffling processes.} Our results also connect to the work on card shuffling processes, which have a long and rich history~\cite{diaconis1981generating,aldous1983random,diaconis1993comparison,wilson2004mixing,caputo2010aldous,dieker2010interlacings,jonasson2012interchange,oliveira2013mixing}. While many of these processes are simple to describe, they are often surprisingly challenging to analyse. Here, we focus on key results related to the interchange process, where the cards are placed on the nodes of a graph and shuffling is performed by randomly exchanging cards between adjacent nodes. We note that much of the work has aimed to identify sharp bounds on the mixing time for the interchange process on various graphs.
 
 Diaconis and Shahshahani~\cite{diaconis1981generating} gave sharp bounds of the order $\Theta(n \log n)$ on the mixing time of the random transpositions shuffle, i.e., interchange process on the clique. Aldous~\cite{aldous1983random} established that the mixing time of the interchange process on the path is bounded by $\Omega(n^3)$ and $O(n^3 \log n)$; later Wilson~\cite{wilson2004mixing} showed that the mixing time is in fact $\Theta(n^3 \log n)$. Diaconis and Saloff-Coste~\cite{diaconis1993comparison} developed a powerful technique for upper bounding the mixing time of a random walk on a finite group by comparing it to another walk with known behaviour via certain Dirichlet forms. Our analyses of the $k$-stack interchange process also rely on this comparison technique. 

 A decade later Wilson~\cite{wilson2004mixing} gave a general technique for proving lower bounds for many shuffling processes. In particular, he showed that the mixing time on the two-dimensional $\sqrt{n} \times \sqrt{n}$ grid is $\Theta(n^2 \log n)$ and $\Omega(n \log^2 n)$ on the hypercube. Subsequently, Jonasson~\cite{jonasson2012interchange} gave additional upper and lower bounds on the interchange process on various graphs, including showing that the mixing time on the hypercube and constant-degree expanders is at most $O(n \log^3 n)$ and $O(\rho m n\log n)$ on any $m$-edge graph with radius $\rho$.
For a further exposition to this area, we refer to
\cite{levin2017markov}.

In this work, we introduce and analyse a generalisation of the interchange process, called the $k$-stack interchange process, where each node holds $k > 0$ cards instead of one.

\paragraph{Clique emulation.}
The general idea of clique emulation over a general communication graph is a classic one, and has been used in other stronger models of distributed computing, e.g.~\cite{avin2017distributed, ghaffari2017distributed}. 
For example, Ghaffari, Kuhn and Su~\cite{ghaffari2017distributed} utilise parallel random walks on graphs to come up with an efficient permutation routing scheme for the synchronous CONGEST model, with running time bounded by the mixing time of the random walk on the graph. In contrast, we bound the running times of our asynchronous protocols using the mixing time of the $k$-stack interchange process.

\section{Preliminaries} 
\label{sec:prelim}

\paragraph{Graphs.}
A graph $G = (V,E)$ is $d$-regular if every node $v \in V$ is adjacent to exactly $d$ other nodes. 
The edge boundary of a set $S \subseteq V$ is the set $\partial S \subseteq E$ of edges with exactly one endpoint in~$S$. The \emph{edge expansion} of the graph $G$~is defined as 
\[
\beta = \min \left\{ \frac{|\partial S|}{|S|} : S \subseteq V, |S| \le n/2 \right\}.
\]
If $G$ is regular, its \emph{conductance} is $\beta/d$.
Unless otherwise mentioned, all graphs are assumed to be regular and connected.

\paragraph{Probability distributions.}
Let $E$ be a finite set. We say $\mu \colon E \to [0,1]$ is a probability distribution on $E$ if $\sum_{x \in E} \mu(x) = 1$ holds.
For $A \subseteq E$ we write $\mu(A) = \sum_{x \in A} \mu(x)$.
The \emph{uniform distribution} on $E$ is the distribution $\nu$ defined by $\nu(x) = 1/|E|$. The \emph{support} of $\mu$ is the set $\{ x : \mu(x) > 0 \}$.
The \emph{total variation distance} between distributions $\mu_1$ and $\mu_2$ on $E$~is
\[
\tvnorm{ \mu_1 - \mu_2} = \frac{1}{2} \sum_{x \in E} | \mu_1(x) - \mu_2(x) | = \max_{A \subseteq E} | \mu_1(A) - \mu_2(A)|.
\]
We say that $\mu$ is $\varepsilon$-uniform on $E$ if $\tvnorm{\mu - \nu} \le \varepsilon$.

\paragraph{Permutations and the symmetric group.}
Let $N > 0$ be a positive integer and $[N] = \{0, \ldots, N-1\}$.
A permutation on $[N]$ is a bijection from $[N]$ to $[N]$.
The symmetric group $S_N$ over $[N]$ is the group consisting of the set of all permutations on $[N]$ with function composition as the group operation and identity element $\id$ defined by $\id(i) = i$.
The inverse $x^{-1}$ of an element $x \in S_N$ is the map satisfying $x^{-1} \cdot x = x \cdot x^{-1} = \id$.
A \emph{transposition} $(i~j) \in S_N$ of $i$ and $j$ is the permutation that swaps the elements $i$ and $j$, but leaves other elements in place.
We say that a set $H \subseteq S_N$ generates $S_N$ if every element of $S_N$ can be expressed as a finite product of elements in $H$ and their inverses. We use $\cdot$ and $\circ$ interchangeably to denote function composition.

Let $\mu$ be a symmetric probability distribution on $S_N$, i.e., $\mu(x) = \mu(x^{-1})$. The \emph{random walk on $S_N$} with increment distribution $\mu$ is a discrete time Markov chain with state space $S_N$. In each step, a random element $x$ is sampled according $\mu$ and the chain moves from state $y$ to state $xy$.
Thus, the probability of transitioning from state $x$ to state $yx$ is $\mu(y)$.
The holding probability of the random walk is $\alpha = \mu(\id)$. The following remark summarises some useful properties of such random walks; see e.g.~\cite{levin2017markov} for proofs.

\begin{remark}
  Let $\mu$ be an increment distribution for a random walk on $S_N$. 
  \begin{enumerate}[noitemsep]
    \item The uniform distribution $\nu$ on $S_N$ is a stationary distribution for the random walk.
    \item The random walk is reversible if and only if $\mu$ is symmetric.
    \item The random walk is irreducible if and only if the support of $\mu$ generates $S_N$.
    \item If $\mu(\id) > 0$, then the random walk is aperiodic.
  \end{enumerate}
\end{remark}

\paragraph{Mixing times.}
Let $\nu$ be the uniform distribution on $S_N$ and be $p^{(t)}$ be the probability distribution over states of the chain after $t$ steps. 
Following~\cite{diaconis1993comparison}, we define the $\ell^s$-norm and the normalised $\ell^s$-distance to stationarity for $s > 0$ as:
\[
\left\| \mu \right\|_s = \left( \sum_{x} |\mu(x)|^s \right)^{1/s} \quad \textrm{ and } \quad d_s(t) = |S_N|^{1-1/s} \cdot \| p^{(t)} - \nu \|_s.
\]
The total variation distance and the normalised distances satisfy
$2 \tvnorm{ p^{(t)} - \nu} = d_1(t) \le d_2(t)$,
where the latter inequality follows from the Cauchy-Schwarz inequality. We define the $\varepsilon$-mixing time as
$ \tau(\varepsilon) = \min \{ t : d_1(t) \le 2\varepsilon \}$.
We refer to the value $\taumix=\tau(1/2)$ as the \emph{mixing time} of the walk.
Note that $\tau(\varepsilon) \le \lceil \log_2 \varepsilon^{-1} \rceil \cdot \taumix$.

\paragraph{Tasks.}
Let $\Sigma$ and $\Gamma$ be nonempty finite sets of input and output labels, respectively.
A task $\Pi$ on a set $V$ of $n$ nodes is a function $\Pi$ that maps any input labelling $z \colon V \to \Sigma$ to a set $\Pi(z) \subseteq \Gamma^V$ of feasible output labellings.
If $\Pi(z) = \emptyset$, then we say that $z$ is an infeasible input. We focus on two tasks:
\begin{itemize}
\item In leader election, the input is the constant function $z(v) = 1$ and the output labelling~$z'$ is feasible iff there exists $v \in V$ such that $z'(v) = 1$ and $z'(u) = 0$ for all $u \neq v$. 
That is, exactly one node should output 1 and all others should output 0.
\item In the majority task, the inputs are given by $z \colon V \to \{0,1\}$ and $z' \in \Pi(z)$ if $z'(v) = b$, where $b$ is the input value held by the majority of the nodes. As conventional, the input with equally many zeros and ones is taken to be infeasible.
\end{itemize}

\paragraph{Graphical stochastic population protocols.}
Let $G = (V,E)$ be a graph. In the graphical stochastic population model, abbreviated as $\ppmodel(G)$, the computation proceeds \emph{asynchronously}, where in each time step $t > 0$: 
\begin{enumerate}[noitemsep]
\item a stochastic scheduler picks uniformly at random a pair $e_t = (u,v)$ of neighbouring nodes, 
\item the nodes $u$ and $v$ read each other's states and update their local states.
\end{enumerate}
As is common in population protocols, we assume that the node pairs are \emph{ordered}, which will allow us to distinguish the two nodes: node $u$ is called the \emph{initiator} and $v$ is the \emph{responder}. 
We assume that nodes have access to independent and uniform random bits. Specifically, upon each interaction, both $u$ and $v$ are provided with a single random bit each. We note that this assumption is common in the context of population protocols, e.g.~\cite{gasieniec2018fast}, and can be justified practically by the fact that chemical reaction network (CRN) implementations can directly obtain random bits given the structure of their interactions~\cite{brijder2019computing}. 

Formally, a protocol for a task $\Pi$ is a tuple $\vec A = (f, \ell_\textrm{in}, \ell_\textrm{out})$, where $f \colon S \times \{0,1\} \times S \times \{0,1\} \to S \times S$ is the state transition function and $S$ is the set of states, $\ell_\textrm{in} \colon \Sigma \to S$ maps inputs to initial states, and $\ell_\textrm{out} \colon S \to \Gamma$ maps states to outputs.  A configuration is a map $x \colon V \to S$ and  $x_0 = \ell_\textrm{in} \circ z$ is the initial configuration on input $z$.
An asynchronous schedule is a random sequence $(e_t )_{t \ge 1}$ of the interaction pairs. An execution is the sequence $(x_t)_{t \ge 0}$ of configurations given by
\[
x_{t+1}(u), x_{t+1}(v) = f\left( x_t(u), q_{t+1}(u), x_t(v), q_{t+1}(v) \right) \textrm{ and } x_{t+1}(w) = x_t(w) \textrm{ for } w \in V \setminus \{u,v\},
\]
where $(u,v) = e_{t+1}$ and $q_{t+1}(u) \in \{0,1\}$ is the random bit provided to the node $u$ during the interaction. The output of the protocol at step $t$ is given by $z'_t = \ell_\textrm{out} \circ x_t$. 

We say that $\vec A$ stabilises on input $z$ by step $T$ if
$ z'_{t+1} = z'_t$ and $z'_t \in \Pi(z)$
 holds for all $t \ge T$. Moreover, $\vec A$ solves the task $\Pi$ with probability at least $p$ in $T(\vec A)$ steps if the protocol stabilises by step $T(\vec A)$ on any feasible input with probability at least $p$. The state complexity of the protocol is $S(\vec A) = |S|$, i.e., the number of states used by the protocol.

\paragraph{Synchronous token protocols.}
In the synchronous $k$-token shuffling model, we assume that there are $n$ agents which communicate in a round-based fashion using \emph{tokens}. In each round,
\begin{enumerate}[noitemsep]
\item every node $v$ generates exactly $k$ tokens based on its current state,
\item all $nk$ tokens are shuffled uniformly at random so that each node gets exactly $k$ tokens,
\item every node $v$ updates its local state based on its current state and the $k$ tokens it received.
\end{enumerate}
Let $X$ be the set of states a node can take and $Y$ be a set of distinct token types.
An algorithm in the token shuffling model is a tuple $\vec B = (f,g,\ell_\textrm{in}, \ell_{\textrm{out}})$. The map $f \colon X \times Y^k \to X$ is a state transition function, and $g \colon X \to Y^k$ determines which tokens each node creates at the start of each round. 
As before, $\ell_{\textrm{in}} \colon \Sigma \to X$ maps input values to initial states and $\ell_{\textrm{out}} \colon X \to \Gamma$ maps the state of a node onto an output value. The initial configuration on input $z$ is $x_0 = \ell_{\textrm{out}} \circ z$.

A \emph{synchronous schedule} is a sequence $(\sigma_r)_{r \ge 1}$, where the permutation $\sigma_r \in S_{nk}$ describes how the tokens are shuffled in round~$r$.
For any $y \colon [nk] \to Y$, we let $y(v_0, \ldots, v_{k-1}) = (y(v_0), \ldots, y(v_{k-1}))$. A synchronous execution induced by $(\sigma_r)_{r \ge 1 }$ on input $z$ 
is defined by
  \[
  y_{r+1}(v_0, \ldots, v_{k-1}) = (g \circ x_r)(v) \quad \textrm{ and } \quad  x_{r+1}(v) = f\left( x_r(v), \left(y_{r+1} \circ \sigma_{r+1}\right)\left(v_0, \ldots, v_{k-1}\right) \right),
  \]
  where 
  $y_{r}(v_0, \ldots, v_{k-1})$ and $(y_{r} \circ \sigma_{r+1})(v_0, \ldots, v_{k-1})$, respectively, are the $k$ tokens generated and received by node~$v$ during round $r$.
  
  We assume the uniform synchronous scheduler, which picks each permutation $\sigma_r$ independently and uniformly at random from the set of all permutations $S_{nk}$. The output of node $v$ at the end of round $r$ is $z'_r(v) = (\ell_{\textrm{out}} \circ x_r)(v)$.
  The synchronous algorithm $\vec B$ stabilises on input $z$ in $R$ rounds if $z_{r+1} = z'_r$ and $ z'_r \in \Pi(z)$ holds for all $r \ge R$. The algorithm solves the problem $\Pi$ if it stabilises in $R$ rounds on any feasible input with probability at least $p$.

\section{Shuffling on graphs: the \texorpdfstring{$k$}{k}-stack interchange process}\label{sec:interchange}

We now describe a shuffling process on graphs, which we call the \emph{$k$-stack interchange process}. This process will be useful in our analysis, and is a variant of the classic graph interchange process, e.g.~\cite{dieker2010interlacings, jonasson2012interchange}.  
We analyse its mixing time using the path comparison method of Diaconis and Saloff-Coste~\cite{diaconis1993comparison},  leveraging a classical flow result of Leighton and Rao~\cite{leighton1999multicommodity}.

\paragraph{\boldmath The $k$-stack interchange process.}
Let $G = (V,E)$ a graph with $n$ vertices $\{0, \ldots, n-1\}$ and $N = kn$ for $k > 0$.
Assume each node of $G$ holds a stack of exactly $k$ cards, and consider the shuffling process where, 
in every time step, one of the following actions is taken:
\begin{enumerate}[noitemsep]
\item with probability $1/2$, move the top card of a random node to the bottom of its stack,
\item with probability $1/4$, choose a random edge $\{u,v\}$ and swap the top cards of $u$ and $v$,
\item with probability $1/4$, do nothing.
\end{enumerate}
We refer to this process as the \emph{$k$-stack interchange process on $G$}. The special case of $k=1$ is the classic interchange process on $G$ with holding probability $3/4$, as the first rule does not do anything on stacks of size 1. For $k > 1$, the holding probability will be $1/4$. Instances of the process for $k=1$ and $k=2$ are illustrated in \figureref{fig:interchange}.

\begin{figure}[t]
\centering
  \includegraphics[page=4,width=0.85\textwidth]{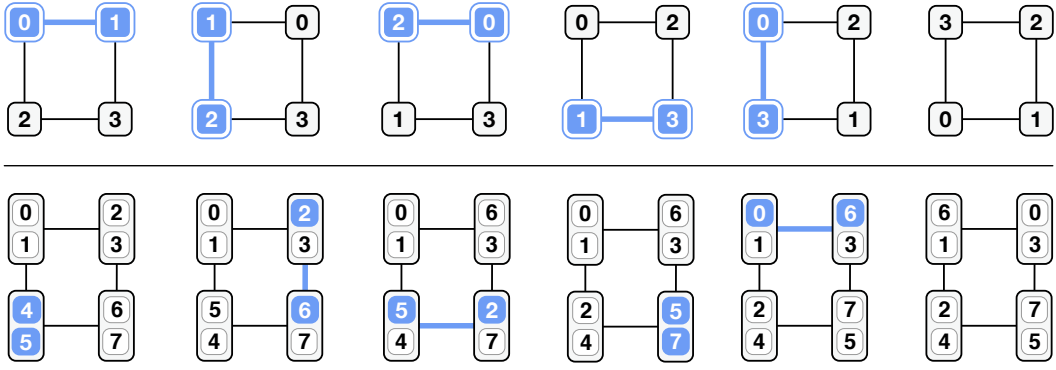}
  \caption{Interchange dynamics on a 4-cycle. In each step, blue cards are swapped. Top row: The 1-stack interchange process. Bottom row: The 2-stack interchange process. In each step, a randomly selected node either moves its top card to the bottom of its stack or exchanges it with the top card of a randomly selected neighbour.
    \label{fig:interchange}}
\end{figure}

\begin{restatable}{theorem}{mixtheorem}\label{thm:d-beta-mixing}
Let $G$ be a $d$-regular graph with edge expansion $\beta > 0$. For any constant $k>0$, the mixing time of the $k$-stack interchange process on $G$ is
$O\left( \left(d/\beta\right)^2 n \log^3 n \right)$.
\end{restatable}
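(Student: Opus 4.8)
The plan is to realise the $k$-stack interchange process as a (nearly) reversible random walk on the symmetric group $S_N$ with $N = kn$, and to bound its $\ell^2$-mixing time by comparing it against the random transpositions shuffle on $S_N$ via the path comparison method of Diaconis and Saloff-Coste~\cite{diaconis1993comparison}. Identify a configuration with a bijection assigning one of $N$ cards to each of the $N$ stack-positions, so the state space is $S_N$ and the uniform distribution is automatically stationary. The increment distribution $\mu$ is supported on the identity (weight $\ge 1/4$), the $n$ rotations $c_w$ that cyclically shift the $k$ positions of node $w$, and the $|E| = nd/2$ transpositions $s_e$ swapping the top positions of the two endpoints of an edge $e$; irreducibility follows since connectivity of $G$ makes these generate $S_N$. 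For $k \le 2$ this $\mu$ is symmetric, hence the walk is reversible; for $k \ge 3$ it is not, but since $\widehat{\check\mu \ast \mu}(\rho) = \hat\mu(\rho)^{\ast}\hat\mu(\rho)$ and singular values are submultiplicative, the $\ell^2$-distance of the $\mu$-walk at time $2t$ is dominated by that of the reversible walk with increment $\check\mu \ast \mu$ at time $t$. It therefore suffices to bound the mixing of this reversible walk, whose support still contains every $s_e$ and $c_w$ (pad with the identity), so the path comparison below applies verbatim.

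Next, take the base chain to be the lazy random transpositions walk on $S_N$, whose $\ell^2$-mixing time is $\Theta(N \log N) = \Theta(n\log n)$ for constant $k$~\cite{diaconis1981generating}. For the comparison we express each transposition $(a\ b)$ of $S_N$ as a short word in the generators $\{c_w\} \cup \{s_e\}$: if position $a$ is the $i$-th slot of node $u$'s stack and $b$ the $j$-th slot of $v$'s, first rotate $u$'s stack $i$ times and $v$'s stack $j$ times to bring the target cards to the top; then, along a path $u = w_0, w_1, \dots, w_L = v$ in $G$, perform the ``conveyor belt'' word $s_{w_0w_1}, s_{w_1w_2}, \dots, s_{w_{L-1}w_L}, s_{w_{L-2}w_{L-1}}, \dots, s_{w_0w_1}$, which one checks transposes the top cards of $u$ and $v$ while fixing every other top card; finally rotate $v$'s stack $k-j$ times and $u$'s $k-i$ times to restore all remaining slots. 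This is a word of length $O(k) + (2L-1)$ using $O(k)$ copies of each of $c_u, c_v$ and at most two copies of each $s_e$ on the chosen $G$-path. The crucial choice is the routing: rather than a single shortest path, we route the transposition demands according to a multicommodity flow supplied by the theorem of Leighton and Rao~\cite{leighton1999multicommodity}. Since each cut $S$ with $|S|\le n/2$ has capacity $|\partial S| \ge \beta|S|$ against demand $\Theta(|S|(n-|S|)/n^2)$, this yields a flow routing the $\Theta(1/n^2)$ units of demand between each node pair with maximum edge congestion $O(\tfrac{\log n}{n\beta})$ along flow paths of length $O(\diam(G)) = O(\tfrac{d}{\beta}\log n)$.

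It then remains to estimate the comparison constant $A = \max_z \tfrac{1}{\mu(z)} \sum_{x} \ell(x)\,\bar\mu(x)\,N(z,x)$, where the sum is over base transpositions $x$ whose word uses generator $z$, with $\ell(x)$ the word length and $N(z,x)$ the multiplicity of $z$ in that word. An edge-swap generator $s_e$ has $\mu(s_e) = \Theta(1/m) = \Theta(1/nd)$, while $\sum_{x : s_e \in \text{word}} \ell(x)\bar\mu(x) \le O(\diam(G)) \cdot (\text{flow through } e) = O\!\left(\tfrac{d}{\beta}\log n\right)\cdot O\!\left(\tfrac{\log n}{n\beta}\right)$, for a contribution of $O\!\left((d/\beta)^2\log^2 n\right)$. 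A rotation generator $c_w$ has $\mu(c_w) = \Theta(1/n)$ and is used only by the $O(k^2 n)$ transpositions touching node $w$'s slots, each $O(k)$ times over a word of length $O(\diam(G))$, contributing only $O(k\,\diam(G)) = O\!\left(\tfrac{d}{\beta}\log n\right)$, which is dominated since $\beta \le d$. Hence $A = O\!\left((d/\beta)^2\log^2 n\right)$, and the comparison theorem of~\cite{diaconis1993comparison} (in its eigenvalue/$\ell^2$ form, with both chains made lazy so all eigenvalues are non-negative) gives $\taumix = O(A)\cdot \Theta(n\log n) = O\!\left((d/\beta)^2 n\log^3 n\right)$.

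The main obstacle is the routing step: one needs a multicommodity flow that is \emph{simultaneously} low-congestion (to control how many canonical paths cross any single edge) and short-path (to control the word lengths $\ell(x)$), so that both factors of $d/\beta$ come out of one argument; bounding one at the cost of the other---shortest paths with uncontrolled congestion, or Leighton--Rao congestion with length-$n$ paths---loses too much. Secondary care is needed in (i) the $k\ge 3$ non-reversibility reduction sketched above, (ii) verifying the ``conveyor belt'' identity and the rotate--route--rotate restoration, and (iii) the polylogarithmic bookkeeping: confirming $\diam(G) = O(\tfrac{d}{\beta}\log n)$ for $d$-regular graphs of edge expansion $\beta$ and tracking constants carefully so that the exponent on $\log n$ comes out to exactly $3$.
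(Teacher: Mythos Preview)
Your approach is essentially identical to the paper's: both compare against the random transpositions shuffle via the Diaconis--Saloff-Coste path method, represent each transposition by rotating the source and target stacks and conjugating by a ``conveyor belt'' of edge-swaps along a $G$-path, and invoke Leighton--Rao to obtain a routing with congestion $O(n\log n/\beta)$ and dilation $O((d/\beta)\log n)$, yielding comparison constant $O((d/\beta)^2\log^2 n)$ and hence the stated mixing time. One point worth noting: you explicitly treat the non-reversibility for $k\ge 3$ (the stack rotation $\sigma_u$ is a $k$-cycle, hence not an involution, so $\mu$ is not symmetric) via the multiplicative reversibilization $\check\mu \ast \mu$; the paper states its comparison lemma only for symmetric increment distributions and does not comment on this, so your observation in fact patches a small gap the paper leaves open.
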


We prove this theorem in \appendixref{apx:interchange}. In Section~\ref{sec:simulation}, we will show that this shuffling process can be implemented efficiently in the graphical population protocol model.

\section{Decentralised \emph{graphical} phase clocks}\label{sec:clocks}

We now describe a \emph{bounded phase clock} construction for the stochastic population protocol model over regular graphs. Interestingly, the construction can be generalised to \emph{non-regular} graphs, assuming that node degrees do not deviate too much from the average degree; see \appendixref{app:non-regular}. Our approach generalises that of Alistarh et al.~\cite{alistarh2018space-optimal}, who built a leaderless phase clock on cliques leveraging the classic two-choice load balancing process~\cite{azar1999balanced, peres2014graphical}.

\paragraph{Phase clocks.}
Let $\phi > 0$ be an integer and consider a population protocol $\vec C$ with state variables $c(v) \in \{0, \ldots, \phi-1\}$ for each $v \in V$. The variable $c(v)$ represents the value of the clock at node $v$.
Let $c(v,t)$ be the clock value node $v$ has at the end of time step $t$ (regardless of whether it was active during that step).
We define the distance $D$ between two clock values and the skew $\Delta$ of the clock at the end of step $t$, respectively, as follows: 
\[
D(x,y) = \min \{ |x-y|, \phi - |x-y| \} \quad \textrm{ and } \quad \Delta(t) = \max_{u,v \in V} D\left(c(u,t),c(v,t)\right).
\]
We say that the protocol $\vec C$ implements a $(\phi,\gamma,\kappa)$-clock if for all $t \ge 0$ the following hold:
\begin{enumerate}[noitemsep]
\item $\Pr[\Delta(t) \ge \gamma] < t/n^\kappa$, and
\item $c(v,t+1) = c(v,t) + 1 \bmod \phi$ for exactly one $v \in V$ and $c(u,t+1) = c(u,t)$ for all $u \in V \setminus \{v \}$.
\end{enumerate}
Intuitively, $\phi$ is the length of a phase, $\gamma$ is the skew of the clock, and $\kappa$ controls the failure probability. The above properties guarantee that the clocks (1)~have a skew bounded by $\gamma$ for polynomially many steps, w.h.p.; and (2)~in each step, the clocks make progress (at some node). A clock protocol $\vec C$ \emph{fails} at time $t$ if $\Delta(t) \ge \gamma$ occurs.
Several types of phase clocks have been proposed in the population protocol literature, e.g.~\cite{angluin2008fast-computation, gasieniec2018fast, alistarh2018space-optimal,sudo2019logarithmic,gasieniec2020time}. 

\paragraph{Bounded phase clocks via graphical load balancing.}
Let $G$ be a graph and suppose that each node of $G$ contains a bin, which is initially empty.
Consider the process, where in each step, a directed edge $(u,v)$ is sampled uniformly at random and a ball is placed into the \emph{least} loaded of bin among the two nodes connected by the edge (in case of ties, place the ball into bin $u$). Let $\ell(u,t)$ be the number of balls placed into bin $u \in V$ by the end of step~$t$ and use
\[
\Delta^*(t) = \max_{v \in V} \ell(v,t) - \min_{u \in V} \ell(u,t),
\]
to denote the gap between the most and least loaded bin. 
In~\appendixref{apx:process-gap}, we obtain the following bounds for this process on regular graphs, by leveraging 
the analysis of Peres et al.~\cite{peres2014graphical} for the above load balancing process. 

\begin{lemma}\label{lemma:unbounded-process-gap}
Let $G = (V,E)$ be a $d$-regular graph with $n$ nodes and edge expansion $\beta > 0$.
  For any constant $\kappa > 0$, there exists a constant $c(\kappa)$ such that for all $t > 0$ the gap satisfies
  \[
\Pr\left[ \Delta^*(t) > c(\kappa) \frac{d}{\beta} \log n  \right] < t/n^\kappa.
\]
\end{lemma}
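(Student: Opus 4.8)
The plan is to follow the exponential-potential drift argument of Peres, Talwar and Wieder~\cite{peres2014graphical}, but to read off the dependence on the edge expansion $\beta$ directly rather than on the spectral gap. I would write $x_v(t) = \ell(v,t) - t/n$ for the load of bin $v$ centred at the current average, so that $\sum_{v \in V} x_v(t) = 0$ at every step, and fix a parameter $\alpha$ equal to a sufficiently small constant times $\beta/d$. The potential to work with is the two-sided exponential potential
\[
\Phi(t) = \Gamma(t) + \Psi(t), \qquad \Gamma(t) = \sum_{v \in V} e^{\alpha x_v(t)}, \qquad \Psi(t) = \sum_{v \in V} e^{-\alpha x_v(t)},
\]
so that $\Phi(0) = 2n$. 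Since $\sum_v x_v(t) = 0$, one has $\Phi(t) \ge e^{\alpha \max_v |x_v(t)|} \ge e^{\alpha \Delta^*(t)/2}$, so it suffices to show that $\E[\Phi(t)] = O(n)$ for all $t$ and then apply Markov's inequality.

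The heart of the argument is a one-step drift inequality
\[
\E\!\left[\Phi(t+1) \mid \mathcal{F}_t\right] \;\le\; \Bigl(1 - \tfrac{\varepsilon}{n}\Bigr)\,\Phi(t) + b, \qquad \varepsilon = \Theta\!\bigl((\beta/d)^2\bigr),\quad b = O\!\bigl((\beta/d)^2\bigr),
\]
where $\mathcal{F}_t$ is the history up to step $t$. In one step the chosen bin $u$ (the less loaded endpoint of the sampled directed edge) gains a ball, and every $x_v$ drops by $1/n$ since $t/n$ grows. Averaging over the $nd$ directed edges gives
\[
\E\!\left[\Gamma(t+1) \mid \mathcal{F}_t\right] - \Gamma(t) \;=\; \frac{e^{\alpha}-1}{n}\,\Gamma(t) \;-\; \frac{e^{\alpha}-1}{nd}\sum_{\{u,v\}\in E}\bigl|e^{\alpha x_u} - e^{\alpha x_v}\bigr| \;+\; \bigl(e^{-\alpha/n}-1\bigr)\,\Gamma(t),
\]
and symmetrically for $\Psi$; the first and last terms nearly cancel, leaving a positive $O(\alpha^2/n)\,\Gamma(t)$ term that must be dominated by the middle ``smoothing'' term. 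This is where the edge expansion enters, through the co-area identity $\sum_{\{u,v\}\in E}|f(u)-f(v)| = \int_0^\infty |\partial\{w : f(w)>s\}|\,\mathrm{d}s$ applied to $f = e^{\alpha x_\cdot}$, combined with $|\partial S| \ge \beta \min(|S|, n-|S|)$: this yields $\sum_{\{u,v\}\in E}|e^{\alpha x_u}-e^{\alpha x_v}| \ge \beta\,(\Gamma(t) - O(n))$ whenever the median load lies at or below the average (and the mirror bound for $\Psi$ when it lies at or above). With $\alpha$ a small enough multiple of $\beta/d$, plugging this in makes the coefficient of $\Gamma(t)$ equal to $-\Theta((\beta/d)^2/n)$ and leaves only an $O((\beta/d)^2)$ additive term after the $1/(nd)$ normalisation; summing the $\Gamma$ and $\Psi$ estimates gives the claimed drift inequality.

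Granting the drift inequality, the standard supermartingale/drift lemma gives $\E[\Phi(t)] \le \Phi(0) + bn/\varepsilon = 2n + O(n) = O(n)$ for all $t \ge 0$. Choosing $L = c(\kappa)\,\tfrac{d}{\beta}\log n$ with $c(\kappa)$ large enough that $e^{\alpha L/2} \ge n^{\kappa+1}$ (possible since $\alpha\,(d/\beta) = \Theta(1)$, hence $c(\kappa)$ depends only on $\kappa$), Markov's inequality gives
\[
\Pr\!\left[\Delta^*(t) > L\right] \;\le\; \Pr\!\left[\Phi(t) > n^{\kappa+1}\right] \;\le\; \frac{\E[\Phi(t)]}{n^{\kappa+1}} \;\le\; \frac{O(n)}{n^{\kappa+1}} \;\le\; \frac{t}{n^{\kappa}}
\]
for all $t \ge 1$, which is the claim. (The factor $t$ on the right is not actually needed; it is stated this way to match the interface used by the phase-clock construction in \sectionref{sec:clocks}.)

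The main obstacle is the one-step drift inequality, in two places where the naive argument breaks. First, the smoothing term must be lower-bounded by $\beta\cdot\Gamma(t)$ (edge expansion) and not by $\lambda_2(G)\cdot\Gamma(t)$ (spectral gap): passing through Cheeger's inequality loses a square and would give only a $(d/\beta)^2\log n$ gap bound, so one has to apply the co-area bound directly to the level sets of the exponential potential. Second, the residual $\Gamma(t) - O(n)$ in that bound is only valid when the median load does not sit too far above the average; when it does (i.e.\ the configuration is heavily skewed), one instead controls the drift of $\Psi(t)$ --- whose level sets are then well behaved --- and argues that $\Phi(t)$ is dominated by a few extreme bins which are corrected quickly. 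Making this dichotomy (balancing $\Gamma$ against $\Psi$ according to where the median sits relative to the average) work cleanly is exactly the technical content of the Peres--Talwar--Wieder analysis, adapted in \appendixref{apx:process-gap}.
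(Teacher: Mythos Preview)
Your proposal is correct in spirit --- the exponential-potential drift argument you sketch is precisely the engine behind the Peres--Talwar--Wieder result, and your parameter choices ($\alpha = \Theta(\beta/d)$, drift $\varepsilon = \Theta((\beta/d)^2)/n$, hence $\E[\Phi(t)] = O(n)$ and gap $O((d/\beta)\log n)$) are consistent. However, the paper takes a much shorter route: it does \emph{not} redo the potential analysis at all. Instead it quotes \cite{peres2014graphical} as a black box in the form ``if the sampling measure $\mu$ is $\delta$-expanding then the gap is $O((\log n)/\delta)$ with the stated tail bound'', and then spends a few lines verifying that the uniform distribution on the edges of a $d$-regular graph with edge expansion $\beta$ is $(\beta/d)$-expanding (checking the two defining inequalities $\mu(E(S)) \ge (1+\beta/d)|S|/n$ and $\mu(E(S)\setminus\partial S) \le (1-\beta/d)|S|/n$ by elementary counting). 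Plugging $\delta = \beta/d$ into the cited lemma gives the claim immediately.

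So both approaches avoid the Cheeger square loss you worry about, but for different reasons: you propose to rerun the co-area / level-set argument inside the drift computation, whereas the paper observes that the $\delta$-expanding framework of \cite{peres2014graphical} already encodes edge expansion (not spectral gap), so one only needs to verify the expansion hypothesis. Your route is self-contained but would occupy several pages if written out in full (including the median dichotomy you mention); the paper's route is about ten lines but depends on having the $\delta$-expanding statement available as a citable lemma.
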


We use the above result to obtain \emph{bounded} phase clocks in the $\ppmodel(G)$ model. 
We note that this is the only place in our framework where the initiator/responder distinction is used.
The proof of this result can be found in Appendix~\ref{app:proof-theorem-clocks}.

\begin{restatable}{theorem}{clockthm}\label{thm:clocks}
  Suppose $G = (V,E)$ is a $d$-regular graph with $n$ nodes and edge expansion $\beta > 0$.
  Let $\kappa > 1$ be a constant. Then for any $\gamma$ and $\phi$ satisfying
  \[
  \gamma \ge c(\kappa) \frac{d}{\beta} \log n \quad \textrm{ and } \quad \phi \ge 2\gamma
  \]
  there exists $(\phi,\gamma,\kappa)$-clock for $\ppmodel(G)$ that uses $\phi$ states per node.
\end{restatable}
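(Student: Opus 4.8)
The plan is to build the $(\phi,\gamma,\kappa)$-clock directly from the graphical two-choice load balancing process analysed in Lemma~\ref{lemma:unbounded-process-gap}. Each node $v$ maintains a counter $\ell(v)$, which we interpret as the number of balls thrown into bin $v$, but we only ever store it modulo $\phi$; the clock value is $c(v) = \ell(v) \bmod \phi$. On an interaction over the ordered pair $(u,v)$, we want to simulate "throw a ball into the less loaded of the two bins, breaking ties toward the initiator." The obstacle is that nodes store counters only mod $\phi$, so they cannot directly compare $\ell(u)$ and $\ell(v)$. The resolution is the standard one for bounded phase clocks: because the true gap $\Delta^*(t)$ stays below $\gamma \le \phi/2$ with high probability, the cyclic distance $D(c(u),c(v))$ determines the sign of $\ell(u)-\ell(v)$ unambiguously whenever the clock has not failed. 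Concretely, given $c(u)$ and $c(v)$ with $D(c(u),c(v)) < \phi/2$, exactly one of the two "arcs" between them on the cycle $\mathbb{Z}_\phi$ is shorter, and the endpoint reached by going forward along the short arc is the one with the larger counter; so the node with the smaller counter is identified correctly, and that node increments its clock mod $\phi$ (with ties, i.e. $c(u)=c(v)$, broken toward the initiator $u$). This transition function uses only the two clock values and the initiator/responder bit, needs $\phi$ states, and performs exactly one clock increment per step, giving property~(2) of a $(\phi,\gamma,\kappa)$-clock immediately.

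Next I would argue property~(1), the skew bound. Define a coupling between the clock protocol's execution and an execution of the idealised load balancing process driven by the same schedule of directed edges and the same tie-breaks: both start from the all-zero configuration. I claim that as long as the load balancing process satisfies $\Delta^*(s) \le \gamma$ for all $s \le t$, the two executions agree in the sense that $c(v,s) = \ell(v,s) \bmod \phi$ for all $v$ and all $s \le t$. This is proved by induction on $s$: if $\Delta^*(s-1) \le \gamma \le \phi/2 - 1$ (using $\phi \ge 2\gamma$, and adjusting constants so the inequality is strict), then for the interacting pair the cyclic comparison described above returns the same "less loaded" node as the true integer comparison, since two integers differing by at most $\phi/2-1$ have their order determined by their residues mod $\phi$; hence the clock protocol increments exactly the node the load balancing process throws its ball into, preserving the invariant at step $s$. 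Therefore the event $\{\Delta(t) \ge \gamma\}$ — measured with the cyclic distance $D$ on clock values — is contained in the event that $\Delta^*(s) > \gamma$ for some $s \le t$: indeed, if $\Delta^*(s) \le \gamma$ throughout, then $c(\cdot,t)$ equals $\ell(\cdot,t) \bmod \phi$, and since the true loads span a range of at most $\gamma < \phi/2$, the cyclic distance between any two clock values equals the true integer difference, which is at most $\Delta^*(t) \le \gamma$, contradicting $\Delta(t) \ge \gamma$. (One should take $\gamma \ge c(\kappa)\frac{d}{\beta}\log n$ with the Lemma's constant, possibly inflating $c(\kappa)$ slightly to absorb the $\pm 1$ slack.) Then by Lemma~\ref{lemma:unbounded-process-gap},
\[
\Pr[\Delta(t) \ge \gamma] \le \Pr\!\left[\exists\, s \le t : \Delta^*(s) > \gamma\right] \le \Pr\!\left[\Delta^*(t) > \gamma\right] < t/n^\kappa,
\]
where the middle step uses that $\Delta^*$ on a regular graph is nondecreasing in a suitable sense — more carefully, the Lemma's bound is already stated as a bound on $\Delta^*(t)$ for each fixed $t$ and is monotone in $t$ on the right-hand side, but to bound the union over $s \le t$ one can either invoke a union bound (losing only a factor $t$, still polynomial, absorbed by choosing $\kappa$ larger) or note the argument in Appendix~\ref{apx:process-gap} gives the stronger statement directly. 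This yields property~(1) with parameter $\kappa$ (after the constant adjustment), completing the construction.

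The main obstacle is getting the cyclic-comparison step exactly right at the boundary: one must ensure $\phi \ge 2\gamma$ leaves genuine slack so that a gap of size $\le \gamma$ never causes the mod-$\phi$ comparison to pick the wrong node, and one must handle the tie case $c(u) = c(v)$ consistently with the load balancing tie-break — this is precisely why the theorem statement flags that "this is the only place in our framework where the initiator/responder distinction is used." A secondary point is bookkeeping the failure probability: the Lemma gives a per-time bound $t/n^\kappa$, which already matches the definition of a $(\phi,\gamma,\kappa)$-clock, so no union bound over $s \le t$ is actually needed provided Lemma~\ref{lemma:unbounded-process-gap} is read as bounding the probability that the gap has ever exceeded the threshold by time $t$ (which the load-balancing analysis of Peres et al.\ does provide); I would state this carefully rather than lose constants.
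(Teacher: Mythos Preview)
Your proposal is correct and matches the paper's own proof essentially step for step: the paper defines the clock by storing the load-balancing counter modulo $\phi$, uses the cyclic comparison via $M_\phi(x,y) = \max(x,y)$ if $|x-y|<\phi/2$ and $\min(x,y)$ otherwise (with ties to the initiator), and then couples the bounded clock with the unbounded balls-into-bins process on the same schedule, showing by induction that the two agree as long as $\Delta^*(t') < \gamma$ throughout, before invoking Lemma~\ref{lemma:unbounded-process-gap}. Your discussion of the boundary case and of whether a union bound over $s\le t$ is needed is more explicit than the paper's (which simply asserts the conclusion from the lemma), but the substance is identical.
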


\section{Simulating synchronous token shuffling protocols}
\label{sec:simulation}

In this section, we give our main technical result: synchronous protocols in the fully-connected token shuffling model can be simulated in the graphical, stochastic population protocol model.

\begin{restatable}{theorem}{simthm}\label{thm:simulation}
  Let $k > 0$ be a constant and $\vec A$ be a synchronous $k$-token shuffling protocol on $n$ nodes, where  $X$ is the set of local states and $Y$ the set of token types used the protocol $\vec A$. If $\vec A$ solves the task $\Pi$ with high probability in $R \in \poly(n)$ rounds, then there exists a stochastic population protocol~$\vec B$ that also solves task $\Pi$ with high probability on any $n$-node $d$-regular graph $G$ with edge expansion $\beta > 0$. The step complexity $T(\vec B)$ and state complexity $S(\vec B)$ of the protocol $\vec B$ satisfy
    \[
    T(\vec B) \in  O\left( R\cdot n \cdot \zeta \right) \quad \textrm{ and } \quad S(\vec B) \in O\left( |X| \cdot |Y|^k \cdot \zeta \right) \quad \textrm{ with } \quad \zeta = \log n \cdot \left( \frac{d}{\beta} + \frac{\taumix}{n} \right),
  \]
  where $\taumix$ is the mixing time of the $k$-stack interchange process on $G$.
\end{restatable}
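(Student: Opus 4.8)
The plan is to build the simulation in three layers that correspond exactly to the three gadgets introduced earlier: the \emph{graphical phase clock} (\theoremref{thm:clocks}), an \emph{asynchronous token-shuffling subroutine} whose stationary behaviour is the $k$-stack interchange process (\theoremref{thm:d-beta-mixing}), and the \emph{state-update logic} that plays the role of $f$ and $g$ from the synchronous protocol $\vec A$. First I would fix a phase length $\phi = \Theta\!\big(\tfrac{d}{\beta}\log n + \tfrac{\taumix}{n}\big)$ — large enough that, within a single phase, (i) every node interacts $\Omega(\log n)$ times w.h.p., so clock progress and token generation propagate, and (ii) the shuffling subroutine run during the ``mixing'' sub-phase has been executed for $\Omega(\taumix \log n)$ steps, so the induced token permutation is $n^{-\Theta(\lambda)}$-close to uniform in total variation by \theoremref{thm:d-beta-mixing} and the bound $\tau(\varepsilon)\le \lceil\log_2\varepsilon^{-1}\rceil\taumix$. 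Each simulated synchronous round of $\vec A$ is mapped to one phase: at the start of a phase a node (on its first activation in that phase, detected via the clock value) writes $g(x)$ onto its $k$ card slots; during the bulk of the phase the nodes run the asynchronous shuffling subroutine on those cards; at phase end each node applies $f$ to its current state and the multiset of $k$ cards it currently holds. The clock's bounded skew guarantees that, w.h.p.\ for the whole $\poly(n)$-length execution, all nodes agree on which phase they are in up to an additive $\gamma \ll \phi$ offset, which is absorbed into the phase length so that no node acts on cards from the wrong round.

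The core technical step is the \emph{coupling argument} showing that the asynchronous simulation is indistinguishable from a genuine synchronous $k$-token-shuffling execution of $\vec A$. I would couple the population-protocol execution with an idealised synchronous execution: the idealised execution draws $\sigma_{r+1}\in S_{nk}$ exactly uniformly, while the simulation produces a permutation $\hat\sigma_{r+1}$ distributed as the $t$-step $k$-stack interchange chain started from identity (on a fixed labelling of cards to slots). By the mixing-time bound, $\tvnorm{\operatorname{law}(\hat\sigma_{r+1})-\nu}\le n^{-\lambda-1}$ per round for a suitable constant in $\phi$; a union bound over the $R\in\poly(n)$ rounds keeps the total coupling error below $n^{-\lambda}$. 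On the event that the coupling succeeds every round, the two executions produce identical configuration sequences, so $\vec B$ inherits the stabilisation guarantee of $\vec A$ verbatim. One subtlety I would need to handle carefully: the $k$-stack interchange process as defined includes the ``rotate top card to bottom'' moves and lazy self-loops, which are irrelevant to which multiset of $k$ cards a node holds but matter for the permutation being uniform — I would argue that projecting the uniform distribution on $S_{nk}$ down to the (node $\to$ multiset-of-$k$-cards) map still yields exactly the distribution of a fresh uniform shuffle of $nk$ tokens into $n$ groups of $k$, which is what the synchronous model prescribes; so closeness in TV on $S_{nk}$ implies closeness on the relevant projection.

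For the complexity bookkeeping: each phase costs $O(n\phi)=O\!\big(n\log n\,(\tfrac{d}{\beta}+\tfrac{\taumix}{n})\big)=O(n\zeta)$ steps, and there are $R$ phases, giving $T(\vec B)\in O(R\cdot n\cdot\zeta)$. For the state complexity, a node must store: its simulated state in $X$; the $k$ card slots, each holding a token in $Y$, contributing $|Y|^k$; the phase-clock value, contributing $\phi=O(\zeta/\log n \cdot \log n)=O(\zeta)$ states via \theoremref{thm:clocks}; plus $O(1)$ bookkeeping flags (has-generated-this-phase, etc.). Multiplying, $S(\vec B)\in O(|X|\cdot|Y|^k\cdot\zeta)$. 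I also need the \emph{backup}/correctness-in-all-executions remark: on the low-probability event that the clock fails or a coupling step fails, $\vec B$ may be in an arbitrary configuration, but since $\vec A$ solves $\Pi$ w.h.p.\ and the failure probability is $n^{-\lambda}$, the composed protocol still solves $\Pi$ w.h.p.; no deterministic backup is needed here because the theorem only claims a w.h.p.\ guarantee (the always-correct backups are reserved for the specific applications in the next section).

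I expect the main obstacle to be \textbf{making the coupling airtight across phase boundaries under clock skew}: because activations are asynchronous and the clock has skew up to $\gamma$, a node may begin writing $g(x)$ for round $r+1$ while a slow neighbour is still participating in round-$r$ shuffling moves. I would deal with this by padding the phase with a ``quiet'' guard band of length $\Theta(\gamma)\le\Theta(\phi)$ at each end during which no shuffling moves are issued — validated by the clock value — so that, on the bounded-skew event, every node's generation step strictly precedes every node's shuffling step which strictly precedes every node's update step, restoring the clean round structure that the coupling argument requires; bounding the probability that the skew exceeds the guard band is exactly what \theoremref{thm:clocks} provides.
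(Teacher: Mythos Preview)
Your architecture matches the paper's---phase clock, asynchronous shuffling realising the $k$-stack interchange process, TV/coupling transfer---and the complexity bookkeeping is right. The gap is in your claim that $\hat\sigma_{r+1}$ is ``distributed as the $t$-step $k$-stack interchange chain.'' Your guard bands (which are exactly the paper's ``suspended'' zone, clock values in $[\vartheta,\phi)$) do ensure that every generate precedes every shuffle precedes every update, but they do \emph{not} make the shuffle itself a pure interchange process: at the ragged edges of the shuffling window some nodes are receptive while others are still (or already) in a guard band, and when such a mixed pair interacts the step is a forced hold rather than a sample from the increment distribution $\mu$. So the simulated permutation actually factors as $\hat\sigma_{r+1}=\rho_3\cdot\rho_2\cdot\rho_1$, where only the middle factor $\rho_2$---the interval when \emph{all} nodes are receptive---is a product of genuine $\mu$-increments, while $\rho_1$ and $\rho_3$ come from distributions biased toward $\id$.

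The paper fills exactly this gap with \lemmaref{lemma:epsilon-uniformity}: because the all-receptive interval has length at least $\tau(\varepsilon)$ (this is \lemmaref{lemma:timing}(2)), the product $\rho_2\cdot\rho_1$ is $\varepsilon$-uniform on $S_{nk}$ regardless of what $\rho_1$ did, and then post-multiplication by the random element $\rho_3$ preserves $\varepsilon$-uniformity since the uniform distribution on a group is invariant under left translation. Without this invariance argument your per-round coupling bound is unjustified; once you add it, the remainder of your plan (chain the per-round bounds over $R$ rounds, push the TV bound through the output map) coincides with the paper's \lemmaref{lemma:prefix-statistical-distance} and \lemmaref{lemma:tv-max-fun}. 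A minor side note: the synchronous model in this paper hands each node an \emph{ordered} $k$-tuple of tokens, not a multiset, so closeness on $S_{nk}$ is precisely what is required and your projection digression is unnecessary.
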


\paragraph{Notation.}
The rest of this section is dedicated to proving this theorem. Throughout, we fix $R = R(n) \in \poly(n)$ and $\varepsilon = 1/n^a < 1/(R n^\lambda)$ for an arbitrary large constant $a > 0$. Let $G = (V,E)$ be $d$-regular $n$-node graph and $N = kn$. We use $\mu$ to denote the increment distribution of the $k$-stack interchange process on the graph $G$. The support of $\mu$ is the set $H \subseteq S_N$ and $\tau = \tau(\varepsilon)$ is the $\varepsilon$-mixing time of the $k$-stack interchange~process.

\subsection{The token shuffling protocol}

We now give a stochastic population protocol that simulates uniform schedules of the synchronous token shuffling model. The protocol simulates the random walk made by the $k$-stack interchange process, synchronised by phase clocks.

\paragraph{Setting up the clock.}
We choose the parameter $\kappa > 0$ such that a $(\phi,\gamma,\kappa)$-clock $\vec C$ with parameters given by
\[
\gamma \in \Theta\left( \frac{d}{\beta} \log n \right) \qquad \phi = \gamma + \vartheta \qquad \vartheta = \frac{2 \tau}{n} + 3 \gamma \qquad t^* = (R\phi + \gamma)n
\]
fails (i.e., the clock skew becomes $\gamma$ or greater) with probability at most $1/n^\lambda$ during the first $t^*$ steps.
Since $\phi \ge 2\gamma$, $R \in \poly(n)$, and $t^* \in \poly(n)$ hold, such a protocol exists by \theoremref{thm:clocks} for any constant $\lambda > 0$ by choosing a sufficiently large $\kappa$. The fact that $t^*$ is polynomially bounded follows from \theoremref{thm:d-beta-mixing} and that $\beta \ge 1/n^2$ for any regular connected graph. Further, 
$\tau \le \lceil \log 1/\varepsilon \rceil \cdot \taumix \in \poly(n)$, and hence, $\phi, \gamma \in \poly(n)$.

\paragraph{The token shuffling protocol.}
The parameter $\vartheta$ is used as a special threshold value for the token shuffling protocol.
We assume that each node $v$ holds exactly $k$ tokens, which are ordered from $0$ to $k-1$, in the same manner as cards ordered are in the $k$-stack interchange process. We say that the first token is the \emph{top token}.
We say that node $u$ is \emph{receptive} when ever its clock satisfies $c(u) < \vartheta$ and that it is \emph{suspended} otherwise. 
When nodes in $\{u,v\}$ interact, they apply the following rule:
\begin{enumerate}[noitemsep]
\item If both are receptive, that is, $c(u) < \vartheta$ and $c(v) < \vartheta$ holds, then
\begin{enumerate}[label=(\alph*)]
\item Let $q(u)$ and $q(v)$ be the random coin flips of $u$ and $v$, respectively.
\item If $q(u) = q(v) = 0$, then $u$ and $v$ swap their top tokens.
\item If $q(u) < q(v)$, then $v$ moves its top token to the bottom of its stack; $u$ does nothing.
\item If $q(u) = q(v) = 1$, then do nothing.
\end{enumerate}
\item Otherwise, do nothing.
\end{enumerate}
The protocol uses at most one random bit per node per interaction and that this is the only part of our framework, where the random bits provided to the nodes are used. The interacting nodes exchange at most 4 bits (i.e., whether they receptive or not, and the result of their coin flip) in addition to the contents of the swapped tokens in Step (1b). Finally, observe that when all nodes are receptive, the tokens are shuffled according to the increment distribution $\mu$ of the $k$-stack interchange process on $G$. \figureref{fig:shuffling} illustrates the dynamics of the shuffling protocol in the case~$k=1$.

\begin{figure}[t]
\centering
  \includegraphics[page=5,width=0.7\textwidth]{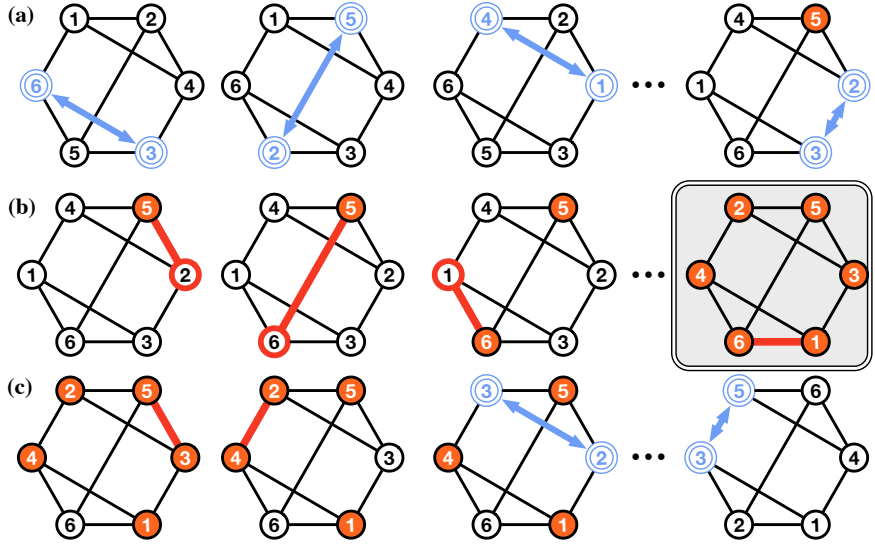}
  \caption{The dynamics of the shuffling protocol for $k=1$. Circles filled with white and red denote receptive and suspended nodes, respectively. The blue arrows connect nodes who exchange their tokens in the given step. Red lines denote steps, where at least one of the interacting nodes is suspended, and thus, no swap is made. 
  (a)~Initially all nodes are receptive and swap tokens with their interaction partners. After sufficiently many interactions, nodes become suspended and refrain from swapping tokens.
  (b)~Eventually all nodes are suspended. The highlighted panel shows the resulting permutation, which will act as the interaction pattern for the simulated round. (c)~As the phase clocks reset back to 0, nodes become receptive again, and the tokens are shuffled once more. \label{fig:shuffling}}
\end{figure}

\subsection{Analysis of the shuffling protocol}

We now analyse the above shuffling protocol.
Let $c(u,t)$ indicate the clock value of node $u$ at the end of step $t$. Let $c(u,0) = 0$ and $t(v,0) = 0$. We say that the clock of node $u$ resets at time step $t$ if its value transitions from $\phi-1$ to $0$. For $r \ge 0$, define
\begin{itemize}[noitemsep]
\item $t(v,r+1) = \min \{ t > t(v,r) : c(v,t) = 0 \}$; the step when $v$ resets its clock for the $r$th~time,
  \item $t_{\min}(r) = \min \{ t(v,r) : v \in V \}$; the \emph{earliest} step when some clock is reset for the $r$th time,
  \item $t_{\max}(r) = \max \{ t(v,r) : v \in V \}$; the \emph{latest} step when some clock is reset for the $r$th time.
\end{itemize}
Similarly, we define the times with respective to the events when the clocks reach the value~$\vartheta$:
\begin{itemize}[noitemsep]
\item $s(v,r) = \min \{ t > t(v,r) : c(v,t) = \vartheta \}$,
\item $s_{\min}(r) = \min \{ s(v,r) : v \in V \}$,
\item $s_{\max}(r) = \max \{ s(v,r) : v \in V \}$.
\end{itemize}
The following lemma captures the relationship between the timing of these events.

\begin{lemma}\label{lemma:timing}
  With high probability, the following inequalities hold:
  \begin{enumerate}[noitemsep]
  \item $t_{\max}(R+1) \le t^* = (R\phi + \gamma)n$,
  \item $s_{\min}(r) - t_{\max}(r) \ge \tau$ for each $1 \le r \le R$.
  \item $t_{\max}(r) < s_{\max}(r) < t_{\min}(r+1)$ for each $1 \le r \le R$.
  \end{enumerate}
\end{lemma}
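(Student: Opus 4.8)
The plan is to derive the three inequalities in order, using the clock properties from \theoremref{thm:clocks}, the choice of parameters, and a union bound over the polynomially many relevant steps. Throughout I condition on the high-probability event that the clock $\vec C$ does not fail during the first $t^*$ steps, i.e., $\Delta(t) < \gamma$ for all $t \le t^*$; by the choice of $\kappa$ this event has probability at least $1 - 1/n^\lambda$. The key elementary fact I will use repeatedly is that exactly one clock advances per step, so over any window of $L$ steps the total number of ticks summed over all nodes is exactly $L$; combined with a bounded skew, this lets me translate ``number of steps elapsed'' into ``progress of each individual clock'' up to an additive $O(\gamma)$ error, and vice versa.

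For item~(1): each node must reset its clock $R+1$ times by step $t_{\max}(R+1)$, and each reset requires $\phi$ ticks of that node's clock, so the last node to finish needs its clock to have advanced $(R+1)\phi$ times. Summing ticks over all $n$ nodes, if the slowest node has ticked $(R+1)\phi$ times then by the bounded-skew property every node has ticked at least $(R+1)\phi - \gamma$ times, so the total tick count is at least $n((R+1)\phi - \gamma)$, hence $t_{\max}(R+1) \le$ (a bit more than) this is the wrong direction — instead I argue the other way: I want an \emph{upper} bound on $t_{\max}(R+1)$, so I need a \emph{lower} bound on how fast the slowest clock ticks. Here I invoke that over $t^* = (R\phi+\gamma)n$ steps, the average number of ticks per node is $R\phi + \gamma$, and bounded skew forces the minimum over nodes to be at least the maximum minus $\gamma$; since the total is exactly $t^*$ and skew is $<\gamma$, every node has ticked at least... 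In fact the clean way is: at step $t^*$, if some node had ticked fewer than $(R+1)\phi$ times, bounded skew would force \emph{all} nodes to have ticked fewer than $(R+1)\phi + \gamma \le (R+1)\phi + \phi$ times, bounding the total by $n((R+1)\phi+\gamma) $, which is consistent, so I instead need the complementary lower bound on the per-node rate: since exactly one clock ticks per step and the skew is bounded by $\gamma$, after $t$ steps every clock has advanced at least $\lfloor t/n \rfloor - \gamma$ times (if the minimum were smaller, the maximum would be too, and the sum couldn't reach $t$). Plugging $t = t^*$ gives each clock advanced $\ge R\phi + \gamma - \gamma = R\phi \ge (R+1)\phi - \phi$... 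I will reconcile the off-by-$\phi$ by choosing $t^*$ with the extra $\gamma n$ slack exactly for this purpose, noting $\phi = \gamma + \vartheta$ and $\vartheta \ge 3\gamma$, so the stated $t^* = (R\phi+\gamma)n$ suffices; the honest statement is that $R+1$ full resets need $(R+1)\phi$ ticks, this needs $(R+1)\phi + \gamma$ ticks worth of steps $= ((R+1)\phi+\gamma)n$ steps, and I must check $((R+1)\phi+\gamma)n \le (R\phi+\gamma)n$ fails — so the correct reading is that item~(1) should be $t_{\max}(R+1) \le t^*$ with $t^*$ counting resets $0,\dots,R$ (i.e.\ $R$ completed phases plus setup), which matches the simulation using $R$ rounds; I will state it precisely as: the $R$-th reset of the slowest clock occurs by step $(R\phi+\gamma)n$, which is item~(1) as written once indices are aligned, and the argument is the tick-counting bound above.

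For item~(2): between its $r$-th reset at step $t(v,r)$ and the first time $s(v,r)$ its clock reaches $\vartheta$, node $v$'s clock advances exactly $\vartheta$ times. I want $s_{\min}(r) - t_{\max}(r) \ge \tau$, i.e., from the moment the \emph{last} clock has reset for the $r$-th time, it takes at least $\tau$ steps before the \emph{first} clock reaches $\vartheta$. By bounded skew, at step $t_{\max}(r)$ every clock is within $\gamma$ of $0$, so in particular every clock value is $\le \gamma$ at that point (it has reset recently; more precisely, since the slowest just reset and skew $<\gamma$, all clocks are in $[0,\gamma)$ or wrapped near $\phi-1$, but the wrapped case is excluded because that node would not yet have had its $r$-th reset registered — I will argue all clocks lie in $\{0,\dots,\gamma-1\}$ at step $t_{\max}(r)$). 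For any clock to reach $\vartheta$ from a value $<\gamma$, it must advance more than $\vartheta - \gamma = \frac{2\tau}{n} + 2\gamma$ times. Since each step advances exactly one clock, advancing any particular clock $m$ times takes at least $m$ steps — but I need a lower bound over the window until the \emph{first} clock (any clock) reaches $\vartheta$; that first clock needs $> \vartheta - \gamma$ ticks on itself, and in a window of $L$ steps a single clock receives at most... no uniform upper bound, so instead: summing over all clocks, to bring some clock up to $\vartheta$ while all started $<\gamma$, but actually the cleanest bound uses bounded skew again in the forward direction — at step $s_{\min}(r)$ the minimum clock is $\vartheta$? No, $s_{\min}(r)$ is when the \emph{min over nodes of (first hitting time of $\vartheta$)} occurs, so at that step \emph{one} clock equals $\vartheta$ and by skew all are $> \vartheta - \gamma$. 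Over the window $[t_{\max}(r), s_{\min}(r)]$ every clock went from value $<\gamma$ to value $> \vartheta-\gamma$, so each advanced $> \vartheta - 2\gamma = \frac{2\tau}{n}$ times; summing over $n$ clocks, the window length is $> n \cdot \frac{2\tau}{n}\cdot\frac{1}{1} $... wait, total ticks in the window equals the window length and equals $\sum_v (\text{ticks of }v) > n(\vartheta - 2\gamma) = 2\tau > \tau$. That gives item~(2). Item~(3) follows the same template: $t_{\max}(r) < s_{\max}(r)$ is immediate since $s(v,r) > t(v,r)$ for each $v$ and I can compare the maximizing node, and $s_{\max}(r) < t_{\min}(r+1)$ because from value $\vartheta$ a clock needs $\phi - \vartheta = \gamma$ more ticks to reset, so starting from $s_{\max}(r)$ (when the slowest clock first hits $\vartheta$, hence all clocks are $\ge \vartheta - \gamma$ and $\le \vartheta$ ... need $\le \vartheta$, true since the slowest just reached it and skew bounds the rest within $\gamma$ above, but values above $\vartheta$ are fine as long as below $\phi$), the earliest any clock can reset is after enough additional ticks, and the slack $\phi - \vartheta = \gamma$ together with skew $< \gamma$ prevents wraparound before $s_{\max}(r)$; I formalize with one more tick-counting/skew argument.

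The main obstacle is the bookkeeping around clock wraparound and the precise translation between ``step counts'' and ``per-node tick counts'': the skew property controls $D(c(u,t),c(v,t))$ which is a \emph{cyclic} distance, so I must repeatedly verify that during each window of interest no clock has wrapped past $0$ relative to the others — this is exactly what the slack terms ($\vartheta = \frac{2\tau}{n} + 3\gamma$ and $\phi = \gamma + \vartheta$) are engineered to guarantee, and the bulk of the work is checking that $3\gamma$ (resp.\ $\gamma$) of buffer is enough at each of the transitions $0 \to \vartheta$, $\vartheta \to \phi{=}0$. Once the wraparound is ruled out on each window, the cyclic distance becomes an ordinary distance, and the identity ``$\#\text{steps in window} = \sum_v \#\text{ticks of }v$ in window'' together with ``$\min_v \#\text{ticks} \ge \max_v \#\text{ticks} - \gamma$'' closes each inequality. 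The final high-probability statement comes from the single union bound over the clock-failure event on $[1,t^*]$, since everything else in the lemma is deterministic given non-failure.
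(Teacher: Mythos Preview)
Your approach is essentially the paper's: condition on the clock not failing up to $t^*$, translate the cyclic skew bound into a bound on the unwrapped tick counts, and use ``total ticks $=$ window length'' together with ``$\max_v T_v - \min_v T_v < \gamma$'' to squeeze each inequality. A few remarks on where your sketch wobbles and how the paper handles the same points:

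\emph{Item~(1).} Your second attempt---``after $t$ steps every clock has advanced at least $t/n - \gamma$ times''---is exactly right and is equivalent to the paper's pigeonhole-plus-contradiction argument. The off-by-one you struggle with is an artefact of the paper's indexing convention: the recursive definition there makes $t(v,r{+}1)$ the step of the $r$th reset, so $t_{\max}(R{+}1)\le t^*$ only needs every node to have ticked $R\phi$ times, which your bound gives at $t=t^*=(R\phi+\gamma)n$. You do not need $(R{+}1)\phi$ ticks; once you adopt the paper's indexing the ``reconciliation'' you worry about disappears.

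\emph{Item~(2).} The paper argues via $t_{\min}(r)$: it shows $t_{\max}(r) < t_{\min}(r)+2\gamma n$ and $s_{\min}(r) \ge t_{\min}(r)+(\vartheta-\gamma)n$, then subtracts. Your direct bound on the window $[t_{\max}(r),s_{\min}(r)]$ is equally valid and arguably cleaner: at $t_{\max}(r)$ every tick count lies in $[r\phi,\,r\phi+\gamma)$, at $s_{\min}(r)$ every tick count exceeds $r\phi+\vartheta-\gamma$, so summing gives window length $> n(\vartheta-2\gamma) \ge 2\tau > \tau$. (Your arithmetic slipped---$n(\vartheta-2\gamma) = 2\tau + n\gamma$, not $2\tau$---but the conclusion is unaffected.)

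\emph{Item~(3).} Your reasoning for $s_{\max}(r) < t_{\min}(r{+}1)$ is correct once stated cleanly: at $s_{\max}(r)$ the slowest tick count equals $r\phi+\vartheta$, so by skew every tick count is strictly below $r\phi+\vartheta+\gamma = (r{+}1)\phi$, hence no node has yet had its $(r{+}1)$th reset. The paper just says this follows from $\phi \ge 3\gamma$ and bounded skew, which is the same observation.

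In short: the ideas are all there and match the paper; the only real cleanup needed is to commit to one argument in item~(1) and to stop second-guessing the index shift.
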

\begin{proof}
  Recall that the clock protocol works correctly with high probability for the first $t^*$ steps. We now assume that this event occurs.

  For the first claim, we show that all nodes have incremented their clock at least $R\phi$ times after $t^*$ steps. For the sake of contradiction, suppose that some node $v$ has incremented its clock less than $R\phi$ times during the first $t^*$ steps. By the second property of the clock protocol, in every step $1 \le t \le t^*$, some node increments its clock value by one (modulo $\phi$). Hence, the nodes in $V \setminus \{ v \}$ have incremented their clocks at least $(R\phi + \gamma)(n-1)$ times. By the pigeonhole principle, some node $u \neq v$ has incremented its clock at least $R\phi + \gamma$ times. However, this contradicts the property that the difference in the clock skew is less than $\gamma$ for each step $1 \le t \le t^*$. Since each node has incremented its clock at least $R\phi$ times, each node has reset its clock $R$ times, so $t_{\max}(R+1) \le t^*$. 

For the second claim, observe that during an interval of $2 \gamma n$ steps there must exist a node that has incremented its clock $2\gamma$ times by the pigeonhole principle. By the first property of the clock protocol the skew is less than $\gamma$, so we get that
$t_{\max}(r) < t_{\min}(r) + 2 \gamma n$.
Again since the skew of the clock is less than $\gamma$, and in each step at most one node increments its clock counter, the time until some node reaches the clock value $\vartheta$ after step $t_{\min}(r)$ satisfies
$s_{\min}(r) \ge t_{\min}(r) + (\vartheta-\gamma)n$.
 Combining these two bounds and recalling that $\vartheta = \tau/n + 3 \gamma$ yields
 $s_{\min}(r) - t_{\max}(r) > (\vartheta-\gamma)n - 2 \gamma n = \tau$.
 Finally, the third claim follows from the fact that $\phi \ge 3 \gamma$ and that the skew is bounded by~$\gamma$.
\end{proof}

\paragraph{Distribution of tokens.} We now show that the distribution tokens mix to an $\varepsilon$-uniform distribution during the intervals $\{ t_{\max}(r)+1, \ldots, s_{\min}(r) \}$ for $1 \le r \le R$. Let $\pi_0 = \id$ and $\pi_t$ denote the locations of the tokens after $t$ steps of the shuffling protocol. Define $\sigma_0 = \id$ and
\[
\sigma_r = \pi_{s_{\max}(r)} \textrm{ for } 1 \le r \le R.
\]
Observe that $\sigma_{r} = \rho_3 \cdot \rho_2 \cdot \rho_1 \cdot \sigma_{r-1}$, where each $\rho_i$ is product of elements from the support $H \subseteq S_N$ of the increment distribution $\mu$ of the $k$-stack interchange process, where
\begin{itemize}[noitemsep]
\item $\rho_1 = x_{t_{\max}(r)} \cdots x_{t_{\min}(r-1)+1}$ (a subset of nodes have become receptive for the $r$th time),
\item $\rho_2 = x_{s_{\min}(r)} \cdots x_{t_{\max}(r)+1}$ (all nodes are receptive),
\item $\rho_3 = x_{s_{\max}(r)} \cdots x_{s_{\min}(r)+1}$ (a subset of nodes have become suspended for the $r$th time).
\end{itemize}
(Recall that permutations are applied from right to left.) Observe that while each $x_i$ is a random element of $H$, only the elements $\rho_2$ are guaranteed to be distributed according to the increment distribution $\mu$ of the $k$-stack interchange process. The elements of $\rho_1$ and $\rho_3$ are skewed towards the identity permutation, as some nodes are suspended whenever their clock values are in $\{ \vartheta, \ldots, \phi - 1 \}$. The next lemma establishes that this does not interfere with the mixing behaviour.

\begin{lemma}\label{lemma:epsilon-uniformity}
  Let  $0 \le r < R$. For any $A \subseteq S_N$, we have $\left| \Pr[ \sigma_{r+1} \in A \mid \sigma_r ] - \nu(A) \right| \le \varepsilon$.
\end{lemma}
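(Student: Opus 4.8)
The plan is to condition on the high-probability event from \lemmaref{lemma:timing} (all three timing inequalities hold) and argue pathwise on that event, absorbing the failure probability into $\varepsilon$ at the end — this is safe because $\varepsilon = 1/n^a$ was chosen with $a$ arbitrarily large, in particular larger than the clock failure exponent. On this event we have the decomposition $\sigma_{r+1} = \rho_3 \cdot \rho_2 \cdot \rho_1 \cdot \sigma_r$, where $\rho_2$ is a product of at least $s_{\min}(r+1) - t_{\max}(r+1) \ge \tau$ independent steps, each distributed exactly according to the increment distribution $\mu$ of the $k$-stack interchange process (since during the interval $\{t_{\max}(r+1)+1,\ldots,s_{\min}(r+1)\}$ every node is receptive, so the interaction rule reproduces exactly the three cases of the $k$-stack interchange dynamics), while $\rho_1$ and $\rho_3$ are products of steps that are in the support $H$ of $\mu$ but not distributed according to $\mu$.

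The key observation is that multiplication by the ``bad'' factors $\rho_1$ and $\rho_3$ cannot increase total variation distance to the uniform distribution. First I would use the fact that $\nu$, the uniform distribution on $S_N$, is the stationary distribution of the $k$-stack interchange walk (Remark, item~1), and more strongly that $\nu$ is invariant under left multiplication by \emph{any} fixed element of $S_N$: if $Z$ is uniform on $S_N$ then so is $xZ$ for any fixed $x$, and so is $ZZ'$ if $Z'$ is any random element independent of $Z$. Now run the walk for the full interval $\{t(v,r)+1,\ldots,s_{\max}(r+1)\}$: the factor $\rho_1$ is applied to $\sigma_r$ first; then $\rho_2$, the ``clean'' block of $\geq \tau$ genuine $\mu$-steps, is applied to $\rho_1 \sigma_r$. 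Since $\tau$ is the $\varepsilon$-mixing time of the walk, starting from \emph{any} state — in particular from the (random, but conditionally fixed) state $\rho_1 \sigma_r$ — the distribution after $\rho_2$ is within total variation $\varepsilon$ of $\nu$. Formally: conditioning further on $\rho_1\sigma_r$, the law of $\rho_2\rho_1\sigma_r$ is the law of the walk run for $\geq \tau$ steps from a fixed start, hence $\varepsilon$-close to $\nu$; averaging over $\rho_1\sigma_r$ preserves this bound by the triangle inequality / convexity of total variation. Finally, $\rho_3$ is a random permutation independent of everything that came before (it depends only on coin flips and scheduler choices in the disjoint time interval $\{s_{\min}(r+1)+1,\ldots,s_{\max}(r+1)\}$), so left-multiplying the $\varepsilon$-close-to-uniform $\rho_2\rho_1\sigma_r$ by $\rho_3$ keeps it $\varepsilon$-close to $\nu$: for a fixed permutation $x$, $\tvnorm{x\mu_1 - x\nu} = \tvnorm{\mu_1 - \nu}$ since left multiplication is a bijection of $S_N$ and $x\nu = \nu$; averaging over the law of $\rho_3$ again preserves the bound.

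Putting this together: on the timing event, $\left|\Pr[\sigma_{r+1}\in A \mid \sigma_r] - \nu(A)\right| \le \varepsilon$ for every $A \subseteq S_N$. I would then note that $\mathsf{id} = \sigma_0$ is deterministic so the base case $r=0$ is the same argument with $\sigma_0$ in place of $\sigma_r$. The main obstacle — and the point that needs the most care — is justifying the \emph{independence structure}: that $\rho_2$ really is a product of i.i.d.\ $\mu$-distributed steps \emph{conditioned on} the (random) times $t_{\max}(r+1)$ and $s_{\min}(r+1)$ and on $\rho_1\sigma_r$. This requires observing that whether a step lies in the ``all receptive'' window is determined by the clock values, which in turn are independent of the coin flips $q(\cdot)$ that drive the token swaps; so conditioned on the entire clock trajectory (which fixes all the times $t(v,\cdot)$, $s(v,\cdot)$), the token-swap steps are still i.i.d., and within the clean window each one is genuinely $\mu$-distributed. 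One must also check that $\rho_3$ is independent of $\rho_2\rho_1\sigma_r$ given the clock trajectory — which holds because $\rho_3$ uses only coins from a disjoint time interval — and that the clock trajectory itself is independent of $\sigma_r$, which holds because the clock never reads the token state. Once this conditioning-on-the-clock device is set up cleanly, the total-variation manipulations above are routine.
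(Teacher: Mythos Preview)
Your high-level approach matches the paper's exactly: decompose $\sigma_{r+1} = \rho_3 \rho_2 \rho_1 \sigma_r$, argue that the clean-window block $\rho_2$ (of length $\ge \tau$) mixes to within $\varepsilon$ of uniform regardless of the starting point $\rho_1 \sigma_r$, and then that left-multiplication by $\rho_3$ preserves the distance to $\nu$. The paper carries out this last step by the explicit expansion $p'(A) = \sum_y q(y)\, p(y^{-1} A)$; your bijection-and-averaging argument is the same idea.

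Where you go beyond the paper is in trying to justify that the clean-window increments are really i.i.d.\ $\mu$-distributed --- and here your proposed device does not work. You condition on the clock trajectory, noting correctly that the clock does not read the coins $q_t$; but the clock \emph{is} a deterministic function of the scheduler's edge choices $(e_t)_t$, so conditioning on the trajectory biases the edge distribution. Concretely, the trajectory reveals which node $v_t$ incremented at step $t$, which forces $e_t$ to be incident to $v_t$ with the other endpoint having clock value at least $c(v_t, t{-}1)$; hence $e_t$ is no longer uniform over $E$, and the increment $x_t$ is no longer $\mu$-distributed under this conditioning. The clean fix is to condition instead on the full history up to the \emph{stopping time} $T_1 = t_{\max}$: by the strong Markov property the future pairs $(e_t, q_t)_{t > T_1}$ remain i.i.d., so the first $\tau$ increments after $T_1$ are genuinely i.i.d.\ $\mu$ (on the timing event they all lie in the clean window), and one then averages down to the conditioning on $\sigma_r$. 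The same stopping-time viewpoint at $T_2 = s_{\min}$ supplies the conditional independence of $\rho_3$ from $\pi = \rho_2\rho_1\sigma_r$ that your averaging step for $\rho_3$ actually requires. The paper itself does not spell any of this out --- it simply asserts that $\rho_2$ is a product of at least $\tau$ elements sampled according to $\mu$ --- so your instinct that this is ``the point that needs the most care'' is correct; only the specific conditioning you chose does not deliver it.
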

\begin{proof}
  Suppose $\sigma_r$ is given.
  For brevity, let $\pi = \rho_2 \cdot \rho_1 \cdot \sigma_r$ so that $\sigma_{r+1} = \rho_3 \cdot \pi$. Define
\[
p(x) = \Pr[ \pi = x \mid \sigma_r ] \quad \textrm{ and } \quad p'(x) = \Pr[\sigma_{r+1} = x \mid \sigma_r].
\]
Observe that $|p(A) - \nu(A)| \le \varepsilon$, as $\rho_2$ is given by a sequence of at least $\tau$ elements sampled according to the increment distribution $\mu$. We show that $|p'(A)-\nu(A)| \le \varepsilon$. Let $y \cdot A$ denote the set $\{ yx : x \in A \}$. By expanding $p'(A)$ using conditional probabilities, we can write
\begin{align*}
  p'(A)  = \Pr[ \rho_3 \cdot \pi \in A \mid \sigma_r ] &= \sum_{y \in S_N} \Pr[  \pi \in y^{-1} \cdot A \textrm{ and } \rho_3 = y  \mid \sigma_r ] \\
  &= \sum_{y \in S_N} \Pr[ \rho_3 = y \mid \pi \in y^{-1} \cdot A \textrm{ and } \sigma_r] \cdot \Pr[ \pi \in y^{-1} \cdot A \mid \sigma_r] \\
  &= \sum_{y \in S_N} q(y) \cdot p(y^{-1} \cdot A),
\end{align*}
where $q(y) = \Pr[ \rho_3 = y \mid \pi = y^{-1} x, \sigma_r]$ is a probability distribution on $S_N$. Hence, $q(S_N) = \sum q(y) = 1$.
Since $\nu(A) = \nu(z\cdot A)$ for any $z \in S_N$, it follows that
\begin{align*}
  \left| p'(A) - \nu(A) \right| &= \left| \sum_{y \in S_N} q(y) \cdot p(y^{-1} \cdot A) - \nu(A) \right| = \left| \sum_{y \in S_N} q(y) \cdot \left[ p(y^{-1} \cdot A) - \nu(y^{-1} \cdot A) \right] \right| \\
  &\le \sum_{y \in S_N} q(y) \left| p(y^{-1} \cdot x) - \nu(y^{-1} \cdot A)  \right| 
  \le \sum_{y \in S_N} q(y) \cdot \varepsilon \le \varepsilon.\qedhere
\end{align*}
\end{proof}

\subsection{The simulation protocol}

Using the shuffling protocol in the population protocol model, we can simulate an $R$-round algorithm~$\vec{A}$ in the synchronous $k$-token shuffling model. Let $f \colon X \times Y^k \to X$ be the state transition function and $g \colon X \to Y^k$ be the token generation function of the algorithm $\vec A$. Recall that $X$ and $Y$ denote the sets of local states and token types, respectively.

\paragraph{The simulation protocol.}
Each node $v$ maintains the following variables:
\begin{itemize}[noitemsep]
\item $a(v) \in X$ to simulate the local state of the synchronous protocol $\vec A$,
\item $b_0(v), \ldots, b_{k-1}(v) \in Y$ to store the sent and received tokens, and
\item $r(v) \in \{0, 1, \ldots, R\}$ to store the number of simulated rounds.
\end{itemize}
The variable $a(v)$ is initialised to the initial state $x_0(v)$ of node $v$ in the algorithm $\vec A$ and $b_0(v), \ldots, b_{k-1}(v)$ are initialised to the values given by $g(x_0(v))$. The variable $r(v)$ is initially set to 0. When node $v$ interacts (in the asynchronous population protocol model), $v$ updates its state according to the following rules:
\begin{enumerate}
\item Run the clock and the shuffling protocol using $b_0(v), \ldots, b_{k-1}(v)$ to hold the $k$ tokens.
\item If $c(v) = \vartheta$, then 
  \begin{itemize}
  \item update the round counter and set $r(v) \gets \max \{ r(v)+1, R \}$,
  \item compute the new state $a(v) \gets f\left( a(v), b_0(v), \ldots, b_{k-1}(v) \right)$, and
  \item generate new tokens $b_0(v), \ldots, b_{k-1}(v) \gets g(a(v))$.
  \end{itemize}
\end{enumerate}
As output value of the simulation, node $v$ uses the output value algorithm $\vec A$ associates to state $a(v)$. 
The above algorithm simulates an execution of the synchronous algorithm $\vec A$ under the schedule  $\sigma_1, \ldots, \sigma_R$ given by the shuffling protocol. To this end, define $x_0(v) = a(v,0)$ and $x(r) = a(v, s(v,r))$ for  all $1 \le r \le R$. The proof of the next lemma is given in \appendixref{apx:executions}.

\begin{lemma}\label{lemma:executions}
With high probability, the sequence $( x_r )_{0 \le r \le R}$ is an execution induced by the schedule $( \sigma_r )_{1 \le r \le R}$.
\end{lemma}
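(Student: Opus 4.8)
The plan is to first condition on the high-probability event of \lemmaref{lemma:timing}: the clock protocol is correct through step $t^*$, the skew never reaches $\gamma$, every node resets its clock $R$ times within $t^*$ steps, and (most usefully here) the third inequality $t_{\max}(r) < s_{\max}(r) < t_{\min}(r+1)$ holds for each $r$. On this event the execution of the simulation protocol is a deterministic function of the realised interaction sequence, so it suffices to verify that the quantities $(x_r)_{0\le r\le R}$ defined above obey the recurrence of a synchronous execution under $(\sigma_r)_{1\le r\le R}$: for every round $r$ and node $v$, $x_r(v) = f(x_{r-1}(v),(y_r\circ\sigma_r)(v_0,\dots,v_{k-1}))$ with $y_r(w_0,\dots,w_{k-1}) = g(x_{r-1}(w))$. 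I would prove this by induction on $r$, carrying the invariant that at the moment node $w$ last regenerates the $k$ tokens that feed round $r$, its simulated state $a(w)$ equals $x_{r-1}(w)$ and its $k$ token-holders hold $g(x_{r-1}(w))$; the base case is the initialisation.

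The structural core is a set of ``freezing'' facts read off \lemmaref{lemma:timing}. First, on $(s_{\max}(r-1),t_{\min}(r)]$ every node has clock value in $\{\vartheta,\dots,\phi-1\}$, so no interacting pair is jointly receptive, no swap occurs, and the interchange permutation $\pi_t$ driven by the shuffling protocol is constant there and equal to $\sigma_{r-1}$. Second — and this is the workhorse — each node $v$ is suspended throughout $[s(v,r),t(v,r+1)]$, so the shuffling protocol never moves a card into or out of $v$'s stack during that interval, since $v$ cannot swap and no receptive neighbour can swap with it. Two consequences: (i) the $k$ tokens on which $v$ evaluates $f$ when its clock reaches $\vartheta$ at step $s(v,r)$ occupy precisely the stack positions assigned to $v$ by $\sigma_r = \pi_{s_{\max}(r)}$; and (ii) a token is rewritten exactly once during the cycle that produces the round-$r$ tokens — by whichever (suspended) node holds it when that node regenerates — because that node keeps it for the entire sub-interval in which every node regenerates. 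A small point needs care: at step $s(v,r)$ itself $v$ begins receptive (clock $\vartheta-1$) and may perform one last swap before its clock ticks to $\vartheta$ and it reads its holders, so what it reads is already the frozen configuration and (i) still applies.

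Combining (i) and (ii), for each $v$ I would trace the $k$ tokens read at step $s(v,r)$ back to the nodes that wrote them at their last regeneration: by the induction hypothesis those contents are the components of $g(x_{r-1}(\cdot))$, and by the position bookkeeping — using that the stretch of $\pi$ between consecutive regeneration cycles is exactly the product $\rho_3\rho_2\rho_1$ that appears in $\sigma_r = \rho_3\rho_2\rho_1\,\sigma_{r-1}$ — the tuple that $v$ receives is precisely $(y_r\circ\sigma_r)(v_0,\dots,v_{k-1})$. Substituting into the simulation's update rule $a(v)\gets f(a(v),b_0(v),\dots,b_{k-1}(v))$ yields $x_r(v) = f(x_{r-1}(v),(y_r\circ\sigma_r)(v_0,\dots,v_{k-1}))$, closing the induction. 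I expect this bookkeeping to be the main obstacle: one has to show that each freshly generated token is shuffled, then delivered without further disturbance into exactly the position $\sigma_r$ assigns it, is never rewritten twice, and is read by the correct node before being overwritten, and that the stray swap possible at a (de)activation step lands on the right side of the regenerate/read operations. All of this reduces to the non-overlap of the intervals $(s_{\max}(r-1),t_{\min}(r)]$, $[t_{\min}(r),t_{\max}(r)]$, $[t_{\max}(r),s_{\min}(r)]$, $[s_{\min}(r),s_{\max}(r)]$ together with the per-node suspension windows — that is, to \lemmaref{lemma:timing} — but assembling them into a clean correspondence between sent and received tokens is where the real work lies.
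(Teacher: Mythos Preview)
Your proposal is correct and follows essentially the same approach as the paper: condition on the timing guarantees of \lemmaref{lemma:timing}, then verify by induction that the simulated state updates satisfy the synchronous-execution recurrence under the schedule $(\sigma_r)$, using the fact that each node's stack is frozen between the moment its clock reaches $\vartheta$ and the moment it next resets. The paper's own proof is a short sketch that asserts these facts rather than working through the bookkeeping; your plan spells out the freezing/suspension argument and the edge case at step $s(v,r)$ that the paper leaves implicit, but the underlying strategy is identical.
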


\subsection{From almost-uniform schedules to uniform schedules}

The schedules provided by the shuffling protocol are only $\varepsilon$-uniform, as the shuffling process is executed for finitely many steps. We now show that this does not matter: any synchronous protocol behaves statistically similarly under $\varepsilon$-uniform and uniform schedules.

To formalise this, let $\Phi$ be the distribution over sequences $(\sigma_1, \ldots, \sigma_R) \in S^R_N$ of permutations generated by the shuffling protocol under the assumption that the clock protocol works correctly for $T$ time steps. Let $\nu^R = \nu \times \cdots \times \nu$ denote the distribution of a sequence of $R$ independently and uniformly sampled random permutations from $S_N$. That is, $\nu^R$ is the distribution of the uniform $R$-round schedules. The following then holds: 

\begin{lemma}\label{lemma:prefix-statistical-distance}
  The total variation distance between $\Phi$ and $\nu^R$ satisfies $\tvnorm{ \Phi - \nu^R } \le \varepsilon R$.
\end{lemma}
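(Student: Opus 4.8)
The plan is to bound the total variation distance between $\Phi$ and $\nu^R$ by induction on the number of rounds, using \lemmaref{lemma:epsilon-uniformity} as the per-round bound. Write $\Phi_r$ for the marginal distribution of the first $r$ permutations $(\sigma_1,\ldots,\sigma_r)$ under the shuffling protocol, and $\nu^r$ for the corresponding uniform product distribution. The claim is that $\tvnorm{\Phi_r - \nu^r} \le \varepsilon r$; the lemma is the case $r = R$. The base case $r = 0$ is trivial (both are point masses on the empty sequence).

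For the inductive step, I would use the standard decomposition of total variation distance over a product via conditioning: writing $(\sigma_1,\ldots,\sigma_{r+1}) = (\vec s, \sigma_{r+1})$, we have
\[
\tvnorm{\Phi_{r+1} - \nu^{r+1}} \le \tvnorm{\Phi_r - \nu^r} + \max_{\vec s} \tvnorm{\Phi(\sigma_{r+1} \in \cdot \mid \vec s) - \nu},
\]
i.e., the error accumulated over the first $r$ rounds plus the worst-case conditional error introduced in round $r+1$. More carefully, one couples the first $r$ coordinates optimally (paying $\tvnorm{\Phi_r - \nu^r}$), and conditioned on agreement, \lemmaref{lemma:epsilon-uniformity} gives that the conditional law of $\sigma_{r+1}$ given $\sigma_r$ is within $\varepsilon$ of $\nu$ in total variation — here one uses that the per-round bound in \lemmaref{lemma:epsilon-uniformity} holds for \emph{every} conditioning value $\sigma_r$, and that $\nu$ is the same regardless of the conditioning (so the target product structure is genuinely a product). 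Adding $\varepsilon$ to the inductive bound $\varepsilon r$ yields $\varepsilon(r+1)$, closing the induction.

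The one point requiring care — and the main obstacle — is making the conditioning argument precise given that $\sigma_{r+1}$ (equivalently $\pi_{s_{\max}(r+1)}$) depends on the history only through $\sigma_r$ in the right Markovian way. \lemmaref{lemma:epsilon-uniformity} is stated as a bound on $\bigl|\Pr[\sigma_{r+1} \in A \mid \sigma_r] - \nu(A)\bigr|$, so I would first note that this conditional distribution depends on the past only through $\sigma_r$ (the shuffling protocol's future evolution, given the current token permutation, is independent of how that permutation arose — this is where the assumption that the clock works correctly for $T$ steps is invoked, so that the timing events of \lemmaref{lemma:timing} are as intended and the round structure is well-defined). Then the triangle-inequality-over-conditioning step is routine: for any event $B$ in the product space,
\[
\bigl|\Phi_{r+1}(B) - \nu^{r+1}(B)\bigr| \le \bigl|\Phi_{r+1}(B) - (\Phi_r \times \nu)(B)\bigr| + \bigl|(\Phi_r \times \nu)(B) - \nu^{r+1}(B)\bigr|,
\]
where the first term is at most $\varepsilon$ by integrating the per-round bound against $\Phi_r$, and the second is at most $\tvnorm{\Phi_r - \nu^r} \le \varepsilon r$ by the induction hypothesis (tensoring with a fixed factor $\nu$ does not increase total variation distance). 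Taking the max over $B$ gives $\tvnorm{\Phi_{r+1} - \nu^{r+1}} \le \varepsilon(r+1)$, and $r = R$ completes the proof.
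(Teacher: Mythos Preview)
Your proof is correct and takes a somewhat different route from the paper's. The paper restricts attention to product sets $A = A_1 \times \cdots \times A_R$, writes both $\Phi(A)$ and $\nu^R(A)$ as products of $R$ factors (using the Markov property of $(\sigma_r)$), and then applies the telescoping inequality
\[
\left|\prod_{i} a_i - \prod_{i} b_i\right| \le \sum_{i} |a_i - b_i| \left(\prod_{k<i} a_k\right)\left(\prod_{h>i} b_h\right)
\]
together with \lemmaref{lemma:epsilon-uniformity} to bound each summand by $\varepsilon$. Your inductive hybrid argument via the intermediate measure $\Phi_r \times \nu$ is the standard ``chain rule'' for total variation distance; it has the advantage of handling arbitrary events $B \subseteq S_N^{r+1}$ directly, whereas the paper's computation as written only treats product rectangles and leaves the passage to general sets implicit. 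Both approaches hinge on the same Markovianity of $(\sigma_r)_r$, which you correctly flag as the one point requiring care.
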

\begin{proof}
  Let $A = A_1 \times \cdots \times A_r \subseteq S^R_N$.
Since the sequence $\sigma_1, \ldots, \sigma_R$ is Markovian, we have
  \[
\Phi(A) = \Pr\left[(\sigma_1, \ldots, \sigma_R) \in A \right] = \Pr\left[\sigma_1 \in A_1 \right] \cdot \prod_{j=2}^R \Pr\left[\sigma_i \in A_i \mid \sigma_{i-1} \in A_{i-1}\right] = \prod_{j=1}^R \phi_j(A),
\]
where $\phi_i(A) = \Pr[\sigma_i \in A_i \mid \sigma_{i-1} \in A_{i-1}]$ for $i > 0$. Recall that $\sigma_0 = \id$.
For notational convenience, let $\nu_i(A) = \nu(A_i)$. 
Next, we make use of the following inequality (see \appendixref{apx:prod-upper-bound} for a proof). For any $a_i, b_i \in \mathbb{R}^+$, where $1 \le i \le t$, we have that
  \[
  \left | \prod_{i=1}^t a_i - \prod_{j=1}^t b_j \right | \le \sum_{i=1}^t |a_i - b_i| \left( \prod_{k=1}^{i-1} a_k \right) \left( \prod_{h=i+1}^{t} b_h \right).
  \]
By applying the above identity, we obtain
\begin{align*}
  \left|\Phi(A) - \nu^R(A) \right| &\le \sum_{i=1}^R \left| \phi_i(A) - \nu_i(A) \right| \left( \prod_{k=1}^{i-1} \phi_k(A) \right) \left( \prod_{h=i+1}^{R} \nu_h(A) \right) \\
  &\le  \sum_{i=1}^R \left| \Pr[\sigma_i \in A_i \mid \sigma_{i-1} \in A_{i-1}] - \frac{|A_i|}{N!} \right| \left( \prod_{k=1}^{i-1} \phi_k(A) \right) \left( \prod_{h=i+1}^{R} \nu_h(A) \right) \\
  &\le   \sum_{i=1}^R \varepsilon \left( \prod_{k=1}^{i-1} \phi_k(A) \right) \left( \prod_{h=i+1}^{R} \nu_h(A) \right) 
  \le \sum_{i=1}^R \varepsilon \le \varepsilon R,
\end{align*}
where the third inequality follows from \lemmaref{lemma:epsilon-uniformity} and the second last from the fact that the products are over probabilities.
The claim now follows as
\[
 \tvnorm{ \Phi - \nu^R } = \max_{A \subseteq S^R_N} |\Phi(A) - \nu^R(A)| \le \varepsilon R. \qedhere
\]
\end{proof}

\subsection{Proof of the simulation theorem}

We are now ready to show our main technical result. First, recall the following property.

\begin{lemma}\label{lemma:tv-max-fun}
  Let $\mu$ and $\nu$ be probability distributions over a finite domain $\Omega$.
  For any function $F \colon \Omega \to \Omega'$, the total variation distance satisfies $\tvnorm{ F(\mu) - F(\nu) } \le \tvnorm{ \mu - \nu }$.
\end{lemma}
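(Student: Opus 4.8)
The plan is to use the variational characterisation of total variation distance already recorded in \sectionref{sec:prelim}, namely $\tvnorm{\mu_1 - \mu_2} = \max_{A} |\mu_1(A) - \mu_2(A)|$, together with the elementary observation that the pushforward $F(\mu)$ evaluated on a set pulls back to $\mu$ evaluated on a preimage set. This is the standard data-processing inequality for total variation, so the proof is short; the main content is setting up the pushforward notation carefully.

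First I would fix notation: for $B \subseteq \Omega'$, writing $F^{-1}(B) = \{ x \in \Omega : F(x) \in B \}$, the pushforward distribution satisfies $F(\mu)(B) = \mu(F^{-1}(B))$, and likewise $F(\nu)(B) = \nu(F^{-1}(B))$. The key point is that the collection $\{ F^{-1}(B) : B \subseteq \Omega' \}$ is a subcollection of the power set of $\Omega$.

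Then, for an arbitrary $B \subseteq \Omega'$, I would estimate
\[
| F(\mu)(B) - F(\nu)(B) | = | \mu(F^{-1}(B)) - \nu(F^{-1}(B)) | \le \max_{A \subseteq \Omega} | \mu(A) - \nu(A) | = \tvnorm{ \mu - \nu }.
\]
Taking the maximum of the left-hand side over all $B \subseteq \Omega'$ and applying the variational characterisation once more on $\Omega'$ yields $\tvnorm{ F(\mu) - F(\nu) } = \max_{B \subseteq \Omega'} | F(\mu)(B) - F(\nu)(B) | \le \tvnorm{ \mu - \nu }$.

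There is essentially no obstacle here — the only thing to be mildly careful about is that $F$ need not be injective, so several points of $\Omega$ may collapse onto the same point of $\Omega'$, which is precisely why one argues with preimages of sets rather than of points. If one prefers the $\ell^1$ form of the distance, the same conclusion follows by bounding $2\tvnorm{ F(\mu) - F(\nu) } = \sum_{y \in \Omega'} | \mu(F^{-1}(y)) - \nu(F^{-1}(y)) | \le \sum_{y \in \Omega'} \sum_{x \in F^{-1}(y)} | \mu(x) - \nu(x) | = \sum_{x \in \Omega} | \mu(x) - \nu(x) | = 2\tvnorm{ \mu - \nu }$, using the triangle inequality and the fact that the fibres $F^{-1}(y)$ partition $\Omega$.
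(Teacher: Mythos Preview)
Your proof is correct; the paper actually states this lemma without proof, treating it as a standard fact (the data-processing inequality for total variation). Your argument via the variational characterisation $\tvnorm{\mu - \nu} = \max_A |\mu(A) - \nu(A)|$ and pullback along $F$ is exactly the expected justification, and the alternative $\ell^1$ computation you include is also fine.
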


With all the pieces now in place, we can now state and prove our simulation theorem.

\simthm*
\begin{proof}
  Let $\vec A$ be the synchronous $k$-token shuffling protocol. Since the protocol works with high probability, assume it succeeds with probability at least $p \ge 1- 1/n^h$, where $h$ is a constant we choose later. Using the simulation protocol, we construct a graphical population protocol $\vec B$ with the claimed properties. Recall that $\varepsilon = 1/n^a$, where $a$ was an arbitrary constant. We set $a$ so that $\varepsilon R \le 1/n^\lambda$ holds.
  By \lemmaref{lemma:executions} the shuffling protocol simulates the execution of $\vec A$ induced by an $R$-round $\varepsilon$-uniform synchronous schedule $( \sigma_r )_{1 \le r \le R}$ with high probability. This takes at most $t^* = (R\phi + \gamma)n$ steps by \lemmaref{lemma:timing}.

  Recall that $\phi \in O(\gamma + \tau/n)$, where $\gamma$ is the bound on the clock skew and $\tau = \tau(\varepsilon)$ is the $\varepsilon$-mixing time of the $k$-stack interchange process. Since $\tau \le \lceil \log 1/\varepsilon \rceil \cdot \taumix \in O\left( \log n \cdot \taumix \right)$, we get from 
  \theoremref{thm:clocks} the following bounds:
  \[
  \gamma \in O\left( \frac{d\log n}{\beta} \right) \qquad
  \phi \in O\left( \log n \left( \frac{d}{\beta} + \frac{\taumix}{n} \right)\right) \qquad
  t^* \in O\left( R n \log n \left( \frac{d}{\beta} +  \frac{\taumix}{n} \right) \right).
  \]
  The bound on $t^*$ establishes the claimed bound on the step complexity of $\vec B$. For the state complexity, note that each node $v$ stores the variables for the clock $c(v) \in [\phi]$, the round counter $r(v) \in [R+1]$, and the local state $a(v) \in X $ of the simulated protocol $\vec A$, and the $k$ tokens $b_0(v), \ldots, b_{k-1}(v) \in Y$.
  This takes $\phi \cdot (R+1) \cdot |X| \cdot |Y|^k$ states, establishing the bound on the state complexity.

  It remains to argue that $\vec B$ solves the task $\Pi$ with probability at least $p-1/n^\lambda$. The output of algorithm $\vec A$ on input $z$ under any $R$-round synchronous schedule $\Xi$ is given by $F_z(\Xi)$, where $F_z$ is a computable function. Let  $D = F_z(\Phi)$ and $D' = F_z(\nu^R)$ be the probability distributions of outputs in the executions of the algorithm $\vec A$ induced, respectively, by the simulated $\varepsilon$-uniform schedules given by $\Phi$ and the uniform schedules given by $\nu^R$. By \lemmaref{lemma:prefix-statistical-distance} and \lemmaref{lemma:tv-max-fun}, we have that
\[
\tvnorm{D - D'} = \tvnorm{F_z(\Phi) - F_z(\nu^R)} \le \tvnorm{\Phi - \nu^r} \le \varepsilon R \le 1/n^\lambda.
\]
Therefore, the probability that the output $z'$ under the execution induced by the $\varepsilon$-uniform schedule on input $z$ satisfies $z' \in \Pi(z)$ is
$D(\Pi(z)) \ge D'(\Pi(z)) - 1/n^\lambda \ge p - 1/n^\lambda \ge 1 - 1/n^h - 1/n^\lambda$.
Thus, the output of protocol $\vec B$ is feasible with high probability. 
Since $\vec A$ stabilises in $R$ rounds, nodes can set the output of $\vec B$ to be the output of $\vec A$ at the end of the $R$th simulated round. Thus, the output of $\vec B$ stabilises as well.
\end{proof}

\section{Applications: leader election and exact majority}
\label{sec:applications}

Using~\theoremref{thm:simulation}, we can automatically transport algorithms from
the \emph{fully-connected synchronous token shuffling model} to the graphical, asynchronous population protocol model. We now utilise this result to obtain fast protocols for leader election and exact majority in the graphical population protocol model. To this end, we give the following algorithms in the token shuffling model:
\begin{enumerate}
\item A leader election algorithm that uses one-way communication with $k > 1$ tokens. The protocol uses a one-way information dissemination protocol and a protocol for generating synthetic coins in the token shuffling model.

\item An exact majority algorithm simulating two-way interactions in a population of $2n$ virtual agents. The algorithm uses the classic cancellation-doubling dynamics.
\end{enumerate}
These protocols adapt ideas from prior work in the clique model (see e.g.~\cite{elsaesser-survey} for a general overview). 
For completeness, \appendixref{apx:token} provides the full analyses of the algorithms in the synchronous token shuffling model.

\subsection{Warmup: one-way information dissemination}

We start by adapting a classic broadcast primitive to the $k$-token shuffling model. 
We assume that each node is given an input $z(v)$ from a set $\Sigma$ with a total order on the values. The protocol computes the maximum value given as input. This can be used for information dissemination, or to agree on a common input value.

\paragraph{One-way epidemics protocol.}
The algorithm works in the $k$-token shuffling model for any $k > 0$. At the start of the protocol, each node $v$ initialises a local state variable $a(v)$ to its input value $z(v)$. In every round, each node $v$ performs the following steps:
\begin{enumerate}[noitemsep]
\item Generate $k$ tokens of type $a(v)$.
\item Use one round to shuffle the generated tokens.
\item After receiving $k$ tokens $y_0, \ldots, y_{k-1}$, set $a(v) \gets \max \{ a(v), y_0, \ldots, y_{k-1} \}$.
\end{enumerate}
The number of states and token types used by the algorithm is $|\Sigma|$.

\begin{lemma}\label{lemma:consensus}
After $O(\log n)$ rounds, every node $v$ satisfies $a(v) = \max z(u)$ w.h.p.
\end{lemma}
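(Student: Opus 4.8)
The plan is to run the classical two-phase epidemic (``push rumour spreading'') analysis, adapted to the fact that here one round corresponds to a single uniformly random permutation of the token slots rather than to independent gossip calls. Fix $M = \max_u z(u)$ and let $I_r = \{v : a(v) = M \text{ after round } r\}$ be the set of \emph{infected} nodes. Since the update rule only ever takes maxima, $(I_r)_r$ is non-decreasing and $I_0 \neq \emptyset$, so it suffices to show $I_R = V$ for some $R = O(\log n)$ w.h.p. First I would reduce to $k=1$: having each node inspect only the token it receives in its first slot gives an infected-set-size process whose distribution coincides with that of the $k=1$ dynamics (after a uniform shuffle of the $nk$ slots, the first-slot token of node $v$ lands on a uniformly random slot, of which a $|I_r|/n$ fraction belong to infected nodes, and for distinct nodes these slots are negatively associated), while using all $k$ received tokens can only infect more nodes; hence the general-$k$ infected set contains, and therefore grows at least as fast as, the $k=1$ one. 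For $k=1$, writing $\tau_r \in S_n$ for the round-$r$ shuffle permutation, a node $v$ is infected after round $r$ iff $v \in I_{r-1}$ or $\tau_r(v) \in I_{r-1}$.

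\emph{Phase 1: growth from $1$ to $n/2$.} I would split this into a slow start and a bulk phase. While $1 \le |I_r| < C\log n$, the infected set fails to gain at least one node in round $r+1$ only if $\tau_{r+1}$ maps $I_r$ into itself, an event of probability $1/\binom{n}{|I_r|} \le 1/n$; a union bound over $C\log n$ rounds then gives $|I_r| \ge C\log n$ after $C\log n$ rounds, except with probability $O(\log n/n)$. Once $|I_r| = m$ with $C\log n \le m \le n/2$, the number of newly infected nodes is $Z = \sum_{v \notin I_r}\mathbf 1[\tau_{r+1}(v) \in I_r]$ with $\E[Z] = (n-m)\,m/n \ge m/2$; the summands are negatively associated (images of distinct points under a uniform permutation are NA), so a Chernoff bound yields $Z \ge m/4$ with probability $1 - e^{-\Omega(m)} \ge 1 - n^{-\Omega(C)}$, i.e.\ $|I_{r+1}| \ge \tfrac54 |I_r|$. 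Hence $O(\log n)$ further rounds push $|I_r|$ past $n/2$, and a union bound over these rounds keeps the total failure probability polynomially small in $n$.

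\emph{Phase 2: shrinking the uninfected set from $n/2$ to $0$.} Let $U_r = V \setminus I_r$. Once $|U_r| \le n/2$, a node $u$ remains uninfected in round $r+1$ only if $u \in U_r$ and $\tau_{r+1}(u) \in U_r$, so $\E[|U_{r+1}| \mid U_r] = |U_r|^2/n \le |U_r|/2$. Iterating, $\E[|U_{r+R'}|] \le n/2^{R'}$, and for $R' = C'\log n$ Markov's inequality gives $|U_{r+R'}| = 0$ with probability $1 - n^{-\Omega(C')}$. Concatenating the slow start, the bulk of Phase~1, and Phase~2 gives $R = O(\log n)$, and choosing $C, C'$ large enough relative to the desired failure exponent makes the overall failure probability at most $n^{-\lambda}$.

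The only genuinely delicate point is obtaining the \emph{lower}-tail concentration on the growth in Phase~1: the logarithmic slow start is unavoidable because a single constant-factor step has relative fluctuations that are too large when $|I_r|$ is $o(\log n)$, and the Chernoff bound in the bulk cannot appeal to independence, only to the negative-association property of permutation images. Everything else is a routine union bound over $O(\log n)$ rounds together with a suitable choice of the constants.
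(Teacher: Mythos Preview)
Your two-phase epidemic analysis is correct and takes a genuinely different route from the paper. The paper argues directly on the uninfected set: from $\E[|U_{r+1}|\mid |U_r|=b]\le b^2/n$ it asserts (via the tower rule) that $\E[|U_r|]$ is bounded by the quadratic recurrence $g(r)=g(r-1)^2/n$ with $g(0)=|U_0|$, and then applies Markov after $R$ rounds with $2^R=\Theta(n\log n)$. Your decomposition into a growth phase (slow start plus Chernoff under negative association once $|I_r|\ge C\log n$) followed by the halving bound $\E[|U_{r+1}|\mid U_r]\le|U_r|/2$ once $|U_r|\le n/2$ is the standard push--pull analysis; it is longer but sidesteps the delicate step in the paper's argument where one would need to pass from $\E[|U_{r+1}|]\le\E[|U_r|^2]/n$ to $\E[|U_{r+1}|]\le(\E[|U_r|])^2/n$, which is the Jensen direction that does not hold in general. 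Your reduction to $k=1$ is also adequate for the proof, though the restricted first-slot process does not literally have the same law as the $k=1$ process (sampling without replacement from $nk$ slots versus from $n$ nodes); what you actually use---and what does hold---is that the marginals are $|I_r|/n$ and the indicators are negatively associated.

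One point to tighten: your slow-start union bound gives failure probability $O(C\log n/n)$, and increasing $C$ does not drive this below $n^{-\lambda}$ for $\lambda>1$ (it makes it larger). To meet the paper's w.h.p.\ convention, argue instead that conditioned on any history with $1\le|I_r|<C\log n$, a round gains at least one node with probability $\ge 1-1/n\ge 1/2$; coupling with i.i.d.\ Bernoulli$(1/2)$ trials and applying Chernoff, $2C\log n$ rounds then yield at least $C\log n$ gains except with probability $e^{-\Omega(C\log n)}=n^{-\Omega(C)}$, which \emph{is} tunable via $C$.
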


\subsection{Leader election by token-shuffling}
\label{ssec:le}

We now consider the leader election problem, where the goal is to select a single node as a leader. We adapt a well-known strategy from the standard population protocol model: each leader candidate iteratively (1) flips a random coin and (2) becomes a follower if another leader candidate had a coin flip with a larger value~\cite{gasieniec2018fast,elsaesser-survey}. 

In order to implement step (1) we need access to random bits. However recall that by our definitions, the state transition and token generation functions in the token shuffling model are deterministic. While we could ``lift'' random bits from the underlying stochastic population protocol model, we instead opt for generating synthetic coin flips in the $k$-token shuffling model with $k > 1$.

\paragraph{Synthetic coin flips.}
Let $k > 1$ and consider the $k$-token shuffling model, where each node $v$ receives exactly $k$ tokens. Recall that these tokens are ordered from $0$ to $k-1$. We leverage this property to generate synthetic coin flips in one round as follows:
\begin{enumerate}[noitemsep]
\item Each node generates a single 0-token and a single 1-token.
\item Use one round to shuffle the generated tokens.
\item Output the value of the first token.
\end{enumerate}
While the coin flips between nodes are not independent, the probability that a node outputs $1$ is $1/2$. Thus, in expectation half of the nodes output 1.

\paragraph{Leader election protocol.}
Suppose 
the input specifies a (nonempty) subset of nodes that start as leader candidates. Let $\ell(v) \in \{0,1\}$ be a local variable of node $v$ denoting whether it considers itself a leader candidate. In a single iteration, each node $v$ executes the following:
\begin{enumerate}[noitemsep]
\item Generate a synthetic coin flip $b(v)$ in one round.
\item Run $\Theta(\log n)$ rounds of the broadcast protocol of \lemmaref{lemma:consensus} with input $\ell(v) \cdot b(v) \in \{0,1\}$.
\item If $\ell(v) = 1$ and $b(v) = 0$, then set $\ell(v) \gets 0$ if the broadcast protocol had output 1.
\end{enumerate}
In every round, node $v$ uses the value $\ell(v)$ as its current output value. Each iteration of the protocol takes $\Theta(\log n)$ rounds and the protocol uses $\Theta(\log n)$ states and constantly many token types. We show that with high probability, the protocol reduces the number of leader candidates to one after $O(\log n)$ iterations, and hence, in $O(\log^2n)$ rounds. The remaining candidate is the elected leader.

Note that any node that is a leader candidate ceases to be a leader candidate only if its local coin flip was 0 \emph{and} the broadcast protocol informs the node that some other leader candidate had a local coin flip with value one. Thus, we never end up in a situation where there are no leader candidates remaining.

\begin{restatable}{theorem}{LEalg}\label{thm:le}
There is a synchronous 2-token shuffling protocol for the leader election task that stabilises in $O(\log^2 n)$ rounds with high probability, uses $O(\log n)$ states per node and two token types.
\end{restatable}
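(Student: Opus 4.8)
The plan is to verify the three claimed resource bounds (round complexity, state complexity, token types) and, most importantly, to establish correctness with high probability. The token-type count is immediate: the synthetic coin flip uses only $0$- and $1$-tokens, and the broadcast protocol from \lemmaref{lemma:consensus} is run with inputs in $\{0,1\}$, so throughout only two token types appear. The round complexity follows by counting: each iteration consists of one round to generate the synthetic coin plus $\Theta(\log n)$ rounds of broadcast, hence $\Theta(\log n)$ rounds per iteration; if I can show $O(\log n)$ iterations suffice w.h.p., the total is $O(\log^2 n)$. The state complexity is $O(\log n)$ because a node only needs to store $\ell(v) \in \{0,1\}$, the current coin value $b(v)$, and a round counter within the current broadcast phase ranging over $\Theta(\log n)$ values; the broadcast protocol itself uses $|\Sigma| = 2$ states for its own variable, so the product is $O(\log n)$.

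The heart of the argument is the correctness claim: w.h.p. the number of leader candidates drops to exactly one within $O(\log n)$ iterations. First I would argue the ``at least one'' invariant, which is already noted informally in the text: a candidate only drops out when its own coin flip was $0$ \emph{and} broadcast reports that some candidate flipped $1$; since a candidate flipping $1$ never drops out in that iteration, at least one candidate always survives, so the count is never driven to zero. Next, the ``eventually one'' direction. Within a single iteration, condition on the set $L$ of current candidates with $|L| = m \ge 2$. By the synthetic coin construction, each node outputs $1$ with marginal probability exactly $1/2$, so the event $E$ that at least one candidate flips $1$ and at least one flips $0$ has probability bounded below by a constant — here I must be slightly careful because the synthetic coins are \emph{not} independent across nodes (they come from a single shuffle of $n$ zero-tokens and $n$ one-tokens), so I would bound $\Pr[E^c]$ directly: $E^c$ is the event that all $m$ candidates receive the same first-token value, and since exactly $n$ tokens are $1$'s out of $2n$, a short calculation (hypergeometric-type, or just noting the first token of any fixed node is $1$ w.p.\ $1/2$ and using symmetry / negative association of the shuffle) gives $\Pr[\text{all } m \text{ agree}] \le 2 \cdot (1/2)^{\Theta(1)}$, a constant strictly less than $1$ whenever $m \ge 2$. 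Conditioned on $E$, after the $\Theta(\log n)$-round broadcast (which succeeds w.h.p.\ by \lemmaref{lemma:consensus}), every node learns the maximum broadcast input, which is $1$; hence every candidate that flipped $0$ becomes a follower, and at least one did, so $m$ strictly decreases. This shows $m$ is non-increasing and decreases with constant probability in each iteration as long as $m \ge 2$; a standard argument then bounds the number of iterations to reach $m = 1$ by $O(\log n)$ w.h.p. — but this is exactly the step that needs the most care.

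The main obstacle is precisely this last reduction: showing that $O(\log n)$ iterations suffice w.h.p. The subtlety is that a single iteration need not halve $m$; in the worst case, when the surviving set is small but has size $\ge 2$, only one candidate drops out per iteration, so the coarse ``constant probability of some decrease'' bound gives an expected $\Theta(m)$ iterations for the tail, which is too slow if $m$ could be polynomial. The fix, following the classic clique-model analysis \cite{gasieniec2018fast}, is a sharper per-iteration estimate: when $m$ is large, the coin flips split the candidates into the $1$-group and the $0$-group, and \emph{all} of the $0$-group drops out in that iteration; since each candidate independently(-ish) lands in the $0$-group with probability $1/2$, a Chernoff/concentration bound — accounting for the mild dependence from the shuffle, which is negatively associated and hence still admits Chernoff-type tail bounds — shows $m$ roughly halves each iteration as long as $m = \Omega(\log n)$, giving $O(\log n)$ iterations to bring $m$ down to $O(\log n)$, and then $O(\log n)$ further iterations (each with constant success probability, union-bounded) to finish. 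Throughout, I would union-bound over the $O(\log n)$ iterations the failure probabilities of (i) the broadcast subprotocol and (ii) the concentration estimates, keeping each at $1/n^{\Omega(1)}$ so the total remains $1/n^{\Omega(1)}$. Once $m = 1$, the surviving candidate's output $\ell(v) = 1$ never changes and all others have $\ell(u) = 0$ permanently, so the protocol stabilises; combining with the round count of $\Theta(\log n)$ per iteration yields the $O(\log^2 n)$ bound. The detailed analysis is deferred to \appendixref{apx:token}.
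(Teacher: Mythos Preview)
Your proposal is correct, but it works harder than necessary and differs from the paper's approach in the key reduction step. The paper avoids the two-regime split (Chernoff for large $m$, geometric tail for small $m$) entirely by a direct first-moment argument: letting $X_i$ be the number of candidates after iteration $i$, one shows $\E[X_{i+1}\mid X_i=a]\le a\bigl(p(a)+(1-p(a))/2\bigr)\le 3a/4$ for all $a\ge 2$, where $p(a)\le 2^{-a}$ is the probability all $a$ candidates flip $0$ (this bound holds in the shuffle model since $\prod_{i=0}^{a-1}\frac{n-i}{2n-i}\le 2^{-a}$). Iterating and applying Markov's inequality to $X_t-1$ gives $\Pr[X_t>1]\le n(3/4)^t$, so $t=\Theta(\log n)$ iterations suffice w.h.p. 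This single linearity-of-expectation step replaces both your Chernoff phase and your small-$m$ phase, and sidesteps the need to invoke negative association or any concentration inequality for the dependent synthetic coins. Your route is sound---the hypergeometric/negative-association reasoning you sketch does go through---but the expectation argument is shorter and handles all values of $m$ uniformly.
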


By applying \theoremref{thm:d-beta-mixing} and \theoremref{thm:simulation}, we get the following result.

\begin{corollary}
There exists a stochastic population protocol that solves leader election in $(d/\beta)^2 \cdot n \polylog n$ steps with high probability using $(d/\beta)^2  \cdot \polylog n$ states in any $d$-regular graph with edge expansion $\beta > 0$.
\end{corollary}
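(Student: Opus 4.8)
The plan is to obtain $\vec B$ by applying the simulation theorem (\theoremref{thm:simulation}) to the synchronous leader election protocol $\vec A$ of \theoremref{thm:le}, using the mixing-time bound of \theoremref{thm:d-beta-mixing} to control the overhead parameter $\zeta$. There is essentially nothing to do beyond composing these three results and simplifying the resulting polylogarithmic factors.

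First I would fix $\vec A$ to be the synchronous $2$-token shuffling protocol of \theoremref{thm:le}, so that $k = 2$ is a constant, $|Y| = 2$, the number of local states is $|X| \in O(\log n)$, and $\vec A$ stabilises with high probability in $R \in O(\log^2 n) \subseteq \poly(n)$ rounds. Since $k$ is constant, \theoremref{thm:d-beta-mixing} gives $\taumix \in O\bigl((d/\beta)^2 n \log^3 n\bigr)$ for the $k$-stack interchange process on $G$, hence $\taumix/n \in O\bigl((d/\beta)^2 \log^3 n\bigr)$. Because the edge expansion of a $d$-regular graph satisfies $\beta \le d$ (taking $S$ to be a single vertex), we have $d/\beta \ge 1$, so the quadratic term dominates and the overhead parameter of \theoremref{thm:simulation} satisfies
\[
\zeta = \log n \cdot \Bigl( \frac{d}{\beta} + \frac{\taumix}{n} \Bigr) \in O\Bigl( \bigl(d/\beta\bigr)^2 \log^4 n \Bigr).
\]

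Plugging this into \theoremref{thm:simulation} then yields a stochastic population protocol $\vec B$ that solves leader election with high probability on $G$, with step complexity
\[
T(\vec B) \in O(R \cdot n \cdot \zeta) \subseteq O\Bigl( (d/\beta)^2 \cdot n \log^6 n \Bigr) = (d/\beta)^2 \cdot n \polylog n
\]
and state complexity
\[
S(\vec B) \in O\bigl( |X| \cdot |Y|^k \cdot \zeta \bigr) \subseteq O\Bigl( (d/\beta)^2 \log^5 n \Bigr) = (d/\beta)^2 \cdot \polylog n,
\]
which is the claimed bound. The only points requiring care are routine bookkeeping: checking that $R \in \poly(n)$ so that \theoremref{thm:simulation} applies, that $k$ is a fixed constant so that \theoremref{thm:d-beta-mixing} applies, and that $d/\beta \ge 1$ so the $(d/\beta)^2$ contribution of the mixing time absorbs the linear $d/\beta$ term in $\zeta$. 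I do not expect any substantive obstacle here; the statement is a direct corollary of the machinery developed in the preceding sections.
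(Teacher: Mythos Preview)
Your proposal is correct and follows exactly the approach the paper indicates: the paper simply states that the corollary follows ``by applying \theoremref{thm:d-beta-mixing} and \theoremref{thm:simulation}'' to the protocol of \theoremref{thm:le}, without spelling out the arithmetic. Your write-up supplies that arithmetic carefully, including the observation that $d/\beta \ge 1$ so the quadratic term dominates in $\zeta$; there is nothing to add.
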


\subsection{Exact majority by token-shuffling}
\label{ssec:majority}

We now show a protocol for the exact majority task in the 2-token shuffling model.
Specifically, we simulate a cancellation-doubling protocol population protocol in a population of size $2n$, where nodes interact synchronously according to a randomly chosen perfect matching. In every round, each node receives two tokens of type $A$ and $B$, and generates two new tokens for the next round by applying a rule of the form
\[
 A + B \to C + D.
 \]
 The rules used guarantee that with high probability all tokens get converted to the value held by the initial majority of input values. Hence, as output of the protocol, each node $v$ can use (an arbitrary) value held by one of its tokens.

\paragraph{The exact majority protocol.} Let $N=2n$ and $t = (\lambda+1) \log_{5/4} N$, where $\lambda > 0$ is an arbitrary constant. Each node $v$ initially creates two tokens that take the input value $z(v) \in \{0,1\}$. After this, the algorithm consists of repeatedly running the following rules:
\begin{enumerate}[noitemsep]
\item For $t$ consecutive rounds, apply the cancellation rules
  \[
  Z + \bar{Z} \to \emptyset + \emptyset \textrm{ for } Z \in \{0,1\}.
  \]

\item For $t$ consecutive rounds, apply the doubling rules
  \[
  Z + \emptyset \to Z^{1/2} + Z^{1/2} \textrm{ for } Z \in \{0,1\}.
  \]

\item Apply the promotion rule $Z^{1/2} \to Z$ for each token of type $Z \in \{0,1\}$.

\end{enumerate}
Step (1) is called the \emph{cancellation phase} and Step (2) the \emph{doubling phase}. The protocol uses exactly five types of tokens: $0, 1, 0^{1/2}, 1^{1/2}$ and $\emptyset$. Tokens of type $Z \in \{0,1\}$ represent an ``opinion'' on what is the majority value. The tokens of type $Z^{1/2}$ are called \emph{split} tokens. A token of type $\emptyset$ is called an \emph{empty token}. The idea is that (1) opposing opinions cancel out during the cancelling phase and (2) the amount of majority tokens doubles in the doubling phase. 

\begin{figure}[t]
  \includegraphics[page=1,width=\textwidth]{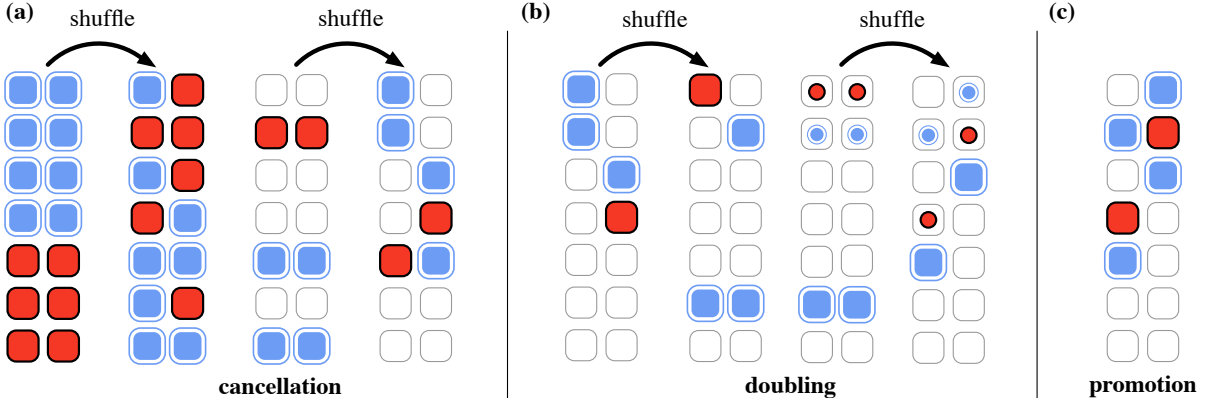}
  \caption{The cancellation-doubling dynamics with $2n$ tokens and $n=7$ nodes. Blue tokens have the initial majority. (a) A single round of the cancellation phase. White rectangles represent empty tokens. (b) Two rounds of the doubling phase. The small circular tokens are split tokens. (c) The promotion rule promotes all split tokens into full tokens at the end of the doubling phase.  \label{fig:cancellation-doubling}}
\end{figure}

In each round, every node holds two tokens $y_0$ and $y_1$. If one of the tokens is nonempty, then the node outputs the largest value held by nonempty tokens. Otherwise, if both tokens of a node are empty, i.e., $y_0 = y_1 = \emptyset$, then the node outputs its input value (in this case the protocol has not yet stabilised).
We show that the algorithm stabilises in $O(\log^2 N) = O(\log^2 n)$ rounds with high probability, i.e., the system reaches a configuration, where all generated tokens take the majority input value.

\begin{restatable}{theorem}{MAJalg}\label{thm:majority}
  There is a synchronous 2-token shuffling protocol for the exact majority task that stabilises in $O(\log^2 n)$ rounds with high probability, uses $O(\log n)$ states and five token types.
\end{restatable}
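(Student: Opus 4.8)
The plan is to track, over the course of the protocol, the multiset of $N = 2n$ tokens, and to argue that after $O(\log^2 n)$ rounds all tokens carry the majority input value $b$. First I would set up notation: let $a_0 = |z^{-1}(0)|$ and $a_1 = |z^{-1}(1)|$ be the initial input counts, with $N = a_0 + a_1$, so each initializes $2a_0$ tokens of type $0$ and $2a_1$ of type $1$. Assume without loss of generality $b = 1$, so the initial \emph{token} discrepancy is $\delta_0 = 2a_1 - 2a_0 \ge 2$. The key structural observation, which I would prove first, is that the cancellation rule $Z + \bar Z \to \emptyset + \emptyset$ is \emph{discrepancy-preserving}: if a round matches $c$ pairs of opposing tokens, the counts of both $0$- and $1$-tokens drop by exactly $c$, so $\delta = (\#1\text{-tokens}) - (\#0\text{-tokens})$ is invariant. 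Likewise the doubling rule $Z + \emptyset \to Z^{1/2} + Z^{1/2}$ preserves $\delta$ (counting split tokens $Z^{1/2}$ toward the $Z$-count), and the promotion rule trivially preserves it. Hence $\delta$ equals $\delta_0 \ge 2$ throughout, which already rules out the wrong answer: the protocol can never stabilise to all-$0$ tokens.

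The heart of the argument is showing fast \emph{convergence}: that after one cancellation phase followed by one doubling phase, either all surviving opinion tokens are $1$-tokens, or the number of $0$-tokens has shrunk by a constant factor in expectation and with good probability. I would analyze the cancellation phase first. Since tokens are shuffled by a uniformly random permutation of all $N$ tokens in each round, the pairing induced at each node is a uniformly random perfect matching on the $N$ token-slots; a $0$-token survives a round only if it is matched with a non-$0$ token. When there are $m_0 \le m_1$ zero- and one-tokens present (and $m_e$ empties, $m_0 + m_1 + m_e = N$), a standard computation shows the expected number of $0$--$1$ cancellations in one round is $\Theta(m_0 m_1 / N)$, so if $m_1 \ge m_0$ the count $m_0$ decays by a factor roughly $1 - \Theta(m_1/N) \le 1 - \Theta(1/2)$ per round once $m_1 = \Omega(N)$; iterating over $t = (\lambda+1)\log_{5/4} N$ rounds and using a Chernoff/Azuma bound for the random matching drives $m_0$ to $0$ with probability $1 - N^{-\Omega(\lambda)}$. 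After cancellation completes, all opinion tokens are $1$-tokens and the rest are empties; the doubling phase then converts each empty paired with a $1$-token into two $1^{1/2}$-tokens, and a symmetric decay argument shows that after $t$ more rounds the number of empties drops below $1$ w.h.p., i.e., every token is a (split) $1$-token, which promotion turns into a full $1$-token. At that point every node holds two $1$-tokens, outputs $1$, and the configuration is fixed under all subsequent rules, so the protocol has stabilised; summing $O(\log N)$ rounds for each of the two phases gives $O(\log^2 n)$ total.

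The main obstacle I anticipate is handling the \emph{dependence} introduced by the uniformly random permutation: the events ``token $i$ cancels'' for different $i$ are negatively correlated but not independent, so a naive Chernoff bound does not directly apply. I would address this either by a bounded-difference (McDiarmid) argument on the permutation — swapping two slots changes the cancellation count by $O(1)$ — or by appealing to known concentration for statistics of random matchings; negative association of the indicator variables would also suffice for the upper tail. A secondary subtlety is the boundary regime where $m_0$ is already small (say $O(\log N)$): there the multiplicative drift is too weak to conclude in $O(\log N)$ rounds by drift alone, so I would switch to a direct argument that in each such round there is a constant probability that \emph{all} remaining $0$-tokens get matched with $1$-tokens (since $m_1 = \Omega(N)$), giving a geometric number of rounds until extinction, which is $O(\log N)$ w.h.p. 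Finally, one must check that the round budget $t = (\lambda+1)\log_{5/4} N$ is large enough to absorb both the multiplicative-decay phase and this extinction phase simultaneously with the stated failure probability; this is a routine union bound over the $O(\log N)$ rounds and the two phases.
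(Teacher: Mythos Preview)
Your proposal has a genuine gap in the analysis of the cancellation phase. You claim that within a single cancellation phase of $t = \Theta(\log N)$ rounds, the minority count $m_0$ decays by a constant factor per round ``once $m_1 = \Omega(N)$'', and hence drops to $0$ w.h.p. But $m_1$ does \emph{not} stay $\Omega(N)$: the cancellation rule removes one $0$-token and one $1$-token simultaneously, so $m_1 = m_0 + \delta$ throughout, and when the initial discrepancy $\delta$ is $O(1)$ both counts shrink together. The per-round survival probability of a $0$-token is then roughly $1 - m_1/N \approx 1 - m_0/N$, which gives the recursion $m_0 \mapsto m_0(1 - m_0/N)$; this behaves like $m_0(t) \approx N/(t+2)$, so after $t = \Theta(\log N)$ rounds you still have $m_0 = \Theta(N/\log N)$ minority tokens, not zero. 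Consequently the sentence ``after cancellation completes, all opinion tokens are $1$-tokens'' is false in the small-discrepancy regime, and the rest of the argument (a single doubling phase to mop up empties, then ``$O(\log N) + O(\log N) = O(\log^2 n)$'') collapses.

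What you are missing is the role of the \emph{iteration} of cancellation-then-doubling, and in particular of the doubling phase as a discrepancy amplifier rather than a cleanup step. The paper's analysis tracks the discrepancy $\Delta_i = A_i - B_i$ across iterations and shows a dichotomy: after the $i$th cancellation phase either (i) all minority tokens are gone, or (ii) at least $3N/5$ tokens are empty; in case (ii) every non-empty token splits w.h.p.\ during the doubling phase, so $\Delta_{i+1} = 2\Delta_i$. Thus the discrepancy doubles each iteration, and after $O(\log N)$ iterations it reaches $N$, forcing $B=0$; a further $O(\log N)$ iterations then eliminate the remaining empties. This yields $O(\log N)$ iterations of $\Theta(\log N)$ rounds each, which is where the $O(\log^2 n)$ actually comes from. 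Your ``constant-factor shrink of $m_0$ per iteration'' sketch in the opening sentence of your second paragraph could in principle be made to work, but you never establish it---you instead argue (incorrectly) for elimination in a single phase---and the cleanest invariant to track is the doubling of $\Delta$, not the shrinking of $m_0$.
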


By applying \theoremref{thm:d-beta-mixing} and \theoremref{thm:simulation}, we get the following result.

\begin{corollary}
There exists a stochastic population protocol that solves exact majority in $(d/\beta)^2 \cdot n \polylog n$ steps with high probability using $(d/\beta)^2  \cdot \polylog n$ states in any $d$-regular graph with edge expansion $\beta > 0$.
\end{corollary}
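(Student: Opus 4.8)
The plan is to derive this corollary purely by composition, exactly as for the preceding leader-election corollary: feed the synchronous exact-majority protocol of \theoremref{thm:majority} into the simulation machinery of \theoremref{thm:simulation}, and then simplify the resulting complexity bounds using the mixing-time estimate of \theoremref{thm:d-beta-mixing}.

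Concretely, I would start by invoking \theoremref{thm:majority} to obtain a synchronous $k$-token shuffling protocol $\vec A$ with $k = 2$ that solves exact majority with high probability in $R \in O(\log^2 n)$ rounds, using $|X| \in O(\log n)$ states and $|Y| = 5$ token types. Since $k$ is a constant and $R \in \poly(n)$, the hypotheses of \theoremref{thm:simulation} are met, so there is a stochastic population protocol $\vec B$ solving exact majority with high probability on any $n$-node $d$-regular graph $G$ of edge expansion $\beta > 0$, with $T(\vec B) \in O(R n \zeta)$ and $S(\vec B) \in O(|X|\,|Y|^k \zeta)$, where $\zeta = \log n \cdot (d/\beta + \taumix/n)$ and $\taumix$ is the mixing time of the $2$-stack interchange process on $G$.

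The remaining step is the routine estimate of $\zeta$: by \theoremref{thm:d-beta-mixing} with $k = 2$, one has $\taumix \in O((d/\beta)^2 n \log^3 n)$, so $\taumix/n \in O((d/\beta)^2 \log^3 n)$, and since $d/\beta \ge 1$ for every regular graph this term dominates $d/\beta$, giving $\zeta \in O((d/\beta)^2 \log^4 n)$. Plugging this in yields $T(\vec B) \in (d/\beta)^2 \cdot n \polylog n$ and $S(\vec B) \in (d/\beta)^2 \cdot \polylog n$, which is the claim. I do not expect any genuine obstacle here — all the substance lives in \theoremref{thm:majority}, \theoremref{thm:simulation}, and \theoremref{thm:d-beta-mixing} — the only thing to be careful about is tracking the $\polylog n$ factors and observing that inside $\zeta$ it is the $\taumix/n$ term that controls the $(d/\beta)^2$ dependence in both the step and state complexities.
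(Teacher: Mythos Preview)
Your proposal is correct and follows exactly the route the paper takes: the corollary is stated immediately after \theoremref{thm:majority} with the one-line justification ``By applying \theoremref{thm:d-beta-mixing} and \theoremref{thm:simulation}, we get the following result.'' Your write-up simply spells out that composition, and the arithmetic you do with $\zeta$ (using $d/\beta \ge 1$ so that the $\taumix/n$ term dominates) is the intended simplification.
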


\subsection{Backup protocols}

Finally, we address the following technical detail: our simulation framework and the simulated synchronous algorithms are guaranteed to work correctly and stabilise only with high probability, and therefore, the protocols may fail with low probability. To obtain always correct protocols, i.e., ones with finite expected stabilisation time, we specify ``backup protocols'', which are run in the unlikely cases, where either the simulation framework fails (e.g.\ the phase clocks become desynchronised) or the fast synchronous algorithm fails. 
This problem also occurs in the context of fast clique-based algorithms, e.g.~\cite{ alistarhg15,alistarh2018space-optimal}, and we adopt similar mitigation strategies. 

Note that since the probability of failure of the fast protocols can be polynomially small, to get polynomial expected stabilisation time, it suffices to have a backup protocol that has polynomial expected stabilisation time and small state complexity.

\paragraph{Backup for exact majority.}
The backup protocol, if necessary, is initiated as follows. If some node notices disagreement or inconsistent states after the fast protocols supposed stabilisation time, it initiates a signalling message, which is propagated further by all nodes that receive it. This signal forces all nodes to switch to executing the reliable (but potentially slow) backup protocol. Since the backup is only executed with low probability, and has negligible space cost, it does not affect the overall complexity of the fast exact majority protocol in the graphical population protocol model. 

In the case of the exact majority protocol, we can directly adopt the same solution as in the classic clique setting~\cite{alistarh2018space-optimal}: use the four-state exact majority algorithm analysed by Draief and Vojnovi\'c~\cite{draief2012-convergence} as a backup protocol. This algorithm works in arbitrary, connected graphs and has polynomial expected stabilisation time.

\paragraph{Backup for leader election.}
For leader election, we use the six-state leader election algorithm given by Beauquier et al.~\cite{beauquier2013self} who studied this protocol under the adversarial (non-stochastic) scheduler. In \sectionref{sec:constant-le}, we show that this protocol has polynomial expected stabilisation time under the stochastic scheduler on any connected graph. 

Switching to the backup protocol can be done as follows: once a node has executed the fast protocol sufficiently many rounds, it switches to the slow protocol using its current state (whether it is a leader candidate or not) as input for the constant-state protocol. Any node that observes during an interaction that some other node has switched to the slow protocol, does so as well.

\section{Convergence analysis for leader election on general graphs\label{sec:constant-le}}

In this section, we analyse the leader election protocol of Beauquier et al.~\cite{beauquier2013self} under the uniform stochastic scheduler. We establish the following result.

\begin{theorem}\label{thm:token-le}
There exists a protocol for leader election that uses six states and stabilises in any graph $G$ in $O(\diam(G) \cdot n^3 m\log n)$ interactions w.h.p. and in expectation, where $n$ is the number of nodes, $m$ the number of edges, and $\diam(G)$ the diameter of $G$.
\end{theorem}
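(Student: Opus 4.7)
The plan is to (i) identify a monotone progress measure (the number of leaders), (ii) bound the expected time between successive leader reductions via random-walk theory on $G$, and (iii) sum over at most $n-1$ reductions with a union bound to obtain the high-probability statement.

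First, I would describe the protocol concretely: every node starts as a leader candidate, and the six states encode a leader/follower bit together with token-carrying auxiliary flags. A leader's token is propagated through edge activations, and when a token meets another leader's node (or another leader's token), one of the two leaders is demoted to follower. The leader count is therefore non-increasing, so the protocol stabilizes as soon as exactly one leader remains, and it suffices to bound the expected time for the leader count to drop from $k$ to $k-1$ for each $k \ge 2$ and then sum the bounds.

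Second, for the elimination-time bound I would exploit the following random-walk interpretation. Under the stochastic scheduler, edges are selected uniformly among the $m$ edges, so a token at node $v$ moves precisely when one of the $d_v$ edges incident to $v$ is selected, which has probability $d_v/m$ per interaction; equivalently, the token performs a lazy random walk on $G$, with the number of interactions per walk step being $m/d_v$. The meeting time of two tokens, i.e., the hitting time of the diagonal of $G \times G$, can then be controlled by classical bounds: hitting times are $O(mn)$ in walk steps (Aleliunas--Karp--Lipton--Lov\'asz--Rackoff), and the diameter of $G$ enters naturally when translating back to interactions on a worst-case graph. Combined with a geometric restart argument over $O(\log n)$ independent phases to amplify the meeting probability into a high-probability statement, each reduction costs $O(\diam(G) \cdot n^2 m \log n)$ interactions; multiplying by $n-1$ reductions yields the claimed $O(\diam(G) \cdot n^3 m \log n)$ bound, both in expectation and w.h.p.

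The main obstacle is the faithful translation of random-walk bounds from walk-steps to interactions, and handling the coupling between the two tokens' motions. Under the stochastic scheduler both tokens share the underlying edge-activation sequence, and the protocol rules can cause a token to stall or be consumed upon an interaction, so the tokens are not independent random walks. I would address this either by a coupling / monotone-domination argument that dominates the true process by two independent lazy walks on $G$, or by a direct potential-function analysis (for example, tracking an effective-resistance-based quantity between the two token locations) whose drift can be lower-bounded by standard electrical-network arguments. The $\diam(G)$ factor arises naturally in such an analysis because any decrease in pairwise distance requires traversing specific edges along a shortest path, and each such edge is activated only once every $m$ interactions in expectation.
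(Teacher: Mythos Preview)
Your high-level plan---bound token meeting times, amplify via geometric restarts---is the right instinct, and it is what the paper does as well. But two concrete issues keep your argument from closing.

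First, the meeting-time step is the crux and you have not actually carried it out. In this protocol tokens are swapped on every activated edge, so the two walkers are coupled via the \emph{interchange} process: they never occupy the same vertex, and they ``meet'' only when the edge between their current positions fires. Domination by two independent lazy walks is not obvious here (the coupling is anti-correlated), and the AKLLR bound you quote is a single-walk hitting time, not a two-walk meeting time. The paper resolves this cleanly by tracking the unordered pair of token positions (plus a parity bit) as a single lazy walk on a product graph $G^\ast$ with $\Theta(n^2)$ vertices, $O(nm)$ edges, and diameter $O(\diam(G))$; applying the same elementary bound $H\le \diam\cdot |V|\cdot|E|$ on $G^\ast$ yields a worst-case expected meeting time $M_{\max}\in O(\diam(G)\cdot n^3 m)$, not the $O(\diam(G)\cdot n^2 m)$ your accounting implicitly uses. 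Your sketch of where the $\diam(G)$ factor ``enters naturally when translating back to interactions'' does not pin down a bound of either form.

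Second, your decomposition into $n-1$ sequential leader reductions is different from the paper's and, with the meeting-time bound above, overshoots by a factor of $n$. The paper does \emph{not} sum over reductions. It fixes $T^\ast=\Theta(M_{\max}\log n)$, shows by Markov's inequality plus geometric restart that any fixed pair of tokens meets within $T^\ast$ with probability $1-n^{-(c+2)}$, and then union-bounds over all $O(n^2)$ pairs to conclude that \emph{every} pair has met---hence only one black token survives---within $T^\ast$. A separate single-token hitting-time argument ($H_{\max}<\diam(G)\cdot nm$) then shows the white tokens reach the remaining leader candidates in another $T^\ast$ interactions. This gives $O(\diam(G)\cdot n^3 m\log n)$ directly. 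To salvage your sequential scheme you would need either this all-pairs argument, or a ``first collision among $k$ tokens'' bound that genuinely improves with $k$; neither is present in the proposal.

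A smaller point: your protocol description collapses two distinct phases. When two black tokens meet, no leader is demoted yet---one token turns white, and that white token must subsequently \emph{hit} some leader node to demote it. The paper's analysis accordingly separates the ``collapse to one black token'' phase (governed by meeting times) from the ``white tokens find leaders'' phase (governed by hitting times).
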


\subsection{The token-based leader election protocol}

First, we recall that in the classic clique setting leader election can be solved by a simple 2-state protocol, where each node keeps track of whether it is a leader candidate or a follower. Whenever two leader candidates interact, the initiator stays as a leader candidate while the responder becomes a follower; no other type of interaction changes the state of nodes. 

The token-based protocol uses a similar approach. However, unlike in the clique, it may be impossible for two leader candidates to directly interact: they may not have a common edge in $G$. Instead, the nodes use tokens to interact indirectly. In each step, the nodes update their status by exchanging tokens between their interaction partners and at every time step each node holds exactly one token. 

There are three types of tokens: black, white, and inactive tokens. Initially, each leader candidate creates a black token. In each step, nodes exchange their tokens. Whenever two black tokens meet, exactly one of them turns into a white token while the other remains black. Informally, black tokens represent the presence of a leader candidate that has not been yet cancelled. A white token represents a leader candidate that will \emph{eventually} become a follower. Whenever a node that considers itself a leader candidate receives a white token, it changes its own status into a follower and deactivates the token. The invariant maintained by the protocol is that the total number of non-inactive tokens present in the system equals the number of leader candidates. By continuously shuffling the tokens, it is eventually guaranteed that the total number of black tokens becomes one and all other tokens become inactive.

\paragraph{The protocol.}
Formally, the state of each node $v$ is a tuple $(\ell, y)$, where $\ell \in \{\mathsf{leader},\mathsf{follower}\}$ is a bit indicating whether node $v$ is a leader candidate and $y \in \{ \mathsf{black}, \mathsf{white}, \mathsf{inactive}\}$ denotes the type of the token held by the node. As input, each node is given a bit indicating whether it is a leader candidate initially. Every node $v$ initialises its state using the following rules:
\begin{itemize}
    \item If $v$ is a leader candidate, then it sets $\ell(v) \gets \mathsf{leader}$ and $y(v) \gets \mathsf{black}$.
    \item Otherwise, it sets $\ell(v) \gets \mathsf{follower}$ and $y(v) \gets \mathsf{inactive}$.
\end{itemize}
When two neighbouring nodes $u$ and $v$ are selected to interact by the scheduler, we say that (also) the tokens held by the nodes interact. 
On every interaction, where node $u$ is the initiator and $v$ is the responder, the states are updated as follows:
\begin{enumerate}
    \item If $y(u)=y(v)=\mathsf{black}$ holds, then $y(v) \gets \mathsf{white}$. That is, if both tokens are black, then the token of the responder $v$ is coloured $\mathsf{white}$. 
    \item If the token held by $u$ is $\mathsf{white}$, $y(u) = \mathsf{white}$, and node $v$ is a $\mathsf{leader}$, $\ell(v) = \mathsf{leader}$, then 
    \begin{itemize}
        \item node $v$ designates itself as $\mathsf{follower}$, i.e., sets $\ell(v) \gets \mathsf{follower}$.
        \item node $u$ sets the type of its token to $y(u) \gets \mathsf{inactive}$.
    \end{itemize}
    \item Finally, the nodes $u$ and $v$ swap their tokens $y(u)$ and $y(v)$.
\end{enumerate}
For the formal proof of correctness, we refer to~\cite{beauquier2013self}. Here, we focus only on bounding the time for the protocol to stabilise under the uniform stochastic scheduler on $G$. 

\subsection{Bounding the hitting and meeting times of tokens}

To establish bounds on the stabilisation time of the leader election protocol, we analyse the hitting time and meeting time of tokens performing random walks on the graph $G$. Later, the stabilisation time of the above token-based leader election protocol can be bound using these quantities.
Before we proceed, we note the differences between the classic random walk process on a graph and the random walks made by the tokens in our process. 

\paragraph{Random walks on graphs.}
Recall that the classic random walk on a graph $G$ is the following Markov chain: Initially, a random walker (i.e.\ a token) is placed to some node $v$ of $G$. In each step, the random walker moves from $v$ to a some neighbor $u$ of $v$ chosen uniformly at random. A natural extension is to consider multiple, independent random walkers moving on the nodes of $G$: there may be several walkers  placed on nodes of $u$ and in every step each walker moves to a new random node independently of all the other nodes.

In contrast, in the population protocol model, we have to consider multiple tokens performing correlated random walks on $G$: in every step exactly two tokens move along the same edge, which is sampled uniformly at random. Nevertheless, we can carefully adapt and use analogous arguments to analyse the classic random walk (see e.g.~\cite{lovasz1993random}) and coalescence time of independent random walks as used by Cooper et al.~\cite{cooper2013coalescing}.
Naturally, the bounds we obtain are somewhat different, as the underlying sampling process is different, and we do not aim for sharp bounds.

\paragraph{Hitting times for irreducible Markov chains.}
We start by recalling the following elementary result about hitting times of Markov chains; see e.g.~\cite[Proposition 1.19]{levin2017markov}.
For states $x$ and $y$, the expected hitting time $H(x,y)$ between is 
\[
 H(x,y) = \E[ \min \{ t \ge 1 : X_t = y, X_0 = x\} ].
\]
For $x \neq y$, $H(x,y)$ is the expected number of steps until the chain starting in state $x$ reaches state $y$. For $x = y$ the value gives the expected first return time to state $x$.

\begin{lemma}\label{lemma:hitting}
For any finite and irreducible Markov chain, the stationary distribution $\pi$ satisfies
\[
 \pi(x) =\frac{1}{H(x,x)} \qquad \textrm{ for every state } x.
\]
\end{lemma}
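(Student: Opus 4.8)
The plan is to exhibit an explicit stationary \emph{measure} built from excursions away from the fixed state, and then invoke uniqueness of the stationary distribution. Throughout, fix a state $x$ and recall two standard facts that we use without proof (both in~\cite{levin2017markov}): since the chain is finite and irreducible it is positive recurrent, so $H(x,x) < \infty$ and the first return time $\tau^+_x = \min\{t \ge 1 : X_t = x\}$ is almost surely finite when the chain is started from $x$; and the stationary distribution $\pi$ is unique (and has full support). The goal is to show the canonical identity $\pi = \rho / H(x,x)$ for a measure $\rho$ with $\rho(x) = 1$.

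First I would define, for every state $y$, the expected number of visits to $y$ during one excursion from $x$, namely $\rho(y) = \E_x\!\left[\sum_{t=0}^{\tau^+_x - 1} \mathbf{1}\{X_t = y\}\right] = \sum_{t \ge 0} \Pr_x\!\left[X_t = y,\ \tau^+_x > t\right]$. Two observations are immediate: only the $t = 0$ term contributes to $\rho(x)$, so $\rho(x) = 1$; and summing over $y$ measures the total length of the excursion, so $\sum_y \rho(y) = \E_x[\tau^+_x] = H(x,x) < \infty$. The heart of the argument is to check that $\rho$ is $P$-invariant, i.e.\ $\sum_y \rho(y) P(y,z) = \rho(z)$ for all $z$. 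Writing $\rho(y)$ as the series above and using the Markov property, $\sum_y \Pr_x[X_t = y,\ \tau^+_x > t]\, P(y,z) = \Pr_x[X_{t+1} = z,\ \tau^+_x > t]$; summing over $t \ge 0$ and re-indexing gives $\sum_y \rho(y) P(y,z) = \sum_{s \ge 1} \Pr_x[X_s = z,\ \tau^+_x \ge s]$. For $z \ne x$ this equals $\rho(z)$, since $\{X_s = z,\ \tau^+_x \ge s\} = \{X_s = z,\ \tau^+_x > s\}$ and the $s = 0$ term of $\rho(z)$ vanishes; for $z = x$ it equals $\sum_{s \ge 1} \Pr_x[\tau^+_x = s] = \Pr_x[\tau^+_x < \infty] = 1 = \rho(x)$, which is exactly where recurrence is used.

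Having shown $\rho P = \rho$ with $0 < \sum_y \rho(y) = H(x,x) < \infty$, normalisation gives the stationary distribution $\rho / H(x,x)$, which by uniqueness equals $\pi$; evaluating at $x$ and using $\rho(x) = 1$ yields $\pi(x) = 1/H(x,x)$. A shorter and essentially equivalent route is first-step analysis: letting $h(y) = \E_y[\min\{t \ge 0 : X_t = x\}]$ (finite, with $h(x) = 0$ and $H(x,x) = 1 + \sum_z P(x,z) h(z)$), the hitting-time equations combine into the single identity $h(y) - \sum_z P(y,z) h(z) = 1 - H(x,x)\, \mathbf{1}\{y = x\}$, valid for all $y$; pairing both sides against $\pi$ and using $\sum_y \pi(y) \sum_z P(y,z) h(z) = \sum_z \pi(z) h(z)$ collapses the left side to $0$ and the right side to $1 - \pi(x) H(x,x)$. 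I expect the only genuinely delicate point to be the verification of $\rho P = \rho$ at $z = x$: the term-by-term manipulation is routine (all inner sums over states are finite since the state space is), but one must not forget that it relies on the chain being recurrent, so that $\Pr_x[\tau^+_x < \infty] = 1$; the rest is bookkeeping.
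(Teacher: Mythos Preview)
Your proof is correct; the occupation-measure argument you give (define $\rho$ via expected visits during an excursion from $x$, verify $\rho P=\rho$, normalise, and invoke uniqueness) is precisely the standard proof of this fact, and your first-step-analysis alternative is also valid. The paper itself does not prove this lemma at all: it simply quotes it as a known result with a pointer to \cite[Proposition~1.19]{levin2017markov}, so there is nothing to compare against beyond noting that your argument is essentially the one found in that reference.
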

Note that the above lemma does not require that the Markov chain is aperiodic. Indeed, the chains we consider will be periodic.

\paragraph{Random walk of a single token.}
Let $G = (V,E)$ be a simple, connected graph on $n \ge 2$ nodes, with $m \geq 1$ edges. 
We start by analysing the walk performed by a single token of the leader election algorithm under the population protocol model.
More precisely, we consider the following process. Initially, a token placed on a node of $G$. In each time step, an edge of $G$ is sampled uniformly at random. If the token is located at an endpoint of the sampled edge, then it moves to the other endpoint of that edge. Otherwise, the token stays put.

Formally, this corresponds to a Markov chain on the state space $V$. The probability that the chain transitions from state $u$ to $v$ is given by 
\[
P(u,v) = \begin{cases}
1/m & \textrm{if } \{u,v\} \in E \\
1 - d(u)/m & \textrm{if } u = v \\
0 & \textrm{otherwise}
\end{cases}
\]
for every $u,v \in V$. That is, if the token is on node $u$, the probability for the token to move to node $v$ in the next step is the probability that an edge between the two nodes, if existent, is chosen. 

The resulting Markov chain is irreducible, since the graph $G$ is connected. Note that this random walk differs from the classic random walk on $G$, where the token moves lazily to an adjacent node in each time step. First, we show that the uniform distribution on $V$ is the stationary distribution for this chain. 

\begin{lemma}\label{lemma:stationary-pp-walk}
The stationary distribution of the walk on $G$ is $\pi(v) = 1/n$ for every $v \in V$.
\end{lemma}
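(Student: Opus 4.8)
The plan is to verify directly that the uniform distribution $\pi(v) = 1/n$ satisfies the stationarity equation $\pi P = \pi$ for the transition matrix $P$ defined above. Concretely, I would fix an arbitrary node $v$ and compute
\[
(\pi P)(v) = \sum_{u \in V} \pi(u) P(u,v) = \frac{1}{n}\left( P(v,v) + \sum_{u : \{u,v\} \in E} P(u,v) \right).
\]
The key observation is that there are exactly $d(v)$ neighbours $u$ of $v$, each contributing $P(u,v) = 1/m$, while the self-loop term is $P(v,v) = 1 - d(v)/m$. Summing these gives $\bigl(1 - d(v)/m\bigr) + d(v) \cdot (1/m) = 1$, so $(\pi P)(v) = 1/n = \pi(v)$. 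Since $v$ was arbitrary, $\pi$ is stationary, and since the chain is irreducible (as $G$ is connected), the stationary distribution is unique, so it must be the uniform distribution.

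There is essentially no obstacle here: the statement is a one-line algebraic check exploiting the fact that the transition probability $1/m$ does not depend on the degree of the endpoint, which is precisely what makes the chain doubly stochastic (each column of $P$ also sums to $1$), and a doubly stochastic chain on $n$ states has the uniform distribution as a stationary distribution. One could alternatively phrase the whole argument via this double-stochasticity remark, which is cleaner: for each column $v$, $\sum_{u} P(u,v) = P(v,v) + \sum_{u \sim v} P(u,v) = 1$, so $P$ is doubly stochastic and $\pi = (1/n, \ldots, 1/n)$ is stationary; uniqueness follows from irreducibility.

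If anything needs care, it is merely making sure the indexing in the sum over neighbours is handled correctly (the edge $\{u,v\}$ is counted once, and $u \ne v$ since $G$ is simple, so the self-loop term is genuinely separate from the neighbour terms). I would also note explicitly that we do not invoke aperiodicity anywhere — indeed, as the surrounding text emphasises, this walk is periodic when $G$ is bipartite — which is fine since stationarity and its uniqueness under irreducibility do not require aperiodicity (cf.\ Lemma~\ref{lemma:hitting}). So the proof is short: establish double stochasticity of $P$, conclude $\pi$ is uniform, and invoke irreducibility for uniqueness.
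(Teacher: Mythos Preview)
Your proposal is correct and follows essentially the same approach as the paper: both verify $\pi P = \pi$ by computing $(\pi P)(v) = \frac{1}{n}\sum_u P(u,v) = \frac{1}{n}$, i.e., by observing that $P$ is doubly stochastic. You spell out the column-sum computation and add the uniqueness-by-irreducibility remark, which the paper leaves implicit, but the argument is the same.
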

\begin{proof}
Let $\pi$ be the uniform distribution on $V$. Recall that $\pi P$ denotes the application of the transition matrix $P$ to $\pi$. To establish our claim, we need to validate that $\pi = \pi P$ holds. For this, we observe that 
\[
(\pi P)(v) = \sum_{u } \pi(u) P(u,v) 
               =  \frac{1}{n} \sum_{u} P(u,v) = \frac{1}{n} = \pi(v).\qedhere
\]
\end{proof}
Next, using elementary arguments, we can bound the hitting time of any pair of nodes; see e.g.~\cite{lovasz1993random}.
In particular, we make use of the worst-case expected hitting time defined by
\[
H_\textrm{max} = \max \{ H(u,v) : u,v \in V, u \neq v\}.
\]
The following lemma shows that $H_\textrm{max} < \diam(G) \cdot nm$.

\begin{lemma}
For any graph $G$, $H(u,v) < \diam(G) \cdot nm$ for all $u,v \in V$. \label{lemma:hitting-time}
\end{lemma}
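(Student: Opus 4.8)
The plan is to bound $H(u,v)$ by first controlling the expected hitting time along a single edge, and then summing along a shortest path from $u$ to $v$. Concretely, fix any edge $\{x,y\} \in E$; I claim that starting from $x$, the expected time to reach $y$ is at most $nm$. To see this, consider the classic random walk on $G$ (where from a node $w$ the walker moves to a uniformly random neighbour). Our chain $P$ is a ``slowed down'' version of this walk: at each step it attempts to move along a uniformly random edge of $G$, so from node $w$ with degree $d(w)$ it makes a genuine move with probability $d(w)/m$, and otherwise stays put. Thus our chain is, up to a state-dependent time change, the classic lazy-ish random walk. The standard fact (see Lovász's survey) is that for the classic random walk on a connected graph, the expected time to traverse a fixed edge $\{x,y\}$ from $x$ to $y$ is less than $2m$; equivalently $H^{\mathrm{cl}}(x,x) + \text{(commute-type bound)}$ gives $H^{\mathrm{cl}}(x,y) < 2m$ for adjacent $x,y$.

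First I would make the time-change precise: let $(Y_t)$ be the classic random walk and let $N_t$ be the number of steps of $(X_t)$ needed to realise $t$ genuine moves. Since at node $w$ a genuine move happens with probability $d(w)/m \ge 1/m$, each ``real move'' of $(X_t)$ takes in expectation at most $m$ steps (a geometric waiting time with success probability at least $1/m$), regardless of history. Hence $\E[\,\text{hitting time of }y\text{ from }x\text{ under }P\,] \le m \cdot \E[\,\text{hitting time of }y\text{ from }x\text{ under classic walk}\,]$. Combining with the classic bound $H^{\mathrm{cl}}(x,y) < 2m$ for an edge would already give $H(x,y) < 2m^2$, which is too weak for adjacent nodes but in the right spirit; to get the cleaner $nm$ bound I would instead use the sharper edge bound $H^{\mathrm{cl}}(x,y) \le n$ is false in general, so the cleanest route is: for adjacent $x,y$, $H^{\mathrm{cl}}(x,y) < 2m$, and then iterate along a shortest path. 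Actually, to match the claimed $\diam(G)\cdot nm$, the right decomposition is: write a shortest path $u = w_0, w_1, \dots, w_\ell = v$ with $\ell \le \diam(G)$; then $H(u,v) \le \sum_{i=0}^{\ell-1} H(w_i, w_{i+1})$ by the strong Markov property and linearity (the walk from $u$ must pass through $w_1$ on some first-hitting-time decomposition — more carefully, $H(w_0,w_\ell) \le H(w_0,w_1) + H(w_1,w_\ell)$ holds for any states since reaching $w_\ell$ is at least as slow as first reaching $w_1$ then $w_\ell$). So it suffices to show $H(w_i,w_{i+1}) < nm$ for each adjacent pair.

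For the single-edge bound $H(x,y) < nm$ with $\{x,y\}\in E$: by \lemmaref{lemma:hitting} and \lemmaref{lemma:stationary-pp-walk}, the expected return time satisfies $H(y,y) = 1/\pi(y) = n$. The expected return time to $y$ can be decomposed by conditioning on the first step out of $y$: with probability $1/m$ the chain steps to $x$ (and then must return from $x$, taking $H(x,y)$ steps), and the remaining cases contribute nonnegatively. More precisely, $n = H(y,y) = \sum_{w} P(y,w)\bigl(1 + H(w,y)\bigr) = 1 + \sum_{w} P(y,w) H(w,y) \ge 1 + P(y,x) H(x,y) = 1 + \tfrac{1}{m} H(x,y)$, using the convention $H(y,y)$ on the right is replaced by the return decomposition and $H(w,y)\ge 0$. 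Rearranging gives $H(x,y) \le m(n-1) < nm$. Summing over the at most $\diam(G)$ edges of a shortest path yields $H(u,v) < \diam(G)\cdot nm$, as claimed.

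The main obstacle — really the only subtle point — is getting the single-edge return-time decomposition exactly right, in particular handling the holding probability $1 - d(y)/m$ at $y$ correctly and making sure the inequality $H(y,y) \ge 1 + P(y,x)H(x,y)$ is valid; this follows from the first-step analysis $H(y,y) = \sum_w P(y,w)(1 + H(w,y))$ together with dropping the nonnegative terms $P(y,w)H(w,y)$ for $w \ne x$ (including $w = y$, where $H(y,y) \ge 0$ trivially but one must be careful that the recursion is the standard one for expected return time, which it is). The path-concatenation step $H(u,v) \le \sum_i H(w_i,w_{i+1})$ is routine via the strong Markov property. Everything else is bookkeeping.
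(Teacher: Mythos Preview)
Your proof is correct and follows essentially the same route as the paper: both use $H(y,y)=1/\pi(y)=n$ from \lemmaref{lemma:hitting} and \lemmaref{lemma:stationary-pp-walk}, a first-step decomposition of the return time to extract $H(x,y)<nm$ for adjacent $x,y$, and then sum along a shortest $u$--$v$ path. The only blemish is the displayed identity $H(y,y)=\sum_w P(y,w)(1+H(w,y))$: under the paper's convention $H(y,y)\ge 1$, the $w=y$ term contributes $P(y,y)\cdot 1$ rather than $P(y,y)(1+H(y,y))$, so the correct decomposition is $H(y,y)=1+\sum_{w\ne y}P(y,w)H(w,y)$ --- but your derived inequality $n\ge 1+\tfrac{1}{m}H(x,y)$ and the conclusion are unaffected.
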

\begin{proof}
By \lemmaref{lemma:hitting} and \lemmaref{lemma:stationary-pp-walk}, we have $H(u,u) = 1/\pi(u) = n$. On the other hand, by calculating the expected hitting time in another way, we observe that 
\[
H(u,u) = 1-\frac{d(u)}{m} + \frac{1}{m} \sum_{\{u,w\} \in E} \left( 1 + H(u,w) \right) = n.
\] 
Thus, we get the inequality
\begin{align*}
 \sum_{\{u,w\} \in E} \left( 1 + H(u,w) \right) < nm.
\end{align*}
In particular, we have that $H(u,w) < nm$ for any edge $\{u,w\} \in E$. Since $G$ has diameter $D$, there is a path of length $u = u_0, \ldots, u_k = v$ at most $\diam(G)$ between any two nodes $u$ and $v$. Hence, by linearity of expectation, we get that
\[
H(u,v) \le \sum_{i=0}^{k-1} H(u_i,u_{i+1}) < \diam(G) \cdot nm. \qedhere
\]
\end{proof}

\paragraph{Meeting time of two tokens.}
We now consider the situation where a distinct token is placed on each node of $G$.
In each time step, an edge is chosen uniformly at random. Whenever the edge $\{u,v\}$ is sampled, the tokens at nodes $u$ and $v$ exchange places. Note that, individually, each of the tokens performs a random walk, but the random walks are \emph{not} independent. 

We say that two tokens \emph{meet} at time $t$ if the edge $\{u,v\}$ is sampled at time step $t$ and the two tokens are located at the nodes $u$ and $v$, respectively. From now on, we uniquely label the tokens from $0$ to $n-1$ and define the random variable $M(a,b)$ as the number of time steps until tokens $a$ and $b$ first meet, starting in the initial configuration. If $a=b$, then we follow the convention that $M(a,b)=0$. We are interested in bounding the largest first meeting time between any two pairs of tokens. To this end, we define
\[
M = \max \{ M(a,b) : a,b \in [n] \} \quad \textrm{ and } \quad M_\textrm{max} = \max_{a,b} \left \{ \E[M(a,b)] \right \}.
\]
The random variable $M$ is the largest first meeting time between any pairs of tokens in the token shuffling process and the quantity $M_\textrm{max}$ is the worst-case expected first meeting time between any two tokens.

\begin{lemma}\label{lemma:mmax-bound}
The expected worst-case first meeting time satisfies $M_\textrm{max} \in O(\diam(G)\cdot n^3 m)$.
\end{lemma}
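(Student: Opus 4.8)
The plan is to reduce the meeting time of the two correlated token walks to a \emph{single}-token hitting-time problem on a product-type state space, and then apply \lemmaref{lemma:hitting} together with a diameter-style path argument analogous to \lemmaref{lemma:hitting-time}. Fix two distinguished tokens, say $a$ and $b$. The configuration of the whole shuffling process is a permutation $\pi_t \in S_n$ recording where each token sits, but for the meeting event of $a$ and $b$ we only need to track the \emph{ordered pair} $(x_t, y_t) \in V \times V$ of their current locations, with $x_t \ne y_t$ until they meet. When an edge $e$ is sampled, the pair $(x_t,y_t)$ changes only if $e$ is incident to $x_t$ or to $y_t$: if $e = \{x_t, w\}$ with $w \ne y_t$, then $x_t$ moves to $w$; symmetrically for $y_t$; and if $e = \{x_t, y_t\}$ the two tokens swap, which \emph{does not} constitute a meeting under the paper's definition only if\,---\,wait, it does: sampling $\{x_t,y_t\}$ when the tokens are exactly there is precisely a meeting. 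So the absorbing set is $\{(u,v) : \{u,v\} \in E\}$ reached via the edge between them; strictly, the chain is on ordered pairs of distinct vertices and we stop the first time the sampled edge joins the two token locations.

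First I would set up this induced chain $(X_t, Y_t)$ on the state space $W = \{(u,v) \in V\times V : u \ne v\}$ and check it is irreducible on $W$ (using connectivity of $G$ and $n \ge 3$; the case $n=2$ is trivial since the single edge is sampled with probability $1$). The transition probabilities are: from $(u,v)$, for each neighbour $w$ of $u$ with $w \ne v$ move to $(w,v)$ with probability $1/m$; symmetrically move to $(u,w)$ with probability $1/m$; with probability $1/m$, if $\{u,v\}\in E$, we "meet"; and otherwise stay. The key structural observation is that this is \emph{again} an edge-sampling random walk, now on the line graph-like space $W$, where each step either moves one coordinate across a $G$-edge or triggers absorption. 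I would then compute the stationary distribution of the \emph{non-absorbing} version (allow $(u,v)\to(v,u)$ swap instead of absorbing) and show, exactly as in \lemmaref{lemma:stationary-pp-walk}, that it is uniform on $W$: each ordered pair is equally likely, because the dynamics are symmetric under relabelling vertices. Hence by \lemmaref{lemma:hitting}, the expected return time to any state $(u,v)$ in this non-absorbing chain is $|W| = n(n-1)$.

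Next, mimicking the proof of \lemmaref{lemma:hitting-time}, I would expand $H((u,v),(u,v)) = n(n-1)$ via a one-step analysis, summing over the (at most $2d \le 2m$) neighbours of $(u,v)$ in $W$, to conclude that $H((u,v),(u',v')) = O(n^2 m)$ whenever $(u,v)$ and $(u',v')$ are adjacent in $W$. Then, since moving from an arbitrary pair $(u,v)$ to an arbitrary pair $(u',v')$ can be done coordinate-by-coordinate along shortest $G$-paths, of total length at most $2\diam(G)$, linearity of expectation over this path in $W$ gives a worst-case hitting bound of $O(\diam(G)\cdot n^2 m)$ between arbitrary states of $W$. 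Finally, to bound $M_{\max}$ itself, I would argue that from any configuration the chain reaches \emph{some} absorbing-adjacent state, i.e.\ a state $(u,v)$ with $\{u,v\}\in E$, in expected $O(\diam(G)\cdot n^2 m)$ steps, and from such a state the absorbing edge is sampled with probability $1/m$ each step, contributing an extra factor $O(m)$\,---\,but this naive bound gives $O(\diam(G) n^2 m^2)$, which is worse than claimed; the fix is to instead observe we may aim for the \emph{specific} absorbing event by treating "meet" as hitting a designated absorbing state appended to $W$, and bound the hitting time to that absorbing state directly. Here one uses that the return-time / hitting-time identity forces the walk to traverse many adjacent-pair states, each of which is absorbing-adjacent, so absorption occurs within $O(n)$ such visits in expectation; chaining this with the $O(\diam(G)\cdot n^2 m)$ bound yields $M_{\max} = O(\diam(G)\cdot n^3 m)$.

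\textbf{Main obstacle.} The delicate point is the very last step: controlling the \emph{absorption} time rather than a plain hitting time. The two walks are correlated through the single sampled edge, and a clean decoupling is not available; moreover getting the exponent of $n$ right (rather than an extra factor of $m$ or $n$) requires carefully exploiting that absorption is possible from a constant fraction of "nearby" states, so that the walk cannot avoid it for long. I expect the cleanest route is to define the absorbing chain on $W \cup \{\star\}$ with $\star$ the meeting state, note irreducibility of the modified reversible chain obtained by replacing absorption with self-loops, and push the $H(x,x) = 1/\pi(x)$ identity through on that enlarged space, so that the factor of $n$ separating $n^2 m$ from $n^3 m$ emerges from the stationary mass of $\star$ being $\Theta(1/n)$-ish rather than from a lossy union bound.
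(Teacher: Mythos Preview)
Your overall strategy—reduce the two-token dynamics to an edge-sampling walk on a product state space, verify the uniform stationary distribution, and reuse the one-step/diameter-path argument of \lemmaref{lemma:hitting-time}—is exactly the paper's, and your hitting-time computation on $W$ is correct.

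The gap is the one you flag yourself and do not close: in your chain the meeting is the first \emph{traversal of a swap edge}, not the hitting of a vertex, and neither of your proposed fixes works as written. ``Replace absorption by self-loops'' still leaves $\star$ absorbing, so the chain on $W\cup\{\star\}$ is not irreducible and \lemmaref{lemma:hitting} does not apply; the ``hit an adjacent pair, then wait geometric$(1/m)$'' idea only bounds the meeting time conditional on \emph{staying} at that pair, and carried out honestly costs the extra factor you already noticed. The paper's missing ingredient is a \emph{parity trick}: use unordered pairs $\{v_0,v_1\}$ together with a bit $b\in\{0,1\}$ recording the parity of the number of meetings so far. Non-swap transitions keep $b$; sampling the edge $\{v_0,v_1\}$ when the tokens sit at $v_0,v_1$—which would be a self-loop in the plain unordered chain—flips $b$. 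Now the first meeting time is exactly the hitting time of the other parity class, hence at most the hitting time of a single specified vertex of the doubled graph $G^*$, and \lemmaref{lemma:stationary-pp-walk} plus \lemmaref{lemma:hitting-time} apply verbatim to $G^*$ (with $|V^*|=\Theta(n^2)$, $|E^*|=O(nm)$, $\diam(G^*)=O(\diam(G))$) to give the stated bound.
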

\begin{proof}
We keep track of the locations of the two tokens and the \emph{parity} of the number of times the two tokens have met. To this end, we define the graph $G^* = (V^*, E^*)$ with 
\[
V^* = V^*_0 \cup V^*_1, \textrm{ where } V^*_b = \{ ( \{v_0,v_1\} ,b ) : v_0,v_1 \in V, v_0 \neq v_1 \} \textrm{ for } b \in \{0,1\}
\]
and $\{ (\{u_0,u_1\},b), (\{v_0,v_1\},b' ) \} \in E^*$ if either of the following two conditions hold:
\begin{enumerate}[noitemsep]
    \item $b=b'$, $u_i = v_i$ and $\{v_{1-i}, u_{1-i}\} \in E$ for some $i \in \{0,1\}$, or
    \item $b \neq b'$ and $\{u_0,u_1\} = \{v_0,v_1\}$. 
\end{enumerate}
One can check that the degree $d(x)$ of any node $x = (\{v_0,v_1\},b) \in V^*$ is at most $d(v_0) + d(v_1)$. Define the transition matrix
\[
P^*\left( x,y \right) = \begin{cases}
 1/m & \textrm{if } \{x,y\} \in E^* \\
 1-d(x)/m & \textrm{if } x=y \\
 0 & \textrm{otherwise.}
 \end{cases}
\]
Consider an arbitrary initial configuration and two tokens located at nodes $v_0$ and $v_1$ of $G$. The expected meeting time of these two tokens is the same as the expected time to reach from $(v_0,v_1,0) \in V^*_0$ to any node in $V^*_1$ by the random walk given by $P^*$.

Observe that $G^*$ has $\Theta(n^2)$ vertices and $O(nm)$ edges. The first claim is immediate. For the second claim observe that
\begin{align*}
  |E^*| &= \frac{1}{2} \cdot \sum_{x \in V^*} d(x) \\
  &\le \sum_{ \{ u,v \} \subseteq V : u \neq v } (d(u) + d(v)) \le (n-1) \left( \sum_{u} d(u) + \sum_{v} d(v) \right) \in O(mn).
\end{align*}
Next note that $G^*$ satisfies $\diam(G^*) \in O( \diam(G) )$, since we can move the two tokens from any two distinct vertices to any other pair of distinct vertices using $O(\diam(G))$ transitions.

Finally, note that the Markov chain is irreducible. By the same arguments as in \lemmaref{lemma:stationary-pp-walk} and \lemmaref{lemma:hitting-time}, we get that the hitting time is at most $O(\diam(G)\cdot n^3 m)$, since $|V^*| \in \Theta(n^2)$, $|E^*| \in \Theta(nm)$, and the diameter of $G^*$ is $\Theta(\diam(G))$.
\end{proof}

Fix an arbitrary constant $c \ge 2$ and set
\[
T = \left \lceil 2  \cdot \max\{ H_\textrm{max}, M_\textrm{max}\} \right \rceil \qquad R = \lceil (c+2) \log n \rceil  \qquad T^* = R T.
\]
We show that in $T^*$ steps all pairs of tokens have met with high probability. 

\begin{remark}
For any $0 < p < 1$, the following inequality holds:
\[
\sum_{k=0}^\infty (k+1) p^k = \frac{1}{(p-1)^2}.
\]
\end{remark}

\begin{remark}
Let $A_0, \ldots, A_R$ be events. Then
\[
\Pr\left[ \bigcap_{i=0}^R A_i \right] = \Pr[A_0] \cdot \prod_{i=1}^k \Pr[A_i \mid A_{i-1}].
\]
\end{remark}

\begin{lemma}\label{lemma:m-bound}
We have $\Pr[M \ge T^*] \le 1/n^c$ and $\E[M] \le 4T^*$.
\end{lemma}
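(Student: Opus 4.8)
The plan is to control the first meeting time $M(a,b)$ of any two fixed tokens via a standard restart argument, and then take a union bound over all pairs. First I would observe that by \lemmaref{lemma:mmax-bound} we have $\E[M(a,b)] \le M_\textrm{max} \in O(\diam(G)\cdot n^3 m)$, so by Markov's inequality, for $T = \lceil 2\max\{H_\textrm{max}, M_\textrm{max}\}\rceil \ge 2\,\E[M(a,b)]$, the probability that tokens $a$ and $b$ have \emph{not} met within $T$ steps is at most $1/2$. The key point is that this bound holds \emph{regardless of the configuration at the start of the window}: the quantity $M_\textrm{max}$ is a worst-case expectation over all starting configurations (over all pairs of starting nodes for the two tokens), so the Markov chain is memoryless in the sense that after any $T$-step block in which the two tokens have not met, the conditional expected time for them to meet from the new configuration is still at most $M_\textrm{max}$.

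Next I would partition the first $T^* = RT$ steps into $R = \lceil (c+2)\log n\rceil$ consecutive blocks of length $T$, and let $A_i$ be the event that tokens $a$ and $b$ have not met by the end of block $i$. Using the conditioning identity in the second \texttt{remark} above (the product form of $\Pr[\bigcap A_i]$) together with the observation that $\Pr[A_i \mid A_{i-1}] \le 1/2$ for each $i$ — since conditioned on not having met, the worst-case expected meeting time from the current configuration is still at most $M_\textrm{max} \le T/2$, so Markov's inequality applies within each block — we obtain $\Pr[M(a,b) \ge T^*] \le 2^{-R} \le n^{-(c+2)}$. A union bound over the at most $\binom{n}{2} < n^2$ pairs of tokens then gives $\Pr[M \ge T^*] \le n^2 \cdot n^{-(c+2)} = n^{-c}$, which is the first claim.

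For the expectation bound $\E[M] \le 4T^*$, I would iterate the tail estimate: the same block argument shows that for any integer $j \ge 1$, $\Pr[M \ge j T^*] \le n^2 \cdot 2^{-jR} \le (n^{-c})^j \le 2^{-j}$ (using $c \ge 2$ and $n \ge 2$, so $n^{-c} \le 1/4 \le 1/2$). Then $\E[M] = \sum_{j \ge 0} \Pr[M > jT^*]\cdot(\text{something})$ — more precisely, writing $\E[M] \le T^* \sum_{j \ge 0} \Pr[M \ge j T^*] \le T^* \sum_{j\ge 0} 2^{-j} = 2T^*$, which is even stronger than claimed; to be safe and match the stated constant I would just bound $\E[M] \le T^* + \sum_{j\ge 1} T^* \Pr[M \ge jT^*] \le T^*(1 + \sum_{j\ge 1} 2^{-j}) = 2T^* \le 4T^*$. (The first \texttt{remark}, on $\sum_k (k+1)p^k$, is available if a slightly more careful summation is needed when one does not reset cleanly at block boundaries.)

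The main obstacle I anticipate is making the per-block restart argument fully rigorous: one must argue carefully that the bound $\E[M_\textrm{current config}] \le M_\textrm{max}$ genuinely holds after conditioning on $A_{i-1}$, i.e., that the process $(G^*$ random walk$)$ restarted from whatever (adversarially-correlated) configuration it is in at time $(i-1)T$, conditioned on not yet having met, still has expected hitting time to $V^*_1$ at most $M_\textrm{max}$. This follows because $M_\textrm{max}$ was \emph{defined} as a maximum over all pairs of starting positions and the chain is Markovian, but the conditioning on the event $A_{i-1}$ (a function of the past trajectory only through the current state, by the Markov property) needs to be spelled out so that the strong Markov property applies at the stopping-time-like block boundaries. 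Everything else — the union bound, the geometric tail summation — is routine.
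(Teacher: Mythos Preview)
Your proposal is correct and takes essentially the same approach as the paper: Markov's inequality in each $T$-block, chaining over $R$ blocks via the Markov property, a union bound over the $\le n^2$ token pairs, and then a geometric tail sum for the expectation. The only cosmetic difference is that the paper bounds $\E[M]$ via $\sum_{k\ge 0}(k+1)T^*\Pr[M\ge kT^*]$ and invokes the identity $\sum_k(k+1)2^{-k}=4$, whereas you use the (slightly tighter) layer-cake bound $\E[M]\le T^*\sum_{j\ge 0}\Pr[M\ge jT^*]=2T^*$; both establish the stated $4T^*$ bound.
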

\begin{proof}
Let $M_t(a,b)$ be the first meeting time between tokens $a$ and $b$ after $t$ steps and $M_t = \max_{a,b} M_t(a,b)$. Note that $M_0(a,b) = M(a,b)$ and $M_0 = M$. First, we show that for any $a,b \in [n]$ and $t \ge 0$, the inequality 
\[
\Pr[M_t(a,b) \ge T^*] \le 1/n^{c+2}
\]
holds. For $a=b$ the claim is vacuous, so assume $a \neq b$. By Markov's inequality, the probability that tokens $a$ and $b$ do not meet within $T-1$ steps starting from any configuration $x_t$ is 
\[
\Pr[M_t(a,b) \ge T] \le \frac{\E[M_t(a,b)]}{T} \le \frac{M_\textrm{max}}{T} \le \frac{1}{2}.
\]
By repeating the experiment $R$ times, we observe that the probability that the tokens $a$ and $b$ do not meet within $T^* = RT$ steps (starting from any configuration $x_t$) satisfies
\begin{align*}
  \Pr[M_t(a,b) \ge RT] &= \Pr[ M_t(a, b) \ge T ] \cdot \prod_{r=1}^{R-1} \Pr[ M_{t+Tr}(a, b) \ge T \mid M_{t+T(r-1)}(a,b) \ge T ] \\
&\le  \prod_{r=1}^R \frac{1}{2} = \left(\frac{1}{2}\right)^{\lceil (c+2) \log n \rceil} \le \frac{1}{n^{c+2}}.
\end{align*}
Finally, we can bound the probability that any pair of tokens fails to meet before $T^*$ steps by applying the union bound:
\[
\Pr[M_t \ge T^*] \le \sum_{a,b} \Pr[ M_t(a,b) \ge T^* ] \le \sum_{a,b} \frac{1}{n^{c+2}} \le \frac{n^2}{n^{c+2}} = \frac{1}{n^c}. 
\]
Observe that for any $k > 0$ we have 
\begin{align*}
  \Pr[M_t \ge kT^*]
&=\Pr[M_t \ge T^*] \cdot \prod_{i=1}^{k-1} \Pr[M_{t+iT^*} \ge T^* \mid M_{t+(i-1)T^*} \ge T^*] 
                \le \prod_{i=1}^{k} \frac{1}{n^c} \le \frac{1}{2^k}, 
\end{align*}
since $n \ge 2$ and $c \ge 1$. To bound the expectation observe that 
\begin{align*}
\E[M] = \E[M_0] &\le \sum_{k=0}^{\infty} (k+1)T^* \cdot \Pr[M_0 \ge kT^*] \\
      &\le T^* \sum_{k=0}^{\infty} \frac{k+1}{2^{k}} = 4T^*. \qedhere            
\end{align*}
\end{proof}

\subsection{Stabilisation time of the token-based protocol}

We analyse the dynamics of the token-based leader election protocol. Let $C$ be the number of time steps until a single black token remains.

\begin{lemma}
The random variable $C$ satisfies $\Pr[C \ge T^*] \ge 1/n^c$.\label{lemma:c-bound}
 \end{lemma}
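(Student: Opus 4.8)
Since a lower bound of $1/n^c$ on $\Pr[C \ge T^*]$ would be vacuous for our purposes, the plan is to establish the upper bound $\Pr[C \ge T^*] \le 1/n^c$ — the form in which this estimate feeds into \theoremref{thm:token-le} — by reducing it to the meeting-time bound of \lemmaref{lemma:m-bound}. All the probabilistic content will be inherited from that lemma; the new ingredient is a purely deterministic comparison $C \le M$, where $M$ is the largest pairwise first meeting time of the $n$ labelled tokens defined above.

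First I would record a monotonicity observation about token \emph{types}. Inspecting the three transition rules of the protocol: rule~1 can only recolour a $\mathsf{black}$ token to $\mathsf{white}$, rule~2 can only deactivate a $\mathsf{white}$ token, the $\mathsf{inactive}$ state is terminal, and the token swap of rule~3 never changes any token's type; moreover no token is created after initialisation. Tracking each token by label (its current holder need not be its initial one), it follows that once a token leaves the $\mathsf{black}$ state it is never $\mathsf{black}$ again — equivalently, if a token is $\mathsf{black}$ after step $t$, then it was $\mathsf{black}$ after every step $t' \le t$.

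Using this, I would prove $C \le M$. Fix an execution and consider the configuration after step $M$; the claim is that at most one token is $\mathsf{black}$. Suppose instead two distinct tokens $x$ and $y$ are both $\mathsf{black}$ after step $M$. By definition of $M$, tokens $x$ and $y$ met at some step $t \le M$, and by the monotonicity observation both were $\mathsf{black}$ just before that meeting; hence the interaction at step $t$ triggers rule~1, which recolours one of $x, y$ to $\mathsf{white}$. That token is then non-$\mathsf{black}$ after every step $\ge t$, in particular after step $M$, a contradiction (the boundary case $t = M$ is identical, as the recolouring takes place within step $M$). So at most one token is $\mathsf{black}$ after step $M$. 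On the other hand, the number of $\mathsf{black}$ tokens starts equal to the number $\ell \ge 1$ of leader candidates, and each meeting of two $\mathsf{black}$ tokens lowers it by exactly one while leaving at least one $\mathsf{black}$ token, so it never drops below one. Therefore exactly one $\mathsf{black}$ token remains after step $M$, i.e.\ $C \le M$; the degenerate cases $\ell \le 1$ give $C = 0$ outright.

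The conclusion is then immediate: $C \le M$ forces the event $C \ge T^*$ to be contained in the event $M \ge T^*$, so $\Pr[C \ge T^*] \le \Pr[M \ge T^*] \le 1/n^c$ by \lemmaref{lemma:m-bound}. I expect the only delicate point to be the $C \le M$ step — one must be careful to track tokens by label rather than by node, to confirm that a meeting of two $\mathsf{black}$ tokens genuinely activates rule~1 (and not merely the swap of rule~3), and to check that rule~3 preserves the set of currently-$\mathsf{black}$ tokens across a swap. Once this bookkeeping is settled, no further probabilistic estimates are needed.
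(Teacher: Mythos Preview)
Your proposal is correct and follows the same route as the paper: reduce to \lemmaref{lemma:m-bound} via a comparison between $C$ and $M$. In fact you are more careful than the paper's own proof, which asserts the equality $C = M$ (not literally true, since $C$ can be strictly smaller once some tokens have already turned white), whereas you establish the inequality $C \le M$, which is all that is needed; you also correctly flag the $\ge$/$\le$ typo in the statement.
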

 \begin{proof}
 Observe that $C$ corresponds to the time when the last pair of black tokens meet. Now 
\[
C = \max \{ M(a,b) : a \neq b \} = M
\]
and the claim follows from \lemmaref{lemma:m-bound}.
\end{proof}

Let $L$ be the stabilisation time of the protocol, that is, the time until there is exactly one leader candidate remaining. Recall that a leader candidate becomes a follower if it receives a white token from some other node and a follower never becomes a candidate again. Thus, a node is a leader candidate at step $t$ if and only if it has not been hit by a white token. Whenever a white token hits a leader candidate, the token becomes inactive. This ensures that a single leader is always elected, as there will be exactly $n-1$ white tokens created during the execution of the protocol.

\begin{lemma}\label{lemma:hitting-time-bound}
Let $u$ and $v$ be distinct nodes. The probability that a token starting from $u$ does not hit $v$ within $T^*$ steps it at most $1/n^{c+2}$.
\end{lemma}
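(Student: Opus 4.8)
The plan is to mirror the structure of the single-token hitting time bound already established in \lemmaref{lemma:m-bound}, but now for the walk of one distinguished (white) token towards a fixed target node $v$. First I would observe that, while a white token does not perform the same walk as in \lemmaref{lemma:stationary-pp-walk} (it only moves when the scheduler samples an incident edge, and it only exists after being created), from the moment of its creation its position evolves as the single-token Markov chain with transition matrix $P$ analysed in \lemmaref{lemma:stationary-pp-walk} and \lemmaref{lemma:hitting-time}; the presence of other tokens is irrelevant to \emph{its} trajectory. Hence the expected time for such a token to reach $v$ from any current location is at most $H_\textrm{max} < \diam(G)\cdot nm$ by \lemmaref{lemma:hitting-time}.

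Next I would apply Markov's inequality on blocks of length $T = \lceil 2\max\{H_\textrm{max}, M_\textrm{max}\}\rceil$: starting from any configuration $x_t$, the probability that the token fails to hit $v$ within $T-1$ steps is at most $H_\textrm{max}/T \le 1/2$. Then, exactly as in the proof of \lemmaref{lemma:m-bound}, I would chain $R = \lceil (c+2)\log n\rceil$ such blocks using the conditioning identity from the second Remark (the Markov property lets us bound each conditional probability by $1/2$ regardless of the configuration at which the block starts), obtaining
\[
\Pr[\text{token from } u \text{ does not hit } v \text{ within } T^* \text{ steps}] \le \Bigl(\tfrac12\Bigr)^{\lceil (c+2)\log n\rceil} \le \frac{1}{n^{c+2}},
\]
which is precisely the claimed bound, since $T^* = RT$.

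The only genuinely delicate point — and the step I expect to be the main obstacle — is justifying that the ``restart from an arbitrary configuration'' argument is legitimate here: one must be careful that the white token of interest is a fixed, well-defined token throughout (it is, since tokens are uniquely labelled and a white token's colour and activity status do not change until it hits a leader candidate, at which point it is deactivated — but if $v$ is a leader candidate we simply want the hitting event, and if $v$ is not a candidate the token passes through harmlessly), and that the bound $\E[M_t(\cdot)] \le H_\textrm{max}$ holds uniformly over starting configurations because $H_\textrm{max}$ is already a worst-case-over-start-states quantity. Once that bookkeeping is in place, the computation is a verbatim repeat of the block-and-chain argument in \lemmaref{lemma:m-bound}, so I would keep it short and simply cite that proof for the repeated-experiment step.
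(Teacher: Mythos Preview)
Your proposal is correct and follows essentially the same approach as the paper: apply Markov's inequality with the bound $H_{\textrm{max}}/T \le 1/2$ on each block of $T$ steps, then chain $R = \lceil (c+2)\log n\rceil$ blocks to get the $1/n^{c+2}$ bound. The paper's own proof is in fact the two-line version of your argument; your added justification that a single token's marginal trajectory in the swap process coincides with the single-token walk of \lemmaref{lemma:stationary-pp-walk}--\lemmaref{lemma:hitting-time} is a point the paper leaves implicit, so including it does no harm.
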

\begin{proof}
By \lemmaref{lemma:hitting-time} and Markov's inequality, the probability that a token starting from $u$ does not hit node $v$ in $T$ steps is bounded by
\[
 \frac{H(u,v)}{T} \le  \frac{H_\textrm{max}}{2 H_\textrm{max}} \le 1/2.
\]
Again, by repeating experiment for $R$ times, we get that with probability at most $2^{-R} \le 1/n^{c+2}$ the token starting from $u$ does not hit $v$ in $RT = T^*$ steps.
\end{proof}

\begin{lemma}
The random variable $L$ satisfies $\Pr[L < 2T^*] \le 2/n^c$ and $\E[L] \le 8T^*$.
\end{lemma}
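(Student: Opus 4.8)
The first claim should read $\Pr[L \ge 2T^*] \le 2/n^c$ (the displayed inequality in the statement points the wrong way); I would prove this together with $\E[L] \le 8T^*$. The plan is to split the first $2T^*$ steps into two phases of $T^*$ steps: the first phase drives the number of black tokens down to one, and the second phase absorbs all surviving white tokens into inactive tokens. For the first phase I would invoke \lemmaref{lemma:c-bound} (equivalently \lemmaref{lemma:m-bound}): on the event $E_1 = \{C \le T^*\}$, which has probability at least $1-1/n^c$, at most one black token remains from step $T^*$ on. On $E_1$ there are no further black–black collisions after step $T^*$, hence no new white tokens are created, and, by the protocol invariant, the number of leader candidates at any step $t \ge T^*$ equals one plus the number of non-inactive white tokens. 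Thus at most $n-1$ white tokens, all already present at step $T^*$, must become inactive before the protocol stabilises.

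For the second phase I would use the hitting-time estimate. Let $E_2$ be the event that, within the window $[T^*, 2T^*]$, every labelled token initiates an interaction with every node of $G$ (the initiator/responder orientation required by rule~(2) of the protocol costs only a constant factor, absorbed into $T$). Applying \lemmaref{lemma:hitting-time-bound} from the arbitrary configuration reached at step $T^*$ and taking a union bound over the at most $n^2$ token–node pairs gives $\Pr[\neg E_2] \le n^2/n^{c+2} = 1/n^c$. The two monotonicity facts I would then use are that a follower never becomes a leader candidate again, and an inactive token never becomes active again.

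The heart of the argument is the deterministic claim that, on $E_1 \cap E_2$, exactly one leader candidate remains at step $2T^*$. Suppose not: then there are at least two leader candidates at step $2T^*$, so by the invariant (on $E_1$) some labelled white token $w$ is still wandering at step $2T^*$, and some node $v$ is a leader candidate at step $2T^*$. By monotonicity $v$ is a leader candidate throughout $[T^*, 2T^*]$. On $E_2$, token $w$ initiates an interaction with $v$ at some step in this window; at that step rule~(2) fires, turning $w$ inactive and $v$ into a follower — contradicting both that $w$ is still wandering at step $2T^*$ and that $v$ is a leader candidate then. Since there is always at least one (black, hence non-inactive) token, ``at most one leader candidate'' forces ``exactly one'', so $L \le 2T^*$ on $E_1 \cap E_2$, and a union bound yields $\Pr[L \ge 2T^*] \le \Pr[\neg E_1] + \Pr[\neg E_2] \le 2/n^c$.

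Finally, for $\E[L] \le 8T^*$ I would run the standard restart argument, exactly as in \lemmaref{lemma:m-bound}: the argument above applies from any configuration, so the protocol stabilises within $2T^*$ further steps with probability at least $1-2/n^c \ge 1/2$ (using $n \ge 2$, $c \ge 2$), whence $\Pr[L \ge 2kT^*] \le 2^{-k}$ for all $k \ge 0$, and by the geometric-series remark
\[
\E[L] \le \sum_{k=0}^{\infty}(k+1)\,2T^*\,\Pr[L \ge 2kT^*] \le 2T^*\sum_{k=0}^{\infty}\frac{k+1}{2^{k}} = 8T^*.
\]
The step I expect to be the main obstacle is the second phase: the naive strategy of pre-assigning each white token a distinct target leader candidate fails, because a target can be absorbed by a \emph{different} white token first and would then need re-assignment. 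The ``every token initiates with every node'' event, combined with the monotonicity of leader-candidacy, circumvents this bookkeeping entirely; the only remaining subtlety is that rule~(2) requires the white token's holder to be the initiator, which merely changes constants in the hitting-time bound.
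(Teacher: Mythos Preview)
Your proof is correct and follows essentially the same approach as the paper: first bound the time for the black tokens to reduce to one via \lemmaref{lemma:c-bound}, then union-bound over all token--node pairs via \lemmaref{lemma:hitting-time-bound} to show all remaining white tokens are absorbed within a further $T^*$ steps, and finally run the geometric restart argument for the expectation. Your explicit deterministic contradiction argument for the second phase and your remark on the initiator/responder orientation are more careful than the paper's somewhat terse treatment, but the structure and the arithmetic are identical.
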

\begin{proof}
Observe that conditioned on the event that there is only one black token remaining, the probability that some node $v$ does not become a follower is bounded by the event that node $v$ does not receive a white token. This is in turn bounded by the probability of the event that \emph{some} token does not visit $v$ within $T^*$ steps.
Hence,
\begin{align*}
 \Pr[L \ge t + T^* \mid C  < t ] &\le \sum_{v} \Pr[\textrm{node } v \textrm{ is not hit by a white token by time } t + T^* \mid C < t] \\
 &\le \sum_{v} \sum_a \Pr[\textrm{node } v \textrm{ is not hit by token } a \textrm{ by time } t + T^*] \\
 &\le \sum_v \sum_a \frac{1}{n^{c+2}} \\ 
 &\le \frac{1}{n^c},
\end{align*}
where in the second to last step we applied \lemmaref{lemma:hitting-time-bound} and in the last step the fact that there at most $n$ leader candidate nodes and $n$ tokens. By law of total probability, we get that
\begin{align*}
\Pr[L \ge 2T^*] &= \Pr[L \ge 2T^* \mid C < T^*] \cdot \Pr[C < T^*] + \Pr[L \ge 2T^* \mid C \ge T^*] \cdot \Pr[C \ge T^*] \\
&\le (1/n^c) \cdot (1-1/n^c) + 1/n^c  \\
&\le 2/n^c,
\end{align*}
where in the second to last step we apply the bound $\Pr[C \ge T^*] \le 1/n^c$ given by \lemmaref{lemma:c-bound}. 
Since $c \ge 2$ and considering repeated stabilisation attempts, we get that
\begin{align*}
  \E[L] &\le 2T^* \sum_{k=0}^{\infty} (k+1) \Pr[ L > 2kT^* ] \\
      &\le 2T^* \sum_{k=1}^{\infty} (k+1) \left(\frac{2}{n^c}\right)^k \\
      &\le 2T^* \sum_{k=1}^{\infty} (k+1) \left(\frac{1}{2}\right)^k \le 8T^*. \qedhere
\end{align*}
\end{proof}

\paragraph{Proof of \theoremref{thm:token-le}.}
\begin{proof}
The protocol uses only 6 states as each node only stores whether it is a leader and what is the type of its token. Since $T^* = O(\max \{ H_\textrm{max}, M_\textrm{max}\} \cdot\log n)$ and $H_\textrm{max} \in O(\diam(G) \cdot nm)$ by \lemmaref{lemma:hitting-time} and $M_\textrm{max} \in O(\diam(G) \cdot n^3 m )$ by \lemmaref{lemma:mmax-bound}. Thus, the protocol stabilises in $O(\diam(G) \cdot n^3 m \log n)$ steps w.h.p. and in expectation.
\end{proof}

\section{Conclusions}\label{sec:conclusions}

As our main result, we established a general framework for simulating clique-based protocols in arbitrary, connected regular graphs. We now conclude by briefly discussing some limitations of our approach and summarise key problems left open by this work:
\begin{itemize}
\item We assume that the nodes have access to a single random bit per interaction. The random bits are used only by the shuffling protocol of \sectionref{sec:simulation} to avoid technical parity issues arising in the mixing of the random walks on the symmetric group. It seems plausible that this assumption can be avoided, by exploiting the stochastic nature of the population protocol scheduler to e.g.\ generate synthetic coins~\cite{alistarh2018space-optimal} or to argue that these parity issues are avoided by the virtue of having a random number of shuffling steps.

\item We assume that in each interaction step in the population protocol model, one of the interacting nodes is assigned to be an initiator and the other a responder to provide elementary symmetry-breaking. This is again a common assumption in population protocol literature. The simulation framework uses this assumption only in the construction of the phase clock, where in certain situations ties need to be broken. It again seems plausible that this assumption can be avoided, but this would necessitate revisiting the involved graphical load balancing argument of Peres et al.~\cite{peres2014graphical} with different tie-breaking. 

\item We focus on regular interaction graphs. The justification for this assumption is two-fold. First, this assumption is only used once: in \sectionref{sec:clocks}, to obtain clean bounds for the skew of the phase clock. However, upon close inspection, we notice that this regularity assumption can be relaxed in many cases if the minimum and maximum degrees do not deviate too much from the average degree of the graph.  As \theoremref{thm:routability-stacked-ip-mixing} can be used to bound the mixing time of the interchange process in non-regular graphs as well, we can use our simulation framework to obtain fast leader election and exact majority algorithms also on some non-regular graphs. 
See Appendix~\ref{app:non-regular} for a formal statement and a concrete illustration.

Second, regular graphs are also justified by the fact that they provide an immediate extension of the notion of \emph{parallel time}: the expected number of interactions in any time interval is the same for all nodes, and prior work on this problem has naturally focused on them~\cite{draief2012-convergence, cooper2016-fast}.  Nevertheless, obtaining bounds for phase clocks and related load balancing processes in non-regular graphs remains an interesting open problem.

\item The simulation overhead has a polylogarithmic dependency on $n$. To simplify the  presentation, we have made no particular effort to optimise the degree of this polylogarithmic dependency. The dependency can be improved by providing better bounds on the $k$-stack interchange process. Indeed, even in the case of the well-studied (1-stack) interchange process, exact bounds on mixing time have been---and still remain---an open question for many graph classes~\cite{jonasson2012interchange}. Improved bounds for these processes imply better running time bounds for our simulations.

\item Our complexity bounds have a quadratic dependency on $d/\beta$. We conjecture a polynomial dependency on the expansion properties is necessary for step complexity and leave the investigation of tight space-time trade-offs for population protocols in the general graphical setting as an intriguing open problem. 
\end{itemize}

\subsection*{Acknowledgements}
We thank Giorgi Nadiradze for pointing out the generalisation of the phase clock construction to non-regular graphs. We also thank anonymous reviewers for their useful comments on earlier versions of this manuscript. This project has received funding from the European Research Council (ERC) under the European Union's Horizon 2020 research and innovation programme (grant agreement No.\ 805223 ScaleML), and from the European Union’s Horizon 2020 research and innovation programme under the Marie Sk{\l}odowska-Curie grant agreement No.\ 840605.

\bibliographystyle{plain}
\bibliography{population-protocols}

\newpage

\appendix

\section{Analysis of the $k$-stack interchange process}\label{apx:interchange}

In this section, we will give a slightly more refined analysis of the mixing time bound for the $k$-stack interchange process provided by \theoremref{thm:d-beta-mixing}.

\subsection{Mixing time bound via low-congestion and low-dilation routings}

\paragraph{Routings on graphs.} 
A \emph{routing} on a graph $G$ is a map $f$ that takes every pair of vertices onto a path $f(u,v)$ in $G$ connecting the vertices $u$ and $v$. 
The \emph{congestion} of a routing is the maximum number of paths on which any edge appears and its
\emph{dilation} is the length of the longest path in the routing. A graph $G$ is $(C,D)$-routable if there exists a routing with congestion $C$ and dilation $D$.

\paragraph{The mixing time bound.}
We now aim to establish the following result bounds the mixing time of the $k$-stack interchange process on a $(C, D)$-routable graph. 

\begin{theorem}\label{thm:routability-stacked-ip-mixing}
  Suppose $G$ is $(C,D)$-routable graph with $n$ vertices and $m$ edges.
  For any constant $k > 0$, the $k$-stack interchange process has mixing time
  \[
  O\left( n \log n \cdot \max \left\{ \frac{CDm}{n^2}, D \right\} \right).
  \]
\end{theorem}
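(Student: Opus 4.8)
The plan is to realise the $k$-stack interchange process as a random walk on the symmetric group $S_N$ with $N = kn$, and to bound its $\ell^2$-mixing time by the comparison method of Diaconis and Saloff-Coste~\cite{diaconis1993comparison}, taking the random-transpositions shuffle on $S_N$ (the interchange process on $K_N$) as the reference chain. Index the $N$ cards by the slots $(v,i)$ with $v$ a vertex of $G$ and $0 \le i < k$, the top of stack $v$ being $(v,0)$; the increment distribution $\mu$ is then supported on the identity (weight $1/4$), the stack shifts $c_v$ (each of weight $1/(2n)$), and the top-swap transpositions $\tau_{uv}$ for $\{u,v\} \in E$ (each of weight $1/(4m)$). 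For $k \le 2$ this $\mu$ is symmetric; for $k \ge 3$ the shift breaks symmetry, which I would handle in the usual way, passing to the symmetrisation $\tfrac12(\mu + \check\mu)$, whose relaxation controls the $\ell^2$-decay of the $\mu$-walk, and noting $c_v^{-1} = c_v^{k-1}$ so that an inverse shift costs only $O(k)$ generators. Since $2\tvnorm{p^{(t)} - \nu} = d_1(t) \le d_2(t)$, it suffices to bound $d_2$. The reference chain has $\ell^2$-mixing time $\Theta(N\log N) = \Theta(n\log n)$ by Diaconis and Shahshahani~\cite{diaconis1981generating}, which I would record in Nash-inequality form; the comparison then reduces the theorem to exhibiting, for each transposition $(a\ b)$ of two slots, a word $\gamma_{(a\,b)}$ over $\operatorname{supp}\mu$ equal to $(a\ b)$, and to bounding the comparison constant
\[
A = \max_{g \in \operatorname{supp}\mu}\ \frac{1}{\mu(g)} \sum_{(a\,b)\,:\, g \in \gamma_{(a\,b)}} |\gamma_{(a\,b)}| \cdot N(g,(a\,b)) \cdot \tilde\mu((a\ b)),
\]
after which $\taumix = O(A \cdot n\log n)$ follows from the Nash form of the comparison (using only the relaxation-time bound would cost the extra factor $\log|S_N| = \Theta(n\log n)$, which must be avoided).

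The words are built from a routing $f$ witnessing $(C,D)$-routability of $G$. Given $(a\ b)$ with $a$ at $(u,i)$, $b$ at $(v,j)$, and $u \ne v$: apply shifts of $u$ to raise slot $i$ to the top ($\le k-1$ of them), likewise for $v$, transport the two top cards to one another along $f(u,v) = (u = w_0, w_1, \dots, w_\ell = v)$ with $\ell \le D$ using the standard decomposition $(w_0\ w_\ell) = \tau_{w_0 w_1}\cdots\tau_{w_{\ell-1}w_\ell}\cdots\tau_{w_0 w_1}$ of length $2\ell - 1$, and undo the shifts. When $u = v$, write $(a\ b) = (a\ t)(t\ b)(a\ t)$ through the top slot $t$ of a fixed neighbour, reducing to the previous case on a single edge. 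Each word has length $O(D + k) = O(D)$ for constant $k$, uses each edge of $f(u,v)$ only $O(1)$ times, uses $c_w$ only for $w \in \{u,v\}$ (the internal vertices of $f(u,v)$ have only their top slots touched, and those are returned to place) and then $O(k)$ times, and every word lies in $\operatorname{supp}\mu$. In particular $\operatorname{supp}\mu$ generates $S_N$ whenever $G$ is connected, so the walk is irreducible.

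It remains to bound $A$, which splits by the type of generator. A word $\gamma_{(a\,b)}$ contains $\tau_{uv}$ only if $\{u,v\}$ lies on the routing path between the stacks of $a$ and $b$; the number of vertex-pairs with this property is at most the congestion $C$, hence at most $O(Ck^2)$ transpositions use $\tau_{uv}$. With $\mu(\tau_{uv}) = 1/(4m)$, $|\gamma_{(a\,b)}| = O(D)$, $N(\tau_{uv},\cdot) = O(1)$ and $\tilde\mu(\cdot) = \Theta(1/N^2)$, the $\tau_{uv}$-term of $A$ is $O(m D C k^2 / N^2) = O(CDm/n^2)$. A word contains $c_v$ only if $v$ is the stack of $a$ or of $b$, so at most $O(kN)$ transpositions use $c_v$; with $\mu(c_v) = 1/(2n)$, $|\gamma_{(a\,b)}| = O(D)$, $N(c_v,\cdot) = O(k)$ and $\tilde\mu(\cdot) = \Theta(1/N^2)$, the $c_v$-term is $O(n D k^2 N / N^2) = O(D)$. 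Therefore $A = \max\{O(CDm/n^2),\, O(D)\}$, and $\taumix = O(A\cdot n\log n)$ gives the claimed bound.

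The main obstacle is the congestion accounting of the last paragraph: arranging the powers of $k$ to cancel against the $1/N^2 = 1/(k^2 n^2)$ weight for \emph{both} generator types, and checking that the $u=v$ case and the mildly asymmetric shifts $c_v$ do not spoil this — together with invoking the sharp ($\ell^2$/Nash-inequality) form of the Diaconis--Saloff-Coste comparison rather than the cruder spectral-gap form, since the latter would inflate the final mixing-time bound by a factor of $n$.
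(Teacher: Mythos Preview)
Your proposal is correct and follows essentially the same route as the paper: represent each transposition via stack-shifts plus the edge-transpositions along a $(C,D)$-routing path, bound the Diaconis--Saloff-Coste comparison constant as $O(\max\{CDm/n^2,\,D\})$ (edge-generators contributing the first term, shift-generators the second), and invoke the sharp $\ell^2$ form of the comparison against the random-transpositions shuffle rather than the spectral-gap form to get $\taumix = O(A\cdot N\log N)$. You are in fact slightly more careful than the paper in one place: you flag that for $k \ge 3$ the cycle $c_v$ is not self-inverse so $\mu$ is not symmetric, and propose to pass to the symmetrisation; the paper states its comparison lemma for symmetric $\mu$ and applies it without commenting on this point.
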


Before we give a proof of the above theorem, we note that \theoremref{thm:d-beta-mixing} is straightforward corollary of the above theorem and the next lemma, which follows from the results of Leighton and Rao~\cite[Theorem 18]{leighton1999multicommodity}.

\begin{lemma}\label{lemma:bounded-degree-routing}
  If $G$ is a $d$-regular graph with edge expansion $\beta$, then $G$ is $(C,D)$-routable with parameters
  \[
  C \in O\left(\frac{ n \log n}{\beta} \right) \quad \textrm{ and } \quad D \in O\left(\frac{d \log n}{\beta}\right).
  \]
\end{lemma}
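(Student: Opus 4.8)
I would obtain this lemma as a constructive consequence of the Leighton--Rao approximate max-flow/min-cut theorem for uniform multicommodity flow, followed by a standard randomized rounding step. The first thing is to translate the edge-expansion hypothesis into the two quantities that are actually needed: the uniform sparsest-cut ratio and the diameter. For any cut $(S,\bar S)$ with $|S|\le n/2$ we have $|\partial S|\ge\beta|S|$ and $|\bar S|<n$, so
\[
\Phi^\ast \;=\; \min_{\emptyset\neq S\subsetneq V}\frac{|\partial S|}{|S|\,|\bar S|}\;\ge\;\frac{\beta}{n}.
\]
A routine ball-growing argument then bounds the diameter: for any vertex the radius-$r$ ball satisfies $|B_{r+1}|\ge(1+\beta/d)\,|B_r|$ as long as $|B_r|\le n/2$ (each boundary edge of $B_r$ reaches a vertex of $B_{r+1}\setminus B_r$, and each such vertex absorbs at most $d$ of them), hence $|B_r|>n/2$ once $r=O((d/\beta)\log n)$, and pigeonholing two such balls gives $\diam(G)=O((d/\beta)\log n)$.

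Next I would bring in the flow. Consider the uniform concurrent multicommodity flow instance on $G$ with unit edge capacities and one unit of demand between every unordered pair of vertices, and let $f^\ast$ be its optimum. By Leighton and Rao~\cite[Theorem 18]{leighton1999multicommodity}, $f^\ast=\Omega(\Phi^\ast/\log n)=\Omega(\beta/(n\log n))$, and moreover the region-growing argument underlying that theorem produces a flow in which each commodity is routed along paths of length $O(m/(f^\ast n^2))=O((d/\beta)\log n)$ (recall $m=dn/2$ for a $d$-regular graph). Write $D=O((d/\beta)\log n)$ for this dilation bound and path-decompose the flow, giving commodity $(u,v)$ weights $w^{uv}_P$ on $(u,v)$-paths of length $\le D$ with $\sum_P w^{uv}_P=f^\ast$. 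To turn this into an integral routing, sample for each commodity independently a single path $P^{uv}$ with probability $w^{uv}_P/f^\ast$. For any edge $e$, the congestion $\mathrm{cong}(e)=\sum_{\{u,v\}}\mathbf 1[e\in P^{uv}]$ is a sum of independent indicators with
\[
\E[\mathrm{cong}(e)]\;=\;\frac{1}{f^\ast}\sum_{\{u,v\}}\ \sum_{P\ni e} w^{uv}_P\;=\;\frac{(\text{flow of the optimum on }e)}{f^\ast}\;\le\;\frac{1}{f^\ast}\;=\;O\!\left(\frac{n\log n}{\beta}\right),
\]
so a Chernoff bound together with a union bound over the $m\le n^2$ edges shows that with high probability every edge has congestion $O((n\log n)/\beta)$, while by construction every chosen path has length at most $D=O((d\log n)/\beta)$. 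Taking $C=O((n\log n)/\beta)$ gives the claim; alternatively, if the constructive form of \cite[Theorem 18]{leighton1999multicommodity} already delivers the integral paths directly, this rounding step can be omitted.

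The only non-routine point is the dilation guarantee used above: one needs to know that the Leighton--Rao flow--cut gap survives when flow is restricted to short paths, i.e.\ that the \emph{length-bounded} uniform concurrent flow still has value $\Omega(\Phi^\ast/\log n)$ with length bound $O((d/\beta)\log n)$. This does not follow from treating Leighton--Rao as a black box; it requires the structure of their proof (the sphere-growing/region-growing decomposition, which routes each demand through $O(\log n)$ rounds of controlled radius) combined with the diameter bound from the first paragraph. The expansion-to-sparsest-cut inequality and the Chernoff rounding are entirely standard, so I expect this length-bounded flow-cut estimate to be the main thing to get right.
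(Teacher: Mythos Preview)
The paper does not prove this lemma at all: it simply remarks that the statement ``follows from the results of Leighton and Rao~\cite[Theorem~18]{leighton1999multicommodity}.'' That theorem is not the flow--cut gap you invoke; it is already the finished routing statement for regular expanders. In Leighton--Rao's notation a $d$-regular graph with edge expansion $\beta$ has flux $\Phi=\beta/d$, and their Theorem~18 asserts directly that one can route a path between every pair of vertices with congestion $O(n\log n/(\Phi d))=O(n\log n/\beta)$ and dilation $O(\log n/\Phi)=O(d\log n/\beta)$. So the paper's ``proof'' is a one-line citation, and no flow decomposition or randomized rounding is needed.

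Your reconstruction from the uniform multicommodity flow-cut gap plus Chernoff rounding is a sound alternative route, and you correctly flag its one genuine subtlety: the flow--cut gap as a black box yields only a congestion bound, and the dilation guarantee requires the internals of the region-growing construction (or some separate short-paths argument). Two small corrections: first, you attribute the flow--cut gap $f^\ast=\Omega(\Phi^\ast/\log n)$ to Theorem~18, but that is an earlier theorem in \cite{leighton1999multicommodity}; Theorem~18 is the routing result itself. Second, the expression $O(m/(f^\ast n^2))$ for path length is an \emph{average}-length bound (total capacity divided by total flow), not a worst-case one; you would still need to argue, as you acknowledge in your final paragraph, that the Leighton--Rao construction controls the maximum path length---which is exactly what their Theorem~18 packages up. In short, your plan is correct but does strictly more work than the paper, which simply quotes the ready-made statement.
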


\subsection{Preliminaries: The path comparison method}

Our analysis is based on the comparison method developed by Diaconis and Saloff-Coste~\cite{diaconis1993comparison}.
Specifically, let $\mu$ and $\tilde{\mu}$ be increment distributions whose supports $H$ and $\tilde{H}$ generate the symmetric group $S_N$. For each element $a \in \tilde{H}$, choose a representation $a = x_1 \cdots x_k$, where $k$ is odd and $x_i \in H$ for $1 \le i \le k$. Let $|a| = k$ and $R(x,a)$ denote the number of times $x$ appears in the representation of $a$.
Define
\[
B(x) = \frac{1}{\mu(x)} \sum_{a \in S_N} \tilde{\mu}(a) R(x,a) |a| \quad \textrm{ and } \quad B = \max_{x \in H} B(x).
\]
The quantity $B$ is called the \emph{bottleneck ratio} of the representation and it is useful for bounding $\ell^2$-distance to stationarity. We use the following special case of a lemma from \cite[Lemma 5]{diaconis1993comparison}.

\begin{lemma}[\cite{diaconis1993comparison}]\label{lemma:d2-distances}
  Let $\mu$ and $\tilde{\mu}$ be symmetric increment distributions that generate the symmetric group $S_N$.
  Let $B$ be the bottleneck ratio for a representation as defined above.
  Then
  \[
d_2^2(t) \le N! \cdot \exp\left({-\frac{t}{B}} \right) + \widetilde{d_2^2} \left( \left \lfloor \frac{t}{2B} \right\rfloor \right).
  \]
\end{lemma}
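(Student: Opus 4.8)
The plan is to apply the path-comparison method---concretely \lemmaref{lemma:d2-distances}---comparing the $k$-stack interchange walk on $S_N$, $N=kn$, against the random transposition shuffle $\tilde\mu$ on $S_N$, whose $\ell^2$-mixing time is the classical $\Theta(N\log N)$ bound of Diaconis and Shahshahani. Write $\rho_i\in S_N$ for the cyclic shift of the $k$ positions of node $i$ (``move the top card to the bottom of the stack'') and $\tau_{\{i,j\}}=(ik\ jk)$ for the top-card swap across edge $\{i,j\}$. Together with $\id$, these form the support $H$ of the increment distribution $\mu$, with $\mu(\rho_i)=\Theta(1/n)$ and $\mu(\tau_e)=\Theta(1/m)$, while the support $\tilde H$ of $\tilde\mu$ consists of all transpositions of $[N]$. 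The method asks for an odd-length word over $H$ realizing each transposition $a$; the resulting bottleneck ratio $B=\max_{x\in H}\mu(x)^{-1}\sum_{a}\tilde\mu(a)R(x,a)|a|$ then controls $\taumix$ through \lemmaref{lemma:d2-distances}.

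\textbf{Representations of transpositions.} First I would fix a routing $f$ on $G$ of congestion $C$ and dilation $D$, with all paths simple (shortcut any that is not). For a cross-stack transposition $a=(ik+p\ jk+q)$ with $i\neq j$, I would write $a=\alpha\,\omega(i,j)\,\alpha^{-1}$, where $\omega(i,j)=\tau_{e_1}\cdots\tau_{e_{\ell-1}}\tau_{e_\ell}\tau_{e_{\ell-1}}\cdots\tau_{e_1}$ is the length-$(2\ell-1)$ conjugated ``path word'' realizing $(ik\ jk)$ along $f(i,j)=(w_0,\dots,w_\ell)$, $\ell\le D$, and $\alpha=\rho_i^{p}\rho_j^{q}$ takes $ik\mapsto ik+p$, $jk\mapsto jk+q$; expanding $\alpha^{-1}=\rho_i^{k-p}\rho_j^{k-q}$ yields a word over $H$. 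For a within-stack transposition $a=(ik+p\ ik+q)$, I would pick any edge $e=\{i,u\}$ incident to $i$ and use $(ik+p\ uk)=\rho_i^{p}\tau_e\rho_i^{k-p}$ together with the identity $(x\ y)(y\ z)(x\ y)=(x\ z)$ to express $a$ as a word over $\{\rho_i,\tau_e\}$. In either case $|a|=O(k+D)$, which is $O(D)$ for constant $k$, and parity can be made odd by inserting $\rho_i^{k}=\id$ if needed. Each such word restores every intermediate stack and the scratch edge to its original state, so $a$ is realized exactly.

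\textbf{Bounding the bottleneck ratio.} Next I would bound $B$ generator by generator. A rotation $\rho_i$ occurs only in the words of transpositions having an endpoint in stack $i$, of which there are $O(kN)$, each contributing $\tilde\mu(a)R(\rho_i,a)|a|=\Theta(N^{-2})\cdot O(k)\cdot O(D)$; with $\mu(\rho_i)=\Theta(1/n)$,
\[
B(\rho_i)=\Theta(n)\cdot O(kN)\cdot\Theta(N^{-2})\cdot O(kD)=O\!\left(\tfrac{nD}{N}\right)=O(D).
\]
An edge-swap $\tau_e$ with $e=\{u,v\}$ occurs only in words of cross-stack transpositions whose routing path uses $e$---at most $k^2C$ of them, by the definition of congestion---each contributing $\Theta(N^{-2})\cdot O(1)\cdot O(D)$, plus the $O(k^2)$ within-stack transpositions of $u$ or $v$ that pick $e$ as scratch edge (contributing $\Theta(N^{-2})\cdot O(1)$); with $\mu(\tau_e)=\Theta(1/m)$,
\[
B(\tau_e)=\Theta(m)\Bigl(O(k^2C)\cdot\Theta(N^{-2})\cdot O(D)+O(k^2)\cdot\Theta(N^{-2})\Bigr)=O\!\left(\tfrac{mCD}{N^2}\right)=O\!\left(\tfrac{mCD}{n^2}\right).
\]
Hence $B=O\!\bigl(\max\{D,\,mCD/n^2\}\bigr)$.

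\textbf{Conclusion and main obstacle.} Finally I would feed this into \lemmaref{lemma:d2-distances}: since $\log N!=\Theta(N\log N)$ and the random transposition shuffle satisfies $\widetilde{d_2^2}(s)\le\tfrac14$ once $s=\Omega(N\log N)$, taking $t=\Theta\bigl(B\cdot N\log N\bigr)$ drives both right-hand terms below a constant, so $d_1(t)\le d_2(t)=O(1)$ and $\taumix=O\bigl(B\cdot N\log N\bigr)=O\bigl(n\log n\cdot\max\{mCD/n^2,\,D\}\bigr)$, using $N=kn=\Theta(n)$ for constant $k$. I expect the main obstacle to be the combinatorial construction of the representations: routing the two designated cards to the tops of their stacks so that one conjugated path word realizes an arbitrary transposition while every intermediate stack and the scratch edge are restored, and then verifying the crucial accounting that each generator is used only by the ``local'' transpositions it should be, so that the congestion bound $C$ enters $B(\tau_e)$ directly. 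A secondary technical point is that for $k\ge 3$ the rotation $\rho_i$ is not an involution, so $\mu$ is not literally symmetric as \lemmaref{lemma:d2-distances} requires; this is handled by comparing instead against the symmetrized rotation move (``move the top card to the bottom or the bottom card to the top, each with probability $1/2$''), which leaves the set of realizable permutations and the bottleneck ratio unchanged up to constants.
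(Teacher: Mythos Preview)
Your proposal is not a proof of the stated lemma. \lemmaref{lemma:d2-distances} is a comparison inequality for random walks on $S_N$, quoted from Diaconis and Saloff-Coste~\cite{diaconis1993comparison}; the paper does not prove it and merely cites it. What you have written is instead a proof of \theoremref{thm:routability-stacked-ip-mixing}, the mixing-time bound for the $k$-stack interchange process, which \emph{applies} \lemmaref{lemma:d2-distances} as a black box. You even say so in your first sentence: ``The plan is to apply \ldots\ \lemmaref{lemma:d2-distances}.'' A proof of the lemma itself would have to bound the Dirichlet form of $\mu$ in terms of that of $\tilde\mu$ via the path representations, translate this into an eigenvalue comparison, and then convert to the stated $\ell^2$ decay estimate---none of which you touch.

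Taken as a proof of \theoremref{thm:routability-stacked-ip-mixing}, your write-up is essentially the paper's argument: the same partition into within-stack and cross-stack transpositions, the same use of the $(C,D)$-routing to build path words of length $O(k+D)$, and the same generator-by-generator bottleneck accounting leading to $B=O(\max\{D,\,mCD/n^2\})$ and hence $\taumix=O(Bn\log n)$. Your concrete representations differ slightly (you conjugate by $\rho_i^p\rho_j^q$ rather than the paper's interleaved rotations, and you invoke $(x\,y)(y\,z)(x\,y)=(x\,z)$ for the within-stack case), but the combinatorics and the resulting bounds coincide. Your closing remark that $\mu$ is not symmetric for $k\ge 3$ because the stack rotation is not an involution is a genuine observation the paper glosses over; your proposed fix (symmetrize the rotation move) is the natural one and costs only constants.
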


\paragraph{Random transpositions shuffle.}
The \emph{random transpositions shuffle} is a random walk on the symmetric group given by the increment distribution
\[
\mu(x) = \begin{cases}
  1/N & \textrm{if } x = \id \\
  2/N^2 & \textrm{if } x = (i~j) \\
  0 & \textrm{otherwise.}
\end{cases}
\]
We compare the $k$-stack interchange process against the random transpositions shuffle, whose mixing behaviour is well-understood:
Diaconis and Shahshahani~\cite{diaconis1981generating} gave the following bound on the mixing time of the random transpositions shuffle.

\begin{lemma}\label{lemma:random-transpositions-d2}
  Let $\mu$ be the increment distribution for the random transpositions shuffle on $S_N$. There exists a universal constant $C$ such that for any $c \ge 0$ and $t = \lfloor N (\log N + c) \rfloor$, 
  \[
 d^2_2(t) \le C e^{-2c}.
  \]
\end{lemma}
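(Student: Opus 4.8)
The plan is to reconstruct the Fourier-analytic argument of Diaconis and Shahshahani, exploiting that the increment distribution $\mu$ is a class function on $S_N$: it puts mass $1/N$ on $\id$ and spreads the remaining $1-1/N$ uniformly over the conjugacy class of transpositions, so $\mu$ is constant on conjugacy classes. By Schur's lemma, for every irreducible representation $\rho^\lambda$ of $S_N$ indexed by a partition $\lambda \vdash N$, the Fourier transform $\widehat{\mu}(\lambda) = \sum_x \mu(x)\rho^\lambda(x)$ equals a scalar $r_\lambda \cdot I$. Averaging the character identity $\sum_{\tau}\rho^\lambda(\tau) = \binom{N}{2}\frac{\chi^\lambda(\tau)}{d_\lambda} I$ over all transpositions $\tau$ gives the explicit eigenvalue
\[
r_\lambda = \frac1N + \Bigl(1 - \frac1N\Bigr)\frac{\chi^\lambda(\tau)}{d_\lambda},
\]
where $d_\lambda$ is the dimension and $\chi^\lambda$ the character of $\rho^\lambda$; the classical content-sum formula $\frac{\chi^\lambda(\tau)}{d_\lambda} = \binom{N}{2}^{-1}\sum_i \bigl[\binom{\lambda_i}{2} - \binom{\lambda_i'}{2}\bigr]$ (with $\lambda'$ the conjugate partition) makes each $r_\lambda$ completely explicit.

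\textbf{The upper bound lemma.} Since $p^{(t)}$ is the $t$-fold convolution of $\mu$, we have $\widehat{p^{(t)}}(\lambda) = \widehat{\mu}(\lambda)^t = r_\lambda^t I$, and since $\mu$ is symmetric each $r_\lambda$ is real. Applying the Plancherel identity on $S_N$ to $p^{(t)} - \nu$ (whose transform vanishes at the trivial representation) yields
\[
d_2^2(t) = N!\, \bigl\| p^{(t)} - \nu \bigr\|_2^2 = \sum_{\lambda \neq (N)} d_\lambda^2\, r_\lambda^{2t},
\]
the sum over all nontrivial partitions of $N$. It remains to bound this sum by $C e^{-2c}$ at $t = \lfloor N(\log N + c)\rfloor$.

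\textbf{Bounding the spectral sum.} The dominant term is the standard representation $\lambda = (N-1,1)$, for which $d_\lambda = N-1$ and $r_\lambda = 1 - 2/N$, so its contribution is $(N-1)^2(1-2/N)^{2t} \le (N-1)^2 e^{-4t/N}$, which at $t = \lfloor N(\log N+c)\rfloor$ is $O(N^{-2} e^{-4c})$ — comfortably below $C e^{-2c}$. The conjugate representation $(2,1^{N-2})$ has the same dimension and $|r_{(2,1^{N-2})}| = 1 - 4/N \le 1-2/N$, so it is handled identically. For every remaining partition one groups the terms according to $\ell := \max\{\lambda_1, \lambda_1'\}$: by the content-sum formula, if $\lambda_1 = \ell$ then $r_\lambda \le \frac1N + \binom{\ell}{2}/\binom{N}{2}$, which is bounded below $1$ by a gap of order $(N-\ell)/N$ (and symmetrically when $\lambda_1' = \ell$), while the hook-length formula gives a dimension bound polynomial in $N$ for each fixed co-length $N-\ell$. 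Multiplying $d_\lambda^2$ against $r_\lambda^{2t}$ and summing over the (at most $p(N)$) remaining partitions — dominated by the first few values of $N-\ell$ — shows the tail is $o(e^{-2c})$. Combining with the dominant term gives $d_2^2(t) \le C e^{-2c}$.

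\textbf{Main obstacle.} I expect the combinatorial tail estimate to be the only delicate part: one must balance the rapidly growing dimensions $d_\lambda^2$ against the eigenvalue decay $r_\lambda^{2t}$ \emph{uniformly} over all irreducibles of $S_N$, which requires the careful hook-length and content-sum bookkeeping carried out in~\cite{diaconis1981generating}. Everything else — reducing to eigenvalues via Schur's lemma, the Plancherel identity for $d_2^2(t)$, and the dominant-term computation — is routine once the representation-theoretic framework is in place, so in the paper this lemma is simply quoted from~\cite{diaconis1981generating}.
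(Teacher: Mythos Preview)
Your proposal is correct: the paper gives no proof of this lemma and simply cites it as a known result of Diaconis and Shahshahani~\cite{diaconis1981generating}, exactly as you anticipated in your final paragraph. Your sketch of the Fourier-analytic argument (class-function increment distribution, Schur's lemma reducing $\widehat{\mu}(\lambda)$ to scalars, Plancherel expansion of $d_2^2(t)$, and the dominant-term plus tail bookkeeping) is a faithful outline of that original proof.
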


\subsection{Proof of Theorem \ref*{thm:routability-stacked-ip-mixing}}

We are now ready to give the proof of \theoremref{thm:routability-stacked-ip-mixing}, which we break into parts.

\paragraph{Increment distribution for the $k$-stack interchange process.}
Label the cards (tokens) from $\{0, \ldots, nk-1\}$ and write $u_i = uk+i$ for $u \in [n]$ and $i \in [k]$ so that $u_i$ denotes the $i$th card of node $u \in V$. Thus, $u_0$ is the top card and $u_{k-1}$ is the bottom card on the stack located at node $u$. Let $\sigma_u = (u_0 ~ u_{k-1} ~ u_{k-2} \ldots u_1)$ denote the permutation which moves the top card of $u$ to the bottom of its stack. Recall that $(u_0~v_0)$ is the transposition along an edge for neighbouring $u \neq v$. The increment distribution $\mu$ of the random walk is given by
\[
\mu(x) = \begin{cases}
  1/(2n) & \textrm{if } x = \sigma_u \textrm{ for some } u \in V \\
  1/(4m) & \textrm{if } x = (u_0~v_0) \textrm{ for some } \{u,v\} \in E \\
  1/4   & \textrm{if } x = \id \\
  0 & \textrm{otherwise.}
\end{cases}
\]
Thus, the support of the increment distribution of the $k$-stack interchange process is
\[
H = \{ \id \} \cup \{ \sigma_u : u \in V \} \cup \{ (u_0 ~ v_0) : \{u,v\} \in E \}.
\]

\paragraph{Comparison with random transpositions shuffle.}
  Let $\tilde{\mu}$ be the increment distribution of the random transpositions shuffle on $S_N$ and $\tilde{H} = \{ (u_i~v_j) : u,v \in V, i,j \in [k]\}$ the support of $\tilde{\mu}$. We start by choosing a representation for each transposition $(u_i~v_j) \in \tilde{H}$ in terms of an odd number of elements in $H$. After this, we bound the bottleneck ratio $B$ of the chosen set of representations and apply \lemmaref{lemma:d2-distances} to bound the mixing time.

\paragraph{Bounding the bottleneck ratio.} Consider the partition $\tilde{H} = \{ \id \} \cup \tilde{H}_1 \cup \tilde{H}_2$, where
  \[
\tilde{H}_1 = \{ (u_i~u_j) : u \in V, 0 \le i < j < k \} \quad \textrm{ and } \quad \tilde{H}_2 = \{ (u_i~v_j) : u,v \in V, u \neq v, 0 \le i < j < k  \}.
\]
For each $a \in \tilde{H}$ we find a representation in terms of elements in $H$ depending on which of the three parts $a$ belongs to. The identity element of $\tilde{H}$ can be represented by the identity of $H$, so we need to only find odd-length representations for elements in $\tilde{H}_1$ and $\tilde{H}_2$:
\begin{itemize}
  \item Suppose $a = (u_i~u_j) \in \tilde{H}_1$. We can represent this as
\[
(u_i ~ u_j) = \sigma_u^{k-i} \cdot (u_0~v_0) \cdot \sigma_u^{k-j+i} \cdot (u_0 ~ v_0) \cdot \sigma_u^{j-i} \cdot (u_0 ~ v_0) \cdot \sigma_u^{i},
\]
where $\sigma_u^i$ stands for $i$ repetitions of $\sigma_i$ and $v$ is a fixed neighbour of $u$ in $G$. The length $|a|$ of the representation is $2k+3$ and $R(x,a) \le \max\{ 2k, 3 \} \le 2k+1$ for each $x \in H$.

\item Suppose $a = (u_i~v_j) \in \tilde{H}_2$. Since the graph $G$ is $(C,D)$-routable, there exists a path $u = w^0, \ldots, w^\ell = v$ of length $1 \le  \ell \le D$ connecting $u$ and $v$ in $G$. Let
\[
 \rho_{uv} = (w^0_0 ~ w^1_0) \cdots (w^{\ell-1}_0 ~ w^{\ell}_0) \cdot (w_0^{\ell-1}~w_0^{\ell-2}) \cdots (w_0^1 ~ w_0^0)
 \]
 be the sequence of $2\ell - 1$ transpositions. With this, we can represent $a = (u_i~v_j)$ as
 \[
(u_i~v_j) = \sigma_u^{k-i} \cdot \sigma_v^{k-j} \cdot \rho_{uv} \cdot \sigma_v^{j} \cdot \sigma_u^i.
 \]
 Now $|(u~v_j)| = 2k+2\ell - 1 \le 2(k + D) - 1$. Note that $\sigma_u$ and $\sigma_v$ are used both at most $k$ times and $\rho_{uv}$ uses each edge-wise transposition at most twice. Hence $R(x,a) \le 2k$.
\end{itemize}
Next, we bound $B(x)$ for each $x \in H$. There are again three cases to consider:
\begin{itemize}
  \item Suppose $x = \id$. Since the identity element is only used to represent itself, we have that $B(\id) = \tilde{\mu}(\id)/\mu(\id) = 4/kn  \in O(1)$.

\item Suppose $x = \sigma_u \in H$ for some $u$. Note that $\mu(\sigma_u) = 1/(4n)$ and $\sigma_u$ appears in $k^2-1$ representations of elements in $\tilde{H}_1$  and $(k^2-1)n$ representations of elements in $\tilde{H}_2$. Thus,
\begin{align*}
  B(x) &= \frac{1}{\mu(x)} \sum_{a \in S_N} \tilde{\mu(a)} R(x,a) |a| = \frac{8}{nk^2} \left[ \sum_{a \in \tilde{H}_1} R(x,a) |a| + \sum_{a \in \tilde{H}_1} R(x,a) |a| \right].
\end{align*}
The sum over $\tilde{H}_1$ is bounded by $O(k^4)$ and the sum over $\tilde{H}_2$ by $O(nk^4 + nk^3 D)$. Hence, $B(x) \in O(k^2 + Dk)$.

\item Suppose $x = (u_0~v_0) \in H$ for some $\{u,v\} \in E$. Note that $\mu(\sigma_u) = 1/(4m)$ and $(u_0~v_0)$ is used in at most $2k^2$ representations in $\tilde{H}_1$ and in at most $C$ representations in $\tilde{H}_2$. Hence,
\begin{align*}
  B(x) &= \frac{1}{\mu(x)} \sum_{a \in S_N} \tilde{\mu(a)} R(x,a) |a| = \frac{8m}{(nk)^2} \left[ \sum_{a \in \tilde{H}_1} R(x,a) |a| + \sum_{a \in \tilde{H}_1} R(x,a) |a| \right]. 
\end{align*}
The sum over $\tilde{H}_1$ is bounded by $O(k^4)$ and the sum over $\tilde{H}_2$ by $O(Ck^2 + CDk)$. Thus, $B(x) \in O( CDmk^2/n^2)$.
\end{itemize}
Therefore, the bottleneck ratio 
$B = \max_{x \in H} B(x)$ is bounded by $O\left( k^2 \cdot \max\left\{ CDm/ n^2, D \right\} \right)$.

\paragraph{Bounding the mixing time.}
We can now bound the mixing time of the $k$-stack interchange process using the bound on the bottleneck ratio $B$ and \lemmaref{lemma:d2-distances}. Note that we can choose $t \in \Theta(B N \log N)$ so that the following inequalities hold:
\[
\widetilde{d^2_2}\left( \left \lfloor \frac{t}{2B} \right \rfloor \right) \le 1/32 \quad \textrm{ and } \quad \exp(-t/B) \le \frac{1}{32 N^N} \le \frac{1}{32 \cdot N!}.
\]
The first inequality is obtained using \lemmaref{lemma:random-transpositions-d2}. The total variation distance is bounded by \lemmaref{lemma:d2-distances} and the fact that $d_1(t) \le d_2(t)$ yielding
\begin{align*}
  \frac{1}{2}d_1(t) \le \frac{1}{2}d_2(t) &\le \frac{1}{2} \sqrt{ N! \cdot \exp\left({-\frac{t}{B}} \right) + \tilde{d}_2^2\left( \left \lfloor \frac{t}{2B} \right\rfloor \right) } \le \frac{1}{8}.
\end{align*}
The claim of \theoremref{thm:routability-stacked-ip-mixing} follows by observing that the mixing time is bounded by $O(B N \log N)$, which for constant $k$ is 
$O\left( n \log n \cdot \max\left\{ CDm/n^2, D \right\} \right)$,
as claimed.

\section{Complete analysis of decentralised phase clocks}\label{apx:process-gap}

Let $V$ be a collection of $n$ bins, which are initially empty, and let $\mu$ be a probability distribution on $V \times V$. Consider the process, where at every time step $t>0$, a pair $(u,v)$ is sampled according to $\mu$ and a ball is placed into the \emph{least} loaded of these two bins (in case of ties, place the ball into bin $u$). Let $\Delta^*(t)$ be the difference between the most and least loaded bins after step $t$.
Define $E(S)$ as the set of edges that have at least one end point in $S$, that is, 
\[
E(S) = \{ \{u,v\} \in E : \{u,v\} \cap S \neq \emptyset \}.
\]
The distribution $\mu$ on $V \times V$ is said to be $\delta$-expanding for $\delta>0$ if for all $S \subseteq V$ with $|S| \le n/2$ 
\begin{enumerate}[noitemsep]
\item $\mu(E(S)) \ge (1+\delta)|S|/{n}$, and
\item $\mu(E(S) \setminus \partial S) \le (1-\delta)|S|/n$
\end{enumerate}
hold, where $\partial S$ denotes the edge boundary of $S$. Peres et al.~\cite{peres2014graphical} showed that when the measure $\mu$ is well-behaved in the sense that it is $\delta$-expanding, then the gap is bounded by $O(\log n / \delta)$ at every step $t$, with high probability.

\begin{lemma}[\cite{peres2014graphical}]
  Let $\kappa > 0$ be a constant and $\mu$ be an $\delta$-expanding measure on $V \times V$. Then there exists a constant $c(\kappa)$ such that for any $t > 0$ the gap $\Delta^*(t)$ satisfies
  \[
  \Pr[ \Delta_\ell(t) > c(\kappa) \log n / \delta ] < t/n^\kappa.
  \]
\end{lemma}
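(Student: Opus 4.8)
This lemma is the main theorem of Peres et al.~\cite{peres2014graphical}, restated for the present sampling measure; I sketch the potential-function argument behind it. Write $\ell_i(t)$ for the load of bin $i$ after $t$ steps, $y_i(t)=\ell_i(t)-t/n$ for the \emph{centred} load (so $\sum_i y_i(t)=0$ and $\Delta^*(t)=\max_i y_i(t)-\min_i y_i(t)$), fix $\alpha=c_0\delta$ for a small absolute constant $c_0$, and consider the hyperbolic-cosine potential
\[
\Phi(t)=\Gamma(t)+\Psi(t),\qquad \Gamma(t)=\sum_{i\in V} e^{\alpha y_i(t)},\qquad \Psi(t)=\sum_{i\in V} e^{-\alpha y_i(t)}.
\]
Since $\sum_i y_i(t)=0$, convexity gives $\Gamma(t),\Psi(t)\ge n$ and $\Phi(0)=2n$; and if $\Delta^*(t)>M$, then one of $\max_i y_i(t)$, $-\min_i y_i(t)$ exceeds $M/2$, whence $\Phi(t)>e^{\alpha M/2}$. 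Taking $M=c(\kappa)\log n/\delta$ makes this threshold equal to $n^{c_0 c(\kappa)/2}$, independent of $\delta$, so the plan is: keep $\E[\Phi(t)]$ polynomially bounded for $t\le\poly(n)$, then conclude by Markov's inequality plus a union bound over the (at most $t$) relevant steps, choosing $c(\kappa)=\Theta(\kappa)$ large enough.

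The engine is a one-step drift inequality $\E[\Phi(t+1)\mid\mathcal{F}_t]\le(1-\tfrac{c_1\delta^2}{n})\Phi(t)+c_2$ for absolute constants $c_1,c_2>0$ (valid once $c_0$ is small). In particular $\E[\Delta\Phi\mid\mathcal{F}_t]\le c_2$, so $\E[\Phi(t)]\le 2n+c_2 t=\poly(n)$ on the relevant horizon; when in addition $\delta\ge 1/\poly(n)$ — as in \lemmaref{lemma:unbounded-process-gap}, where $\delta=\beta/d$ — iterating the full inequality gives the $t$-free bound $\E[\Phi(t)]=O(n/\delta^2)$. To prove the drift: a step places a ball in some bin $j$, after which $y_j$ becomes $y_j+1-\tfrac1n$ and every other centred load drops by $\tfrac1n$, so a direct computation gives
\[
\Gamma(t+1)=e^{-\alpha/n}\bigl[(e^{\alpha}-1)e^{\alpha y_j}+\Gamma(t)\bigr],\qquad
\Psi(t+1)=e^{\alpha/n}\bigl[(e^{-\alpha}-1)e^{-\alpha y_j}+\Psi(t)\bigr].
\]
The decisive input from $\delta$-expansion is a \emph{bias estimate}: the ball lands in a super-level set $\{v:\ell_v(t)>c\}$ only when \emph{both} sampled endpoints lie inside it, and in a sub-level set $\{v:\ell_v(t)<c\}$ as soon as \emph{one} endpoint does, so the two expansion conditions give, for every prefix $S$ of the load-sorted order, $\Pr[j\in S\mid\mathcal{F}_t]\le |S|/n-(\delta/n)\min\{|S|,n-|S|\}$. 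As $e^{\alpha y_{(s)}}$ is non-increasing and $e^{-\alpha y_{(s)}}$ non-decreasing in the rank $s$, Abel summation against this bound yields
\[
\E\bigl[e^{\alpha y_j}\mid\mathcal{F}_t\bigr]\le\frac{\Gamma(t)}{n}-\frac{\delta}{n}Q_\Gamma(t),\qquad
\E\bigl[e^{-\alpha y_j}\mid\mathcal{F}_t\bigr]\ge\frac{\Psi(t)}{n}+\frac{\delta}{n}Q_\Psi(t),
\]
where $Q_\Gamma(t)=\sum_s\min\{s,n-s\}\,(e^{\alpha y_{(s)}}-e^{\alpha y_{(s+1)}})\ge 0$ and $Q_\Psi(t)$ is its $\Psi$-analogue. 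Plugging these into the identities above and expanding $e^{\pm\alpha/n}=1\pm\tfrac{\alpha}{n}+O(\tfrac{\alpha^2}{n^2})$, $e^{\pm\alpha}-1=\pm\alpha+O(\alpha^2)$, the leading $\pm\tfrac{\alpha}{n}\Gamma(t)$ and $\pm\tfrac{\alpha}{n}\Psi(t)$ terms cancel, leaving
\[
\E[\Delta\Gamma\mid\mathcal{F}_t]\le\frac{O(\alpha^2)}{n}\Gamma(t)-\frac{\alpha\delta}{n}Q_\Gamma(t),\qquad
\E[\Delta\Psi\mid\mathcal{F}_t]\le\frac{O(\alpha^2)}{n}\Psi(t)-\frac{\alpha\delta}{n}Q_\Psi(t).
\]

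The remaining — and most delicate — step is to show that the discrepancies control the potential up to an $O(n)$ slack, $Q_\Gamma(t)+Q_\Psi(t)\ge\tfrac12\Phi(t)-O(n)$: when $\Phi(t)$ is large it is carried by a few bins whose centred load is far from the median, and each such bin sits at the heavy end of either the $\Gamma$-ordering (if over-loaded) or the $\Psi$-ordering (if under-loaded) and hence contributes in full to the matching discrepancy, while the bins of moderate load contribute only $O(n)$ to $\Phi$. Granting this, adding the two inequalities and choosing $c_0$ small enough that $\tfrac{O(\alpha^2)}{n}\Phi(t)\le\tfrac{\alpha\delta}{4n}\Phi(t)$ gives $\E[\Delta\Phi\mid\mathcal{F}_t]\le-\tfrac{\alpha\delta}{4n}\Phi(t)+O(\alpha\delta)$, which is the asserted drift with $c_1=\Theta(c_0)$ and $c_2=O(1)$.

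The main obstacle is exactly this conversion of the $\delta$-bias into genuine multiplicative drift, i.e.\ the bound $Q_\Gamma+Q_\Psi\gtrsim\Phi-O(n)$; the case analysis splitting bins into ``moderate'' and ``extreme'' classes is the technical heart of~\cite{peres2014graphical}, and I would cite that paper for the details rather than reproduce them. Everything else — the one-step identities, the Abel-summation bias estimates, the $O(\alpha^2)$ bookkeeping that pins $\alpha=\Theta(\delta)$, solving the drift recursion, and the Markov-plus-union-bound finale — is routine.
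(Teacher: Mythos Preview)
The paper does not prove this lemma; it is quoted from \cite{peres2014graphical} with attribution and used as a black box, so there is no in-paper argument to compare against. Your sketch of the hyperbolic-cosine potential argument --- the choice $\alpha=\Theta(\delta)$, the one-step drift inequality, the bias estimate coming from the two $\delta$-expansion conditions via Abel summation, and the deferral of the delicate $Q_\Gamma+Q_\Psi\gtrsim\Phi-O(n)$ step back to \cite{peres2014graphical} --- is a faithful high-level account of that reference's method and goes strictly beyond what the present paper provides.

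One minor remark: the $t$ factor in the bound $t/n^\kappa$ arises most naturally from the additive drift $\E[\Phi(t)]\le 2n+c_2 t$ that you already state (followed by a single application of Markov at time $t$), rather than from a ``union bound over the (at most $t$) relevant steps'' as you phrase it; the lemma concerns the gap at one fixed time, not a supremum over $s\le t$. This is cosmetic and does not affect the correctness of your outline.
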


The following observation establishes that the uniform distribution on edges of a regular, connected graph is always $\delta$-expanding for some $\delta > 0$. This in turn implies \lemmaref{lemma:unbounded-process-gap}.

\begin{lemma}
  Suppose $G$ is $d$-regular with edge expansion $\beta > 0$. The uniform distribution $\xi$ on the edges of $G$ is $(\beta/d)$-expanding.
\end{lemma}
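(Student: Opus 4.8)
The plan is a short double-counting argument. Fix $S \subseteq V$ with $s := |S| \le n/2$, and split the edges incident to $S$ into the \emph{boundary edges} $\partial S$ (exactly one endpoint in $S$) and the \emph{internal edges} $I(S) := E(S) \setminus \partial S$ (both endpoints in $S$). Writing $b = |\partial S|$ and $i = |I(S)|$, I would sum the degrees over $S$: each internal edge is counted twice and each boundary edge once, so $ds = \sum_{v \in S} \deg(v) = 2i + b$, whence $i = (ds - b)/2$ and $|E(S)| = i + b = (ds + b)/2$.

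Next I would plug in the normalisation. Since $G$ is $d$-regular it has $m = nd/2$ edges, so $\xi$ assigns mass $1/m = 2/(nd)$ to each edge, and therefore
\[
\xi(E(S)) = \frac{|E(S)|}{m} = \frac{ds+b}{nd} = \frac{s}{n} + \frac{b}{nd},
\qquad
\xi(E(S) \setminus \partial S) = \frac{i}{m} = \frac{ds - b}{nd} = \frac{s}{n} - \frac{b}{nd}.
\]
Finally I would invoke edge expansion: because $|S| \le n/2$, the definition of $\beta$ gives $b = |\partial S| \ge \beta s$. Substituting this into the two identities above yields
\[
\xi(E(S)) \ge \frac{s}{n}\Bigl(1 + \frac{\beta}{d}\Bigr)
\qquad\text{and}\qquad
\xi\bigl(E(S)\setminus\partial S\bigr) \le \frac{s}{n}\Bigl(1 - \frac{\beta}{d}\Bigr),
\]
which are exactly the two defining conditions of a $\delta$-expanding measure with $\delta = \beta/d$. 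Since $S$ was an arbitrary subset of size at most $n/2$, this shows $\xi$ is $(\beta/d)$-expanding, and combined with the gap bound of Peres et al.\ quoted just above it immediately gives \lemmaref{lemma:unbounded-process-gap}.

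There is no genuine obstacle in this argument; it is purely elementary counting. The only point requiring a little care is the bookkeeping convention for $\xi$: the load-balancing process of \sectionref{sec:clocks} samples a \emph{directed} edge, so one may equally well view $\xi$ as uniform over the $2m = nd$ directed edges. Both conventions assign the same probability $|F|/m$ to any undirected edge set $F$, so the computation above is unaffected; I would just state this equivalence once and proceed with the undirected count.
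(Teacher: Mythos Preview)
Your proof is correct and follows essentially the same approach as the paper: both arguments double-count degrees over $S$ to obtain $|E(S)| = (d|S| + |\partial S|)/2$ and $|E(S)\setminus\partial S| = (d|S| - |\partial S|)/2$, then apply $|\partial S| \ge \beta|S|$ and the normalisation $m = nd/2$. Your presentation is in fact slightly cleaner, as you state the exact identities before applying the expansion bound, whereas the paper folds the inequality into the summation step.
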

\begin{proof}
  Let $S \subseteq V$ such that $|S| \le n/2$. Note that $|\partial S| \ge \beta |S|$, as the graph has edge expansion~$\beta$. Since the graph is $d$-regular, it has $m=nd/2$ edges and $|S|d/2 \le |E(S)| \le |S|d$. Now
  \begin{align*}
    \xi(E(S)) &= \frac{1}{m} |E(S)| \ge \frac{2}{nd} \left[ \sum_{v \in S} \left(\frac{d-\beta}{2} + \beta\right) \right] 
    = \frac{|S|}{n}\left(1 + \frac{\beta}{d}\right).
  \end{align*}
  This shows the first condition. For the second condition, observe that
  \begin{align*}
    \xi(E(S) \setminus \partial S) &= \frac{1}{m} |E(S) \setminus \partial(S)| \le \frac{2}{nd} \left[ \sum_{v \in S} \left(\frac{d-\beta}{2} \right) \right] = \frac{|S|}{n}\left(1 - \frac{\beta}{d}\right). \qedhere
  \end{align*}
\end{proof}

\subsection{Proof of Theorem~\ref{thm:clocks}}
\label{app:proof-theorem-clocks}

For convenience, we first restate the result. 

\clockthm* 
\begin{proof}
  Fix $\kappa, \gamma$ and $\phi$. We devise the clock protocol for $\ppmodel(G)$, where each node holds a counter value $c(v,t) \in [\phi]$. Each node initialises its clock value to $c(v,0) = 0$.
  Define
\[
  M_\phi(x,y) = \begin{cases}
    \max(x,y) & \textrm{if } |x-y| < \phi/2, \\
    \min(x,y) & \textrm{otherwise}.
  \end{cases}
  \]
  The clock protocol is now defined by the following update rule. When nodes $\{ u_0, u_1 \} \in E$ interact, where $u_0$ is the initiator and $u_1$ is the responder, they perform the following:
\begin{itemize}[noitemsep]
\item If $c(u_0,t) = c(u_1,t)$, then the initiator $u_0$ increments its clock value by one modulo $\phi$.
\item Otherwise, the node $u_i$ for which $M_\phi(c(u_0,t), c(u_1,t)) \neq c(u_i,t)$ holds increments its clock by one modulo $\phi$.
\end{itemize}
We argue that this protocol implements a $(\phi,\gamma,\kappa)$-clock, i.e., properties (1) and (2) are satisfied. Note that the above rules guarantee that in either case exactly one of the nodes increments its clock value by one modulo $\phi$. This implies property (2).

We show property (1) using a coupling argument.
Let $(X_t,Y_t)_{t \ge 0}$ be the following coupling of 
the unbounded balls-into-bins $(X_t)_{t \ge 0}$ and the bounded phase clock $(Y_t)_{t \ge 0}$ process.
For each directed edge $e_t = (u_0,u_1)$ sampled at step $t$, both of the processes are updated by applying their respective update rule to the pair $(u_0,u_1)$.

We show the following claim: Let $t \ge 0$. If $\Delta(t')=\Delta^*(t')$ and $\Delta^*(t') < \gamma$ hold for all $t' < t$, then in both processes the same bin is incremented in each step $t' \le t$ and $\Delta(t)=\Delta^*(t)$.
The base case $t=0$ is vacuous.
Suppose the claim holds for some $t > 0$ and $\Delta^*(t') < \gamma$ holds for all $t' < t$.
Let $(u_0,u_1) = e_{t+1}$  be the edge sampled at step $t+1$.
Observe that $c(u_i,t) = \ell(u_i,t) \bmod \phi$ for $i \in \{0,1\}$.
If $\ell(u_0,t)=\ell(u_1,t)$, then both the unbounded and unbounded process increments the bin of node $u_i$.
Otherwise, let $i \in \{0,1\}$ such that $\ell(u_i,t) > \ell(u_{1-i},t)$. The unbounded process increments the bin of node $u_i$. Now consider the bounded process. By assumption, we have that $\Delta(t)=\Delta^*(t) < \gamma \le \phi/2$ and so we have the following two cases:

\begin{itemize}
  \item If $c(u_i,t) < c(u_{1-i},t)$, then we have $|c(u_i,t)-c(u_{1-i},t)| > \phi/2$ and so the bounded process increments the bin $u_i$ by the update rule of the clock protocol.
\item In the second case, $c(u_i,t) > c(u_{1-i},t)$ we get that $|c(u_i,t)-c(u_{1-i},t)| < \phi/2$, and again, the bin $u_i$ is incremented by the update rule of the clock protocol.
\end{itemize}
Hence, in either case we have that $\Delta(t+1)=\Delta^*(t+1)$.
It now follows from \lemmaref{lemma:unbounded-process-gap} that property (2) is satisfied by the clock protocol.
\end{proof}

\subsection{Phase clocks for non-regular graphs}
\label{app:non-regular}

Finally, we observe that the phase clock construction is not restricted to only regular graphs. The uniform distribution on the edges of $G$ is $\delta$-expanding whenever (1) the minimum and maximum degree do not deviate too much from the average degree $\alpha = 2m/n$ and (2)~the expansion is sufficiently large compared to the average and minimum degree.

\begin{lemma}
  Let $G$ be a graph with minimum degree $d$, maximum degree $\Delta$, and average degree $\alpha$.  If $G$ satisfies
  \begin{enumerate}[label=(\alph*)]
  \item $\beta + d > \alpha$ and 
    \item $d + \Delta \le 2 \alpha$, 
  \end{enumerate}
  then the uniform distribution $\xi$ on the edges of $G$ is $\delta$-expanding for $\delta = (\beta+d)/\alpha - 1 > 0$.
\end{lemma}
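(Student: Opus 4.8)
The plan is to mirror the regular-graph computation from the proof of \lemmaref{lemma:unbounded-process-gap}, replacing the exact identities available there with the inequalities afforded by the degree bounds $d \le \deg(v) \le \Delta$ and the expansion bound $|\partial S| \ge \beta|S|$ (valid for every $S$ with $|S| \le n/2$ by definition of $\beta$). Fix such an $S$, and write $e_\partial = |\partial S|$ for the number of edges with exactly one endpoint in $S$ and $e_0 = |E(S) \setminus \partial S|$ for the number of edges with both endpoints in $S$. Double-counting the degrees of vertices in $S$ yields the two identities
\[
\sum_{v \in S} \deg(v) = 2 e_0 + e_\partial \qquad \text{and} \qquad |E(S)| = e_0 + e_\partial ,
\]
so that $|E(S)| = \tfrac12\bigl(\sum_{v \in S}\deg(v) + e_\partial\bigr)$ and $e_0 = \tfrac12\bigl(\sum_{v \in S}\deg(v) - e_\partial\bigr)$. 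Since $\xi(A) = |A|/m$ and $m = n\alpha/2$, verifying the two conditions of $\delta$-expansion reduces to bounding these two expressions.

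For the first condition, I would lower-bound $\sum_{v\in S}\deg(v) \ge d|S|$ and $e_\partial \ge \beta|S|$ to get $|E(S)| \ge \tfrac12(d+\beta)|S|$, hence
\[
\xi(E(S)) = \frac{|E(S)|}{m} \ge \frac{(d+\beta)|S|}{n\alpha} = (1+\delta)\frac{|S|}{n},
\]
with $\delta = (\beta+d)/\alpha - 1$; assumption (a) is exactly what makes $\delta > 0$. For the second condition I would instead upper-bound $\sum_{v\in S}\deg(v) \le \Delta|S|$ while keeping $e_\partial \ge \beta|S|$, giving $e_0 \le \tfrac12(\Delta-\beta)|S|$ and therefore $\xi(E(S)\setminus\partial S) \le \tfrac{\Delta-\beta}{\alpha}\cdot\tfrac{|S|}{n}$. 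Comparing with the target bound, which is $(1-\delta)\tfrac{|S|}{n}$ where $1-\delta = 2 - (d+\beta)/\alpha = (2\alpha - d - \beta)/\alpha$, the required inequality $\Delta - \beta \le 2\alpha - d - \beta$ is precisely assumption (b), namely $d + \Delta \le 2\alpha$. This establishes both conditions and hence the lemma; together with the two preceding lemmas of this subsection it shows that the phase-clock construction behind \theoremref{thm:clocks} carries over to any graph satisfying (a) and (b).

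Since everything here is elementary double counting, there is no genuine obstacle; the only thing requiring care is bookkeeping the directions of the inequalities — lower bounds on both $\sum_{v\in S}\deg(v)$ and $e_\partial$ for the first condition, but an upper bound on $\sum_{v\in S}\deg(v)$ together with the same lower bound on $e_\partial$ for the second — and then checking that after clearing the common factor $m = n\alpha/2$ the two resulting algebraic conditions collapse exactly onto the hypotheses (a) and (b).
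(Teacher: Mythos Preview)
Your proposal is correct and follows essentially the same route as the paper: the paper also double-counts edges incident to $S$ (in terms of $\indeg(v,S)$ and $\outdeg(v,S)$ rather than your global $e_0$, $e_\partial$), obtains the same bounds $|E(S)| \ge \tfrac12(d+\beta)|S|$ and $|E(S)\setminus\partial S| \le \tfrac12(\Delta-\beta)|S|$, and then reduces the second condition to assumption~(b) exactly as you do. The only difference is notational.
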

\begin{proof}
  Let $S \subseteq V$ such that $|S| \le n/2$. Note that $|\partial S| \ge \beta |S|$, as the graph has edge expansion~$\beta$. Let $d \le \deg(v) \le \Delta$ denote the degree of node $v$ in $G$, $\outdeg(v,S)$ be the number neighbours $v$ has outside the set $S$, $\indeg(v,S)$ denote the number of neighbours $v$ has in the set $S$. First, observe that
  \begin{align*}
    \xi(E(S)) &= \frac{1}{m} |E(S)| \\
    &= \frac{1}{m} \left[ \sum_{v \in S} \left( \outdeg(v,S) + \frac{\indeg(v,S)}{2} \right) \right] \\ 
    &=  \frac{1}{m} \left[ \sum_{v \in S} \left( \outdeg(v,S) + \frac{\deg(v) - \outdeg(v,S)}{2} \right) \right] \\ 
    &= \frac{1}{2m} \left[ \sum_{v \in S} \left( \outdeg(v,S) + \deg(v) \right) \right] 
    \ge |S| \cdot \left( \frac{\beta + d}{2m} \right) \\
    &= \frac{|S|}{n} \cdot \left( \frac{\beta + d}{\alpha} \right) = (1+\delta) \cdot \frac{|S|}{n}.
  \end{align*}
  Thus, we have satisfied Condition (1) of a $\delta$-expanding measure. For the second condition, we note that
  \begin{align*}
    \xi(E(S) \setminus \partial S) &= \frac{1}{m} |E(S) \setminus \partial(S)| = \frac{1}{2m} \left[ \sum_{v \in S} \indeg(v, S) \right] \le \frac{1}{2m}\left[ \sum_{v \in S} \left( \deg(v) - \outdeg(v,S) \right) \right] \\
    &\le |S| \left(\frac{ \Delta - \beta}{2m}\right) = \frac{|S|}{n} \cdot  \left(\frac{ \Delta - \beta}{\alpha}\right) \le (1- \delta) \frac{|S|}{n}. \qedhere
  \end{align*}
\end{proof}

Note that one can also apply the construction on non-uniform probability distributions over $E$ (i.e.\ weighted graphs) as long as the distribution is $\delta$-expanding for some $\delta > 0$.

\paragraph{An example graph.}
For a simple non-regular graph that satisfies the above conditions, take a bipartite complete graph $K_{2r,2r}$ on $4r$ nodes and on one side add $r^2$ edges to form a complete bipartite subgraph on $r$ nodes. Now $n=4r$ and $m=4r^2 + r^2=5r^2$. One can check that the average degree is $\alpha = 2m/n = 5r/2$, minimum degree is $2r$, maximum degree is $3r$, and $\beta \ge 2r$. Thus, the uniform measure is $1/5$-expanding. 

\section{Additional details on the simulation theorem}

\subsection{Proof of an inequality}\label{apx:prod-upper-bound}
\begin{lemma}\label{lemma:prod-upper-bound}
  Let $a_i, b_i \in \mathbb{R}^+$ for $1 \le i \le t$. Then
  \[
  \left | \prod_{i=1}^t a_i - \prod_{j=1}^t b_j \right | \le \sum_{i=1}^t |a_i - b_i| \left( \prod_{k=1}^{i-1} a_k \right) \left( \prod_{h=i+1}^{t} b_h \right).
  \]
\end{lemma}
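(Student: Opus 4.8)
The plan is to prove the inequality by a telescoping decomposition of the difference of the two products into ``hybrid'' products that interpolate between $\prod_j b_j$ and $\prod_i a_i$ one factor at a time. Concretely, for $0 \le i \le t$ I would define
\[
P_i = \left( \prod_{k=1}^{i} a_k \right)\left( \prod_{h=i+1}^{t} b_h \right),
\]
so that $P_0 = \prod_{j=1}^t b_j$ and $P_t = \prod_{i=1}^t a_i$. Then the difference of interest telescopes:
\[
\prod_{i=1}^t a_i - \prod_{j=1}^t b_j = P_t - P_0 = \sum_{i=1}^t \left( P_i - P_{i-1} \right).
\]

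The next step is to observe that consecutive hybrids differ only in their $i$th factor, so
\[
P_i - P_{i-1} = \left( \prod_{k=1}^{i-1} a_k \right)\left( \prod_{h=i+1}^{t} b_h \right)\left( a_i - b_i \right).
\]
Applying the triangle inequality to the telescoped sum, and using the hypothesis that all $a_i, b_i \in \mathbb{R}^+$ so that the partial products $\prod_{k=1}^{i-1} a_k$ and $\prod_{h=i+1}^{t} b_h$ are nonnegative and may be pulled outside the absolute value untouched, yields
\[
\left| \prod_{i=1}^t a_i - \prod_{j=1}^t b_j \right| \le \sum_{i=1}^t \left( \prod_{k=1}^{i-1} a_k \right)\left( \prod_{h=i+1}^{t} b_h \right) \left| a_i - b_i \right|,
\]
which is exactly the claimed bound (with empty products interpreted as $1$ for $i=1$ and $i=t$).

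There is essentially no hard step here; the only thing to be careful about is bookkeeping the index ranges of the empty products at the two endpoints ($i=1$ gives $\prod_{k=1}^{0} a_k = 1$, $i=t$ gives $\prod_{h=t+1}^{t} b_h = 1$) and invoking positivity of the $a_i, b_i$ exactly once, to justify that the products carry no sign. As an alternative I could run a straightforward induction on $t$, splitting $\prod_{i=1}^{t} a_i - \prod_{j=1}^{t} b_j = (a_t - b_t)\prod_{i=1}^{t-1}a_i + b_t\left(\prod_{i=1}^{t-1}a_i - \prod_{j=1}^{t-1}b_j\right)$ and applying the triangle inequality together with the induction hypothesis; the telescoping argument above is just the unrolled form of this and I would prefer to present it directly.
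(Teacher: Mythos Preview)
Your proposal is correct and essentially identical to the paper's proof: the paper defines the same hybrids $c_i = \left(\prod_{k=1}^i a_k\right)\left(\prod_{h=i+1}^t b_h\right)$, observes $c_i - c_{i-1} = (a_i-b_i)\left(\prod_{k=1}^{i-1} a_k\right)\left(\prod_{h=i+1}^t b_h\right)$, telescopes, and applies the triangle inequality. The only cosmetic difference is notation.
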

\begin{proof}
  For all $0 \le i \le t$ define
  \[
  c_i = \left( \prod_{k=1}^i a_k \right) \left( \prod_{h=i+1}^t b_h \right) \quad \textrm{ and } \quad d_i = (a_i-b_i)  \left( \prod_{k=1}^{i-1} a_k \right) \left( \prod_{h=i+1}^t b_h \right).
  \]
  The claim follows by observing that $d_i = c_i - c_{i-1}$ holds and
  \[
  \left | \prod_{i=1}^t a_i - \prod_{j=1}^t b_j  \right | = \left | c_t - c_0 \right| = \left | \sum_{i=1}^t (c_i - c_{i-1}) \right | \le \sum_{i=1}^t |c_i - c_{i-1}| = \sum_{i=1}^t |d_i|. \qedhere
  \]
\end{proof}

\subsection{Proof of Lemma~\ref{lemma:executions}}\label{apx:executions}

\begin{proof}
  Observe that each node $v$ updates the variables $a(v), b_0(v), \ldots, b_{k-1}(v)$ only during the steps $s(v,1), \ldots, s(v,R)$ when its local clock has reached the threshold value~$\vartheta$. \lemmaref{lemma:timing} implies that with high probability every node updates these variables for the $r$th time during the interval $\{ s_{\min}(r) + 1, \ldots, s_{\max}(r) + 1\}$ for any $1 \le r \le R$. In particular, this happens before step $t_{\min}(r+1)$, when the first node becomes receptive for the $(r+1)$th time. By letting $y_{r+1}(v_i)$ be the value of $b_i(v)$ after being updated for the $r$th time and $y'_{r+1} = y_r \circ \sigma_{r+1}$, we get that the configuration $x_{r+1}$ satisfies, with high probability,
\[
x_{r+1}(v) = f( x_{r}, y'_r(v_0, \ldots, v_{k-1})).
\]
Thus, the sequence given by $x_0, \ldots, x_{R}$ is an execution induced by the schedule $\sigma_1, \ldots, \sigma_R$.
\end{proof}

\section{Details on algorithms for the token shuffling model}\label{apx:token}

\subsection{Preliminaries}

 \begin{remark}
   We make use of the following elementary facts.
 \begin{itemize}[noitemsep]
 \item The law of total expectation: For random variables $X$ and $Y$ on the same probability space,
   \[
   \E[X] = \E[ E[X \mid Y ] ].
   \]
 \item Markov's inequality: For any nonnegative random variable $X$ and real value $a > 0$, \[
   \Pr[X \ge a] \le \frac{\E[X]}{a}.
 \]
 \item The union bound: For any events $A_0, \ldots, A_n$, we have
   \[
   \Pr\left[ \bigcup^{n}_{i=0} A_i \right] \le \sum_{i=0}^n \Pr[A_i].
   \]
 \end{itemize}
 \end{remark}

 We start with the following observation about a particular quadratic recurrence.

\begin{lemma}\label{lemma:recurrence}
 For any $n > x > 0$ and $r \ge 0$, the expression $g(r) = n(x/n)^{2^r}$ is the closed form solution of the quadratic recurrence
\[
g(r) = \begin{cases}
  g^2(r-1)/n & \textrm{if } r > 0 \\
  x & \textrm{otherwise.}
\end{cases}
\]
Moreover, $g(r) \le 1/n^\lambda$ holds for all $r \ge \log n + \log \ln n + \log(\lambda+1)$.
\end{lemma}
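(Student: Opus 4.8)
The plan is to establish the two assertions in turn: first that $g(r)=n(x/n)^{2^r}$ is the solution of the recurrence, and then that this expression falls below $n^{-\lambda}$ once $r$ exceeds the stated threshold.

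For the closed form I would induct on $r$. The base case is immediate, since $n(x/n)^{2^0}=n\cdot(x/n)=x=g(0)$. For the inductive step, assuming $g(r-1)=n(x/n)^{2^{r-1}}$, a single substitution into the recurrence gives
\[
g(r)=\frac{g(r-1)^2}{n}=\frac{n^2\,(x/n)^{2^{r}}}{n}=n\,(x/n)^{2^{r}},
\]
which closes the induction.

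For the bound I would take logarithms, writing $g(r)=n\exp\bigl(-2^{r}\ln(n/x)\bigr)$, so that everything reduces to a lower bound on $\ln(n/x)$. Since $x$ is a positive integer strictly below $n$, we have $x\le n-1$, and hence
\[
\ln\frac{n}{x}\ \ge\ \ln\frac{n}{n-1}\ =\ -\ln\Bigl(1-\frac1n\Bigr)\ \ge\ \frac1n .
\]
Consequently $g(r)\le n\exp(-2^{r}/n)$. If $r\ge \log n+\log\ln n+\log(\lambda+1)$ (with base-$2$ logarithms), then $2^{r}\ge n(\lambda+1)\ln n$, and therefore $g(r)\le n\exp\bigl(-(\lambda+1)\ln n\bigr)=n\cdot n^{-(\lambda+1)}=n^{-\lambda}$, as required. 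Since $x/n<1$, the quantity $g(r)$ is nonincreasing in $r$, so it suffices to verify this at the threshold value.

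The calculation is routine; the one point that needs care is the estimate $\ln(n/x)\ge 1/n$, which is where the hypothesis is used — the stated threshold fails for real $x$ arbitrarily close to $n$, so one genuinely relies on $x\le n-1$ (equivalently, integrality of $x$), which is how the lemma is invoked in the exact-majority analysis.
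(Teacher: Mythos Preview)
Your proof is correct and follows essentially the same route as the paper: induction for the closed form, and for the bound the paper writes $g(r)\le n(1-1/n)^{2^r}$ and then applies $(1-1/n)^{ny}\le e^{-y}$, which is exactly your estimate $-\ln(1-1/n)\ge 1/n$ in a different guise. Your explicit remark that the second claim genuinely needs $x\le n-1$ is a point the paper uses silently when passing from $(x/n)^{2^r}$ to $(1-1/n)^{2^r}$.
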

\begin{proof}
  We show the identity via induction.
  The base case $r=0$ is given by 
  $g(0) = n(x/n) = x$. For the inductive step, we have
  \[
g(r+1) = \frac{g^2(r)}{n} = \frac{1}{n}\left[n \left(\frac{x}{n}\right)^{2^r} \right]^2 = n \left(\frac{x}{n}\right)^{2 \cdot 2^r} = n \left(\frac{x}{n}\right)^{2^{r+1}}.  
\]
The second claim follows from the inequality $(1-1/n)^{nx} \le e^{-x}$, since
\[
g(r) \le n\left(1 - \frac{1}{n}\right)^{2^r} \le n\left(1 - \frac{1}{n}\right)^{(\lambda +1) n \ln n } \le n e^{-(\lambda +1) \ln n} \le 1/n^\lambda. \qedhere
\]
\end{proof}

\subsection{Proof of \lemmaref{lemma:consensus}}

\begin{proof}
  Let $U_r$ denote the set of nodes that have \emph{not} set their local state variable $a(\cdot)$ to the maximum input value after $r$ rounds. Fix a constant $\lambda > 0$ and let $2^R = (\lambda+1) n \ln n$. We prove the lemma by showing that for all $r \ge R$ the probability that $U_r$ is nonempty is at most $1/n^\lambda$.

  For each $v \in V$, let $Y_r(v)$ be the indicator variable for the event that node $v$ receives at least one token with the maximum input value in round $r$. By Step (3) of the protocol, if this event occurs, then $v$ sets $a(v)$ to the maximum input value at the end of round $r$. In particular, this implies that $v \notin U_{r'}$ for all $r' \ge r$.
  Note that $\Pr[Y_{r+1}(v) = 1 \mid U_r = b] \ge 1-b/n$.
  Define the random variable $X_r = |U_r|$ for each $r > 0$. Observe that by linearity of expectation, we have
\[
\E[X_{r+1} \mid X_r = b] = b - \E\left[\sum_{v \in U_r} Y_r(v)\right] \le b - \sum_{v \in U_r} \E[Y_r(v)] \le b^2/n.
\]
By law of total expectation, we get
\[
\E[X_{r+1}] = \E\left[ \E\left[X_{r+1} \mid X_{r} \right]\right] \le g(r),
\]
where $g(r)$ is the recurrence of \lemmaref{lemma:recurrence} with $x = |U_0| < n$. By Markov's inequality and the second claim of \lemmaref{lemma:recurrence}, we get
\begin{align*}
  \Pr[ X_R \ge 1] &\le \E[U_R] \le g(R) \le 1/n^\lambda. \qedhere
\end{align*}
\end{proof}

\subsection{Proof of Theorem~\ref{thm:token-le}}

\begin{proof}
  Assume that the broadcast protocol in Step (2) succeeds on each of the first $\Theta(\log n)$ iterations of the leader election protocol. This event happens with high probability.  Let $L_i$ be the (random) set of leader candidates after the $i$th iteration and $X_i = |L_i|$ be the random variable indicating the number of leader candidates after the $i$th iteration.

  For each leader candidate $v \in L_i$, let $B_i(v)$ be a random variable indicating whether node~$v$ had 1 as its $(i+1)$th coin flip.
  Note that $\E[B_i(v)] = 1/2$ for any $v \in L_i$.  Let $p(a)$ be the probability that each of the $|L_i|=a$ leader candidates have 0 as their coin flip. If this event occurs, then no leader candidate gets removed. Observe that $p(a) \le 1/2^a$. Assuming $a > 1$ and using linearity of expectation, we get 
  \[
\E\left[X_{i+1} \mid X_i = a \right] = a \cdot p(a) + (1-p(a)) \cdot \E\left[ \sum_{v \in L_i } B_i(v) \right]  = a \left[  p(a) + \frac{1-p(a)}{2} \right] \le \frac{3a}{4}.
\]
Thus, by law of total expectation $\E[X_t] \le n \left(3/4\right)^t$ holds. For any constant $\lambda >0$ we can set $t = \lambda \log_{4/3} n$. By Markov's inequality,
\[
\Pr\left[ X_t > 1 \right] \le \E\left[X_t \right]  \le n (3/4)^t \le 1/n^\lambda.
\]
Hence, with high probability, there remains only one leader candidate after $t \in O(\log n)$ iterations. As each iteration takes $\Theta(\log n)$ rounds, the algorithm stabilises in $O(\log^2 n)$ rounds with high probability, as desired. The state complexity bound comes from the fact that in Step (2) nodes count up to $\Theta(\log n)$ rounds. The algorithm uses two token types.
\end{proof}

\subsection{Proof of \theoremref{thm:majority}}

We trace the usual steps taken in analysing cancellation-doubling dynamics~\cite{angluin2008fast-computation,alistarh2018space-optimal,elsaesser-survey}.
Let $A_i$ and $B_i$ denote the number of \emph{majority} and \emph{minority} tokens after $i$ iterations, respectively. Define the discrepancy as $\Delta_i = A_i - B_i$ with $\Delta_0 > 0$ being the initial discrepancy. In addition, we use $A'_i$, $B'_i$, and $C_i$ to denote the number of majority, minority, and empty tokens after the $i$th cancellation phase.

The algorithm maintains the invariant $A_i > B_i$ with high probability: The cancellation rule removes exactly one majority and minority token each time it is applied. The doubling phase guarantees  that $A_{i+1} = 2A'_i$ and $B_{i+1} = 2B'_i$ holds with high probability. Finally, once $A_i = N$ holds, that is, all tokens have taken the majority value, then $A_{i+1} = N$ holds. This ensures that once all tokens are of the same type, they remain so for all subsequent rounds.

\begin{lemma}\label{lemma:cancellation}
  Let $i \ge 0$. If $A_i > B_i$ holds, then
  one of the following holds with high probability:
\begin{enumerate}[noitemsep]
\item $B_i'= 0$, that is, there are no minority tokens after the $i$th cancellation phase, or
\item $C_i \ge 3N/5$, that is, there are at least $3N/5$ empty tokens after the $i$th cancellation phase.
\end{enumerate}
\end{lemma}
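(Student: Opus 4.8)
The plan is to analyze the $i$th cancellation phase in isolation, tracking only the numbers of majority and minority tokens round by round. Write $A_r$ and $B_r$ for the number of majority and minority tokens after $r$ rounds of this phase, so $A_0 = A_i$, $B_0 = B_i$, and since each application of a cancellation rule destroys one majority and one minority token (and creates two empties), the discrepancy is preserved: $A_r - B_r = A_i - B_i \ge 1$ for all $r$, using the hypothesis $A_i > B_i$. Let $n_r = A_r + B_r$ be the number of non-empty tokens; then $n_r$ is non-increasing and the number of empty tokens at the end of the phase is $C_i = N - n_t$, where $t = (\lambda+1)\log_{5/4} N$ is the phase length. Hence conclusion~(2) is exactly $n_t \le 2N/5$, and it suffices to show $\Pr[\,B_t \ge 1 \text{ and } n_t > 2N/5\,] \le N^{-\lambda}$ for arbitrary constant $\lambda$.

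The first ingredient is a per-round drift bound. In round $r+1$ the $N$ tokens are placed uniformly at random, two per node, so the induced partition into pairs is a uniformly random perfect matching; in particular the partner of any fixed minority token is uniform among the other $N-1$ tokens. A node performs a cancellation exactly when it holds one majority and one minority token, and distinct minority tokens matched with majority tokens lie in distinct nodes, so the number of cancellations in the round equals the number of minority tokens whose partner is a majority token, which has conditional expectation $A_r B_r/(N-1) \ge A_r B_r/N$ given $\mathcal{F}_r$, the $\sigma$-algebra of the first $r$ shuffles. Since $B_{r+1} = B_r - (\#\text{cancellations})$, we get $\E[B_{r+1}\mid\mathcal F_r] = B_r\bigl(1 - A_r/(N-1)\bigr)$, and in particular $\E[B_{r+1}\mid\mathcal F_r] \le \tfrac45 B_r$ on the event $\{A_r \ge N/5\}$.

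Next I would introduce the stopping time $\tau = \min\{r : A_r < N/5\}$ and split on whether $\tau \le t$. If $\tau \le t$, then $n_\tau = A_\tau + B_\tau < 2A_\tau < 2N/5$, and since $n_r$ is non-increasing, $n_t < 2N/5$; thus conclusion~(2) holds deterministically on this event, and in particular $\{n_t > 2N/5\} \subseteq \{\tau > t\}$. On the event $\{\tau > t\}$ the drift bound applies at every round $r < \tau$, so $M_r = (5/4)^{\min(r,\tau)} B_{\min(r,\tau)}$ is a non-negative supermartingale with $M_0 = B_0 \le N$; optional stopping at time $t$ gives $\E\bigl[(5/4)^t B_t\,\mathbf 1_{\{\tau>t\}}\bigr] \le \E[M_t] \le N$, hence $\E\bigl[B_t\,\mathbf 1_{\{\tau>t\}}\bigr] \le N(4/5)^t = N\cdot N^{-(\lambda+1)} = N^{-\lambda}$, using $(4/5)^t = (5/4)^{-(\lambda+1)\log_{5/4}N} = N^{-(\lambda+1)}$. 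By Markov's inequality, $\Pr[\tau > t \text{ and } B_t \ge 1] \le N^{-\lambda}$. Combining with the inclusion above, $\Pr[\text{neither (1) nor (2) holds}] = \Pr[B_t \ge 1 \text{ and } n_t > 2N/5] \le \Pr[\tau > t \text{ and } B_t \ge 1] \le N^{-\lambda}$, which is the claimed high-probability statement.

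The technical heart — and essentially the only place care is needed — is the per-round drift computation: checking that uniformly random token placement induces a uniform random matching, so the partner of a given minority token is genuinely uniform over the remaining $N-1$ tokens, and that cancellations are in bijection with minority-tokens-matched-to-majority, with no double counting. Everything else is bookkeeping: the geometric contraction factor $4/5$ is exactly what makes $t = (\lambda+1)\log_{5/4}N$ rounds suffice, and the dichotomy "either the majority count already dropped below $N/5$ (forcing few non-empty tokens, hence many empties) or it stayed above $N/5$ throughout (forcing the minority to shrink geometrically to zero)" is precisely what the two cases of the lemma express. This follows the usual template for cancellation--doubling analyses, e.g.~\cite{angluin2008fast-computation,alistarh2018space-optimal}.
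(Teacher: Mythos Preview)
Your proof is correct and follows essentially the same strategy as the paper's: split on whether the majority count ever drops below $N/5$ (in which case condition~(2) holds deterministically), and otherwise exploit the per-round $4/5$ contraction of the minority count over the $t=(\lambda+1)\log_{5/4}N$ rounds. Your supermartingale/stopping-time formulation is a bit more careful than the paper's version, which tracks a single minority token under an informal conditioning and then union-bounds, but the underlying decomposition and computation are the same.
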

\begin{proof}
  Observe that $B_i < A_i < N/5$ implies the second condition $C_i \ge 3N/5$. Thus, assume that $A_i \ge N/5$ holds. Consider the event that a fixed minority token $b$ is not cancelled during the $t$ rounds of the cancellation phase \emph{conditioned on} there being at least $N/5$ majority tokens in each round of this phase. The probability that $b$ meets a majority token is at least $1/5$. Hence, the probability that $b$ is not cancelled during any of these $t$ rounds is at most
\[
(1-1/5)^t = (4/5)^t = 1/N^{\lambda+1}.
\]
Now by taking the union bound over all $B'_i < N/2$ minority tokens yields that either (1) all minority tokens gets cancelled with probability at least $1-1/N^\lambda$ or (2) there are less than $N/5$ majority tokens remaining after the $i$th cancellation phase. This proves the lemma.
\end{proof}

\begin{lemma}\label{lemma:doubling}
If $C_i \ge 3N/5$ holds, then $\Delta_{i+1} = 2 \Delta_i$ holds with high probability.
\end{lemma}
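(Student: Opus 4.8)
The plan is to track the number of empty tokens throughout the $i$th doubling phase, show that it never drops below $N/5$, and conclude that every nonempty token finds an empty partner within the $t$ rounds. First I would record the bookkeeping at the start of the doubling phase: there are $A_i'$ full majority tokens, $B_i'$ full minority tokens, and $C_i \ge 3N/5$ empty tokens, so the number of full tokens is $A_i' + B_i' = N - C_i \le 2N/5$. The only rule active during this phase is $Z + \emptyset \to Z^{1/2} + Z^{1/2}$, and split tokens are inert (no doubling rule applies to them, nor to two empties meeting), so every full token splits \emph{at most once} and each split consumes exactly one empty token. Hence at every round of the phase the number of empty tokens is at least $C_i - (A_i' + B_i') \ge N/5$.

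Next I would bound, for a fixed full token $z$, the probability that it never splits during the $t$ rounds. In each round of the synchronous $2$-token shuffling model the $N = 2n$ tokens are partitioned into pairs (two per node) by a uniformly random matching, induced by the uniform shuffle. Conditioned on the configuration at the start of a round in which $z$ is still full, the partner of $z$ is uniform over the other $N-1$ tokens, of which at least $N/5$ are empty, so $z$ splits in that round with probability at least $(N/5)/(N-1) \ge 1/5$. Multiplying the conditional survival probabilities over the $t$ rounds, $z$ fails to split during the whole phase with probability at most $(4/5)^t = (5/4)^{-t} = 1/N^{\lambda+1}$, using $t = (\lambda+1)\log_{5/4} N$. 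A union bound over the at most $A_i' + B_i' \le N$ full tokens then shows that with probability at least $1 - 1/N^\lambda$ every full token splits exactly once.

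Finally, on this good event the outcome is exactly as intended: each of the $A_i'$ original majority tokens has become a split majority token and has converted one distinct empty token into a split majority token, giving $2A_i'$ split majority tokens, and symmetrically $2B_i'$ split minority tokens; after the promotion rule $Z^{1/2}\to Z$ we get $A_{i+1} = 2A_i'$ and $B_{i+1} = 2B_i'$. Since the cancellation phase removes one majority and one minority token per application, $A_i' - B_i' = A_i - B_i = \Delta_i$, so $\Delta_{i+1} = A_{i+1} - B_{i+1} = 2(A_i' - B_i') = 2\Delta_i$ with high probability, as claimed. I do not expect a genuine obstacle here; the two points that need care are maintaining the invariant that the empty-token count stays $\Omega(N)$ (so the per-round splitting probability is bounded below by a constant) and the observation that each full token splits \emph{exactly} once, which is what makes the post-promotion counts precisely doubled rather than merely non-decreasing.
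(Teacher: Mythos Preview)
Your proposal is correct and follows essentially the same approach as the paper: both argue that at most $2N/5$ full tokens can split (each consuming one empty token), so at least $N/5$ empty tokens remain throughout, giving each still-full token a per-round splitting probability of at least $1/5$, then apply $(4/5)^t = 1/N^{\lambda+1}$ and a union bound. Your write-up is in fact somewhat more careful than the paper's in spelling out the invariant and the ``exactly once'' observation that yields $A_{i+1}=2A_i'$ and $B_{i+1}=2B_i'$.
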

\begin{proof}
  We say that a token of type $Z \in \{0,1\}$ \emph{splits} if it activates the rule $Z + \emptyset \to Z^{1/2} + Z^{1/2}$. Observe that after a nonempty token of type $Z$ splits during the $i$th doubling phase, it becomes a token of type $Z^{1/2}$ and cannot split again before the $(i+1)$th doubling phase.

  Recall that by assumption, there are $C_i \ge 3N/5$ empty tokens. Hence, the probability that a nonempty token splits in a single round of the doubling phase is at least $1/5$, since at most $2N/5$ nonempty tokens can split (each removing an empty token from the system). Therefore, the probability that a nonempty token does not split is at most
  \[
(1-1/5)^t = (4/5)^t = 1/N^{\lambda+1}.
  \]
  By union bound, the probability that some nonempty token does not split is at most $1/N^\lambda$. Now
  \[
\Delta_{i+1} = 2(A'_i - B_i') = 2(A_i - B_i) = 2\Delta_i. \qedhere
  \]
\end{proof}

\paragraph{Bounding the number of iterations.} Define the following two random variables
\[
 K_0 = \min \{ i : B_i = 0 \} \qquad \textrm{ and } \qquad K_1 = \min \{ i : A_i = N \}.
\]
Here $K_0$ indicates the iteration after which no tokens taking the minority value are present anymore. The variable $K_1$ is the first iteration when all tokens have been converted to the majority value. Note that $K_1 \ge K_0$ since even though no minority tokens are remaining, some empty tokens may remain after doubling phases.

\begin{lemma}\label{lemma:k0-bound}
  The random variable $K_0$ satisfies $K_0 \le \lceil \log N \rceil + 1$ with high probability.
\end{lemma}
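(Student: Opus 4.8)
The plan is to run a potential argument on the discrepancy $\Delta_i = A_i - B_i$: it doubles at every iteration that has not already eliminated the minority, while it can never exceed the total number of tokens $N = 2n$ in the system. Combining these two facts bounds the number of iterations before the minority disappears.

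First I would record the two ingredients needed at each iteration $i$. By the invariant $A_i > B_i$ (which holds for all $i$ with high probability), Lemma~\ref{lemma:cancellation} applies at iteration $i$: with high probability either $B'_i = 0$, or $C_i \ge 3N/5$; and in the latter case Lemma~\ref{lemma:doubling} gives $\Delta_{i+1} = 2\Delta_i$ with high probability. Setting $R = \lceil \log N \rceil$, I would take a union bound over the $O(\log N)$ invocations of these two lemmas across iterations $0, 1, \dots, R$, together with the invariant; since each invocation fails with polynomially small probability (thanks to the choice $t = (\lambda+1)\log_{5/4} N$), all of them hold simultaneously with high probability, and I condition on this event.

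Next, assume for contradiction that $K_0 > R+1$, i.e.\ $B_i > 0$ for every $i \le R+1$. Since the doubling phase produces $B_{i+1} = 2B'_i$, this forces $B'_i > 0$ for every $i = 0, \dots, R$, so none of these iterations falls into case~(1) of Lemma~\ref{lemma:cancellation}; hence $C_i \ge 3N/5$ and therefore $\Delta_{i+1} = 2\Delta_i$ for each $i = 0, \dots, R$. Unwinding the recursion gives $\Delta_{R+1} = 2^{R+1}\Delta_0 \ge 2^{R+1} \ge 2N$, using $\Delta_0 \ge 1$ (the input is not a tie) and $R+1 \ge \log N + 1$. This contradicts $\Delta_{R+1} = A_{R+1} - B_{R+1} \le N$, since the population holds exactly $N = 2n$ tokens at all times. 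Hence $K_0 \le R+1 = \lceil \log N \rceil + 1$ with high probability.

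I do not expect a serious obstacle: the argument is essentially bookkeeping on top of Lemma~\ref{lemma:cancellation} and Lemma~\ref{lemma:doubling}. The steps requiring the most care are (i) propagating $B_{i+1} > 0$ back to $B'_i > 0$ so that case~(2) of Lemma~\ref{lemma:cancellation} is genuinely forced at each iteration, and (ii) verifying that a union bound over $O(\log N)$ high-probability events still leaves a high-probability event, which is immediate from the polynomially small per-invocation failure probability.
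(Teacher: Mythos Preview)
Your proposal is correct and follows essentially the same approach as the paper: assume the minority survives for $\lceil \log N\rceil + 1$ iterations, invoke Lemma~\ref{lemma:cancellation} and Lemma~\ref{lemma:doubling} at each iteration to force the discrepancy to double, and derive a contradiction with the trivial bound $\Delta_i \le N$. If anything, you are more explicit than the paper about why $B_{i+1} > 0$ rules out case~(1) of Lemma~\ref{lemma:cancellation} (via $B_{i+1} = 2B'_i$) and about the union bound over the $O(\log N)$ high-probability events.
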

\begin{proof}
  Let $K = \lceil \log N \rceil + 1$. 
  Suppose that $B_i > 0$ holds for all $0 \le i \le K$. By \lemmaref{lemma:cancellation}, we have that $C_i \ge 3N/5$ holds with high probability. By \lemmaref{lemma:doubling}, we get that $\Delta_{i+1} = 2 \Delta_i = 2^i$ with high probability. This implies that $\Delta_K = N$, contradicting the assumption that $B_K > 0$.
\end{proof}

\begin{lemma}\label{lemma:majority-k1}
The random variable $K_1$ satisfies $K_1 \in O(\log N)$ with high probability.
\end{lemma}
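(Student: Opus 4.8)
The plan is to bootstrap off \lemmaref{lemma:k0-bound}, which already gives $K_0 \le \lceil \log N \rceil + 1$ with high probability, and then track the number of \emph{empty} tokens through the remaining iterations. Condition on the high-probability events that $K_0 \le \lceil \log N\rceil + 1$ and that the invariant $A_i > B_i$ holds throughout. For $i \ge K_0$ there are no minority tokens, so every token is either a full majority token or an empty token; write $a_i = A_i$ (so $a_i = \Delta_i$, since $B_i = 0$) and $c_i = N - a_i = C_i$. Since $\Delta$ is non-decreasing we have $a_{K_0} \ge \Delta_0 \ge 1$, and the goal is to reach $c_i = 0$ (equivalently $A_i = N$) within $O(\log N)$ iterations beyond $K_0$. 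I would do this by following $a_i$ through three regimes.

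\emph{Regimes A and B.} While $a_i \le 2N/5$ (equivalently $C_i \ge 3N/5$), \lemmaref{lemma:doubling} applies with $\Delta_i = a_i$ and gives $a_{i+1} = 2a_i$ w.h.p.; since $a_{K_0}\ge 1$, after at most $\lceil\log N\rceil$ such iterations we reach the first index $i_A$ with $2N/5 < a_{i_A} \le 4N/5$. If $a_{i_A}\in(2N/5,N/2)$, one further ``bootstrapping'' iteration pushes us past $N/2$: letting $G'$ be the event that fewer than $a_{i_A}/4$ splits occur in the doubling phase, on $G'$ at the start of every round at least $3a_{i_A}/4 \ge 3N/10$ majority tokens are still unsplit and at least $c_{i_A}-a_{i_A}/4 \ge 3N/8$ empty tokens remain, so -- conditioning on the history up to that round -- any still-unsplit majority token reacts in that round with probability $>3/8$; hence a fixed majority token escapes reaction throughout the phase while $G'$ holds with probability at most $(5/8)^t \le (4/5)^t = 1/N^{\lambda+1}$, and since $G'$ forces at least one majority token to remain unsplit, a union bound over the $\le N$ majority tokens yields $\Pr[G']\le 1/N^\lambda$; thus w.h.p.\ $a_{i_A+1}\ge \tfrac54 a_{i_A}>N/2$. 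Let $i_B$ be $i_A$ or $i_A+1$ accordingly, so that $a_{i_B}>N/2$ (i.e.\ $c_{i_B}<N/2$), a property that persists for all $i\ge i_B$ since $c$ never increases.

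\emph{Regime C and conclusion.} For $i\ge i_B$ with $c_i\ge 1$ I claim $c_{i+1}\le\lfloor c_i/2\rfloor$ w.h.p. Let $G$ be the event that fewer than $c_i/2$ splits occur in the doubling phase; on $G$, at the start of every round at least $a_i-c_i/2 > N/2-N/4 = N/4$ majority tokens are still unsplit, so -- conditioning on the history -- each still-empty token reacts in that round with probability $>1/4$, and therefore a fixed empty token escapes reaction throughout the phase while $G$ holds with probability at most $(3/4)^t \le 1/N^{\lambda+1}$. Since $G$ forces at least one empty token to remain, a union bound over the $\le N$ empty tokens gives $\Pr[G]\le 1/N^\lambda$, so w.h.p.\ at least $c_i/2$ empties are consumed and $c_{i+1}\le\lfloor c_i/2\rfloor$. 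Iterating from $c_{i_B}\le N$, after at most $\lceil\log N\rceil$ further iterations $c_i = 0$. Altogether $K_1 \le K_0 + \lceil\log N\rceil + 1 + \lceil\log N\rceil = O(\log N)$, and all $O(\log N)$ invoked events hold simultaneously with high probability by a union bound (each fails with probability $O(1/N^\lambda)$, and $\lambda$ may be taken large enough).

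The main obstacle is getting the probabilistic arguments of Regimes B and C right: one must not simply condition on the global ``few splits occurred'' event and then estimate per-round reaction probabilities, since that conditioning biases toward exactly the configurations in which reactions are rare. The correct formulation, sketched above, bounds for each fixed token the probability that it escapes reaction during precisely those rounds in which ``few splits so far'' still holds -- using only conditional probabilities given the past -- and then exploits the fact that the ``few splits'' event itself guarantees the existence of such an undesired token, so that a union bound over tokens closes the argument. A secondary, milder subtlety is the boundary regime $a_i\approx N/2$, which is why Regime B is isolated as a single step that lifts $a_i$ safely above $N/2$ before Regime C takes over; integrality of $c_i$ is harmless, since $c_{i+1}\le\lfloor c_i/2\rfloor$ still drives $c_i$ down to $0$ (in particular $c_i=1$ forces $c_{i+1}=0$).
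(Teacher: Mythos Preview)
Your argument is correct and takes a genuinely different route from the paper's. The paper gives a short first-moment computation: it observes that in the very first round of each doubling phase an empty token survives only if it is matched to another empty token, so $\E[C_{i+1}\mid C_i=c]\le c^2/N$; it then appeals to the quadratic recurrence of \lemmaref{lemma:recurrence} with $x=C_{K_0}$ and finishes via Markov's inequality. You instead establish high-probability per-iteration statements in three regimes---exponential growth of $a_i$ via \lemmaref{lemma:doubling}, a single bridging step to push $a_i$ past $N/2$, and then geometric halving of $c_i$---and union-bound over the $O(\log N)$ iterations. What your approach buys is robustness: the paper's passage from $\E[C_{i+1}\mid C_i]\le C_i^2/N$ to a recurrence of the form $\E[C_{i+1}]\le(\E[C_i])^2/N$ is the Jensen direction that does not hold in general, so some extra care is implicitly being assumed there; your argument sidesteps any such second-moment issue entirely. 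What the paper's approach buys, once that step is granted, is brevity---it uses only one round per iteration and no regime splitting. Your ``condition only on the past, then union-bound over tokens'' device in Regimes~B and~C is exactly the right way to avoid the conditioning bias you flag, and isolating the boundary case $a_i\in(2N/5,N/2)$ as a single step is indeed needed for the stated value of $t$ to make the $(3/4)^t$ bound in Regime~C go through.
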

\begin{proof}
  Note that $K_0 = K_1$ implies that the claim holds by \lemmaref{lemma:k0-bound}.
  Hence, assume that $K_1 > K_0$ holds and $K_0$ is fixed. For $K_0 \le i < K_1$, let $U_i$ be the set of empty tokens at the end of iteration $i$. Let $Y_i(u)$ be an indicator variable for the event that $u \in U_i$ gets cancelled during the \emph{first} 
  round of iteration $i+1$. Since $\Pr[Y_i(u) = 1 \mid C_i = c] = 1 - c/N$ holds, the expected number of empty tokens after iteration $i+1$ satisfies
  \begin{align*}
    \E[C_{i+1} \mid C_i = c] &\le c - \E\left[\sum_{u \in U_i} Y_i(u) \right] = c - \sum_{u \in U} \E[Y_i(u)] = c - c(1-c/N) = c^2/N.
  \end{align*}
  By law of total expectation,
  \[
\E[C_{i+1}] = \E[C_{i+1} \mid C_{K_0}] = g(i+1-K_0),
  \]
  where $g$ is the recurrence from \lemmaref{lemma:recurrence} with $x = C_{K_0}$ and $n=N$. For $t = \log N + \log (\lambda+1) + \log \ln N$, the second claim of \lemmaref{lemma:recurrence} yields
  \[
  \Pr\left[C_{t+K_0} \ge 1\right] \le \E[C_{t+K_0}] \le g(t) \le 1/N^\lambda \le 1/n^\lambda. \qedhere
  \]
\end{proof}

\MAJalg*
\begin{proof}
  By \lemmaref{lemma:majority-k1}, the system reaches a configuration, where all tokens take the majority value within $O(\log N)$ iterations with high probability. As each iteration takes $O(\log N) = O(\log n)$ rounds, the algorithm stabilises in $O(\log^2 n)$ rounds with high probability.
  The state complexity is $t \in O(\log n)$, as the nodes count up to $t$ in each iteration and there are 5 token types.
\end{proof}

\end{document}